\documentclass[a4paper,10pt,DIV=10]{scrartcl}

\usepackage[style=english]{csquotes}
\usepackage{booktabs}
\usepackage{float}
\usepackage{mathtools}
\usepackage{wasysym}
\usepackage{relsize}
\usepackage{bm}
\usepackage{xcolor}
\usepackage{graphicx}
\usepackage{stackengine}
\usepackage{tikz}
\usetikzlibrary{patterns}

\usepackage{microtype} 
\usepackage{amsmath,amssymb,amsthm}
\definecolor{myblue}{HTML}{3D6288}
\definecolor{myred}{HTML}{9F1D2B}
\definecolor{mygreen}{HTML}{5B892D}
\usepackage{hyperref}
\hypersetup{%
	breaklinks=true,%
	pdfencoding=auto,%
	colorlinks,%
	linkcolor=myred,%
	citecolor=mygreen,%
	urlcolor=myblue,%
}
\urlstyle{rm}
\usepackage[capitalise]{cleveref}

\bibliographystyle{plainurl}

\makeatletter
\renewcommand{\subparagraph}{%
	\@startsection {subparagraph}{5}{\z@ }{3.25ex \@plus 1ex
	\@minus .2ex}{-1em}{\normalfont \normalsize \bfseries\sffamily }}%
\makeatother

\usepackage{authblk}
\title{Infinite Probabilistic Databases}

\author[1]{Martin Grohe}
\affil[1]{\bgroup\large\url{grohe@informatik.rwth-aachen.de}\egroup}

\author[2]{Peter Lindner}
\affil[2]{\bgroup\large\url{lindner@informatik.rwth-aachen.de}\egroup}

\affil[\bgroup\empty\egroup]{\bgroup\large RWTH Aachen University\egroup}

\DeclareSymbolFont{sfoperators}{OT1}{cmss}{m}{n}
\SetSymbolFont{sfoperators}{bold}{OT1}{cmss}{b}{n}
\makeatletter
\renewcommand{\operator@font}{\mathgroup\symsfoperators}
\makeatother

\let\cal\mathcal
\let\bb\mathbb

\let\fr\mathfrak

\let\sf\mathsf 
\let\epsilon\varepsilon 
\let\phi\varphi 
\let\rho\varrho

\newcommand*{\N}{\bb N}
\newcommand*{\R}{\bb R}
\newcommand*{\Q}{\bb Q}

\DeclarePairedDelimiter{\length}{\lvert}{\rvert}
\DeclarePairedDelimiter{\card}{\lvert}{\rvert}
\DeclarePairedDelimiter{\size}{\lvert}{\rvert}
\renewcommand*{\complement}[1]{{#1}^{\mathsf{c}}}
\newcommand*{\hits}{\mathord{\#}}

\newcommand*{\Iff}{\quad\Leftrightarrow\quad}

\renewcommand*{\quad}[1][1]{\kern#1em}

\DeclarePairedDelimiter{\set}{\{}{\}}
\def\Sets#1#2{%
	\left\lparen%
	\mathchoice{\mkern-2mu}{\mkern-2mu}{\mkern-3mu}{\mkern-2.5mu}%
	\begin{smallmatrix}#1\\#2\end{smallmatrix}%
	\mathchoice{\mkern-2mu}{\mkern-2mu}{\mkern-3mu}{\mkern-2.5mu}%
	\right\rparen%
}
\DeclarePairedDelimiterX{\dparens}[1]{\lparen}{\rparen}{%
	\nparen{\lparen}{#1}\delimsize\lparen\mathopen{}%
	#1%
	\mathclose{}\delimsize\rparen\nparen{\rparen}{#1}%
}
\DeclarePairedDelimiterX{\dbraces}[1]{\lbrace}{\rbrace}{%
	\nbrace{\lbrace}{#1}\delimsize\lbrace\mathopen{}%
	#1%
	\mathclose{}\delimsize\rbrace\nbrace{\rbrace}{#1}%
}
\newcommand{\dummydelim}[2]{$\left#1\vphantom{#2}\right.$}
\newcommand{\nparen}[2]{\sbox0{\dummydelim{#1}{#2}}\hspace{-0.5\wd0}}
\newcommand{\nbrace}[2]{\sbox0{\dummydelim{#1}{#2}}\hspace{\the\dimexpr -0.85\wd0 + 2pt\relax}}

\makeatletter
\def\bag{\@ifstar\@bag\@@bag}
\def\@bag#1{\dbraces{\smash{#1}}}
\def\@@bag#1{\dbraces*{#1}}

\newcommand{\Bags}[3][\@nil]{%
	\def\tmp{#1}%
	\ifx\tmp\@@nil%
		\dparens*{\smash{\begin{smallmatrix}#2\\#3\end{smallmatrix}}}%
	\else%
		\dparens*{\begin{smallmatrix}#2\\#3\end{smallmatrix}}%
	\fi%
}
\makeatother

\DeclareMathOperator{\proj}{proj}
\DeclareMathOperator{\sym}{sym}

\newcommand*{\Attributes}{\bm{\sf{Attributes}}}
\newcommand*{\Relations}{\bm{\sf{Relations}}}

\newcommand*{\Schema}[1][S]{\cal{#1}}
\newcommand*{\Att}[1][A]{\cal{#1}}
\newcommand*{\Rel}[1][R]{\cal{#1}}

\DeclareMathOperator{\ar}{ar}
\DeclareMathOperator{\dom}{dom}
\DeclareMathOperator{\facts}{facts}
\DeclareMathOperator{\type}{type}

\newcommand*{\Topology}[1][T]{\fr{#1}}
\newcommand*{\Measurable}[1][D]{\fr{#1}}
\newcommand*{\Borel}{{\fr{B\mkern-1muo\mkern-1mur}}}

\newcommand*{\Space}[1][D]{\bb{#1}}
\newcommand*{\Instances}{\Space[D]}

\newcommand*{\Event}[1][D]{\cal{#1}}
\newcommand*{\CEvent}{\Event[C]}

\newcommand*{\pdb}{\Delta}
\newcommand*{\bbDelta}{\ensurestackMath{%
	\stackinset{r}{0.025em}{b}{0em}{\scalebox{.75}{$\Delta$}}{\Delta}}%
}
\newcommand*{\PDBs}{\bbDelta}

\newcommand*{\select}[1][]{\sigma_{#1}}
\newcommand*{\project}[1][]{\pi_{#1}}
\newcommand*{\product}{\times}
\newcommand*{\addunion}{\uplus}
\newcommand*{\maxunion}{\cup}
\newcommand*{\intersection}{\cap}
\newcommand*{\difference}{-}
\newcommand*{\dedupe}{\delta}
\newcommand*{\rename}[1][]{\rho_{#1}}
\newcommand*{\natjoin}{\mathbin{\raisebox{-1pt}{\normalfont$\Bowtie$}}}

\newcommand*{\aggregatorop}[1]{%
	\mathord{\text{\normalfont\sffamily\bfseries\smaller #1}}%
}
\newcommand*{\CNT}{\aggregatorop{CNT}}
\newcommand*{\CNTd}{\aggregatorop{CNTd}}
\newcommand*{\SUM}{\aggregatorop{SUM}}
\newcommand*{\AVG}{\aggregatorop{AVG}}
\newcommand*{\MAX}{\aggregatorop{MAX}}
\newcommand*{\MIN}{\aggregatorop{MIN}}

\theoremstyle{plain}
\newtheorem{theorem}{Theorem} %
\newtheorem{lemma}[theorem]{Lemma} %
\newtheorem{corollary}[theorem]{Corollary} %
\newtheorem{claim}[theorem]{Claim} %
\crefname{claim}{Claim}{Claims} %
\newtheorem{fact}[theorem]{Fact}
\crefname{fact}{Fact}{Facts}

\theoremstyle{remark}
\newtheorem{example}[theorem]{Example} %
\newtheorem{remark}[theorem]{Remark} %
\newtheorem{definition}[theorem]{Definition} %
\newtheorem{observation}[theorem]{Observation}
\crefname{observation}{Observation}{Observations}

\newcommand*{\sketchname}{Proof Sketch}
\newenvironment{sketch}[1][\sketchname]%
	{\begin{proof}[#1]}{\end{proof}}
\newenvironment{subproof}[1][\proofname]{%
		\begin{proof}[#1]%
	}{\end{proof}}

\usepackage[textwidth=2.7cm,textsize=scriptsize,shadow]{todonotes}
\colorlet{mgcolor}{red!40}
\colorlet{plcolor}{blue!40}

\newcommand{\atmg}{{\setlength{\fboxsep}{2pt}\colorbox{black!65!white}{\color{mgcolor}\bfseries\sffamily@MG}}}
\newcommand{\atpl}{{\setlength{\fboxsep}{2pt}\colorbox{black!65!white}{\color{plcolor}\bfseries\sffamily@PL}}}

\begin{document}

\maketitle

\begin{abstract}
Probabilistic databases (PDBs) are used to model uncertainty in data in a
quantitative way. In the standard formal framework, PDBs are finite
probability spaces over relational database instances. It has been argued
convincingly that this is not compatible with an open world semantics (Ceylan
et al., KR~2016) and with application scenarios that are modeled by continuous
probability distributions (Dalvi et al., CACM~2009).\par

We recently introduced a model of PDBs as infinite probability spaces that
addresses these issues (Grohe and Lindner, PODS~2019). While that work was
mainly concerned with countably infinite probability spaces, our focus here is
on uncountable spaces. Such an extension is necessary to model typical
continuous probability distributions that appear in many applications. However,
an extension beyond countable probability spaces raises nontrivial foundational
issues concerned with the measurability of events and queries and
ultimately with the question whether queries have a well-defined
semantics.

It turns out that so-called finite point processes are the appropriate model
from probability theory for dealing with probabilistic databases. This model
allows us to construct suitable (uncountable) probability spaces of database
instances in a systematic way. Our main technical results are measurability
statements for relational algebra queries as well as aggregate queries and 
datalog queries.
\end{abstract}

\section{Introduction}
Probabilistic databases (PDBs) are used to model uncertainty in data.
Such uncertainty could be introduced by a variety of reasons
like, for example, noisy sensor data, the presence of incomplete or
inconsistent information, or because the information is gathered from
unreliable sources \cite{Aggarwal+2009,Suciu+2011}. In the standard
formal framework, probabilistic databases are finite
probability spaces whose sample spaces consist of database instances
in the usual sense, referred to as \enquote{possible worlds}. However,
this framework has various shortcomings due to its inherent
\emph{closed world assumption} \cite{Ceylan+2016}---in particular, any
event outside of the finite scope of such probabilistic databases is
treated as an impossible event. There is also work on PDBs that
includes continuous probability distributions and hence goes beyond the
formal framework of finite probability space. Yet, these continuous PDBs lack 
a general formal basis in terms of a possible-worlds semantics 
\cite{Dalvi+2009}. While both open world PDBs and continuous
probability distributions in PDBs have received some attention in the
literature, there is no systematic joint treatment of these issues with a sound
theoretical foundation. In \cite{Grohe+2019}, we introduced an extended model 
of PDBs as arbitrary (possibly infinite) probability spaces over finite
database instances. However, the focus there was on countably
infinite PDBs. An extension to continuous PDBs, which is necessary to
model probability distributions appearing in many applications that
involve real-valued measurement data, raises new fundamental questions
concerning the measurability of events and queries.\par\medskip

In this paper, we lay the foundations of a systematic and sound
treatment of infinite, even uncountable, probabilistic databases, and we
prove that queries expressed in standard query languages have a
well-defined semantics.\par
Our treatment is based on the mathematical
theory of finite point processes \cite{Moyal1962,Macchi1975,Daley+2003}. 
Adopting this theory to the context of relational databases, we give
a suitable construction of measurable spaces over which our
probabilistic databases can then be defined. The only assumption that we
need to make is that the domains of all attributes satisfy certain
topological assumptions (they need to be Polish spaces; all standard
domains such as integers, strings, reals, satisfy this assumption). For queries and views to
have a well-defined open world semantics, we need them to be measurable mappings
between probabilistic databases. Our main technical result states that
indeed all queries and views that can be expressed in the relational algebra,
even equipped with arbitrary aggregate operators (satisfying some
mild measurability conditions) are measurable mappings. The result
holds for both a bag-based and set-based relational algebra.  We also prove
the measurability of datalog queries.

Measurability of queries may seem like an obvious minimum requirement,
but one needs to be very careful. We give an example of a simple, innocent
looking ``query'' that is not measurable (see Example~\ref{ex:nonborel}). The
proofs of the measurability results are not trivial, which may already
be seen from the fact that they depend on the topological assumption
that the attribute domains are Polish spaces (most importantly, they
are complete topological spaces and have a countable dense
subset). At their core, the proofs are based on finding suitable
``countable approximations'' of the queries.

In the last section of this paper, we briefly discuss queries for
probabilistic databases that go beyond ``standard''
database queries lifted to probabilistic databases via an open
world-semantics. Examples of such a queries are probabilistic threshold
queries and rank queries. Such queries refer not only to the facts in a
database, but also to their probabilities, and hence are inherently probabilistic.

\subparagraph*{Related Work}
Early work on models for probabilistic databases dates back to the 1980s
\cite{Wong1982,Gelenbe+1986,Cavallo+1987} and 1990s \cite{Barbara+1992,
Dey+1996,Fuhr+1997,Zimanyi1997}. These models may be seen as special cases or
variations of the now-acclaimed formal model of probabilistic databases that
features a usually finite set of database instances (the \enquote{possible
worlds}) together with a probability distribution among them
\cite{Aggarwal+2009,Suciu+2011}.\par

The work \cite{Koch2008} presents a formal definition of the probabilistic
semantics of relational algebra queries as it is used in the MayBMS system
\cite{Koch2009}.
A probabilistic semantics for datalog has already been proposed in the mid-90s
\cite{Fuhr1995}. More recently, a version of datalog was considered in which
rules may fire probabilistically \cite{Deutch+2010}. Aggregate queries in
probabilistic databases were first treated systematically in \cite{Ross+2005}
and reappear in various works concerning particular PDB systems \cite{Murthy+2011,Fink+2012}.
\par

The models of possible worlds semantics mentioned above are the mathematical
backbone of existing probabilistic database prototype systems such as MayBMS
\cite{Koch2009}, Trio \cite{Widom2009} and MystiQ \cite{Boulos+2005}.
Various subsequent prototypes feature uncountable domains as well, such as
Orion \cite{Singh+2008}, MCDB \cite{Jampani+2008,Jampani+2011}, new versions of 
Trio \cite{Agrawal+2009} and PIP \cite{Kennedy+2010}. The MCDB system in 
particular allows programmers to specify probabilistic databases with 
infinitely many possible worlds with database instances that can grow 
arbitrarily large \cite{Jampani+2011} and is therefore probably the most
general existing system. Its system-driven description does not feature a 
general formal, measure theoretic account of its semantics though. In a spirit 
that is similar to our presentation here, the work \cite{Tran+2012}
introduced a measure theoretic semantics for probabilistic data stream systems 
with probability measures composed from Gaussian mixture models but (to our
knowledge) on a per tuple basis and without the possibility of inter-tuple 
correlations. Continuous probabilistic databases have already been considered 
earlier in the context of sensor networks \cite{Faradjian+2002,Cheng+2003,Deshpande+2004}. 
The first work to formally introduce continuous possible worlds semantics
(including aggregation) is \cite{Abiteboul+2011} for probabilistic XML.
However, the framework has an implicit restriction bounding the number of
tuples in a PDB.\par

Models similar in expressivity to the one we present have also been suggested 
in the context of probabilistic modeling languages and probabilistic
programming \cite{Milch+2005,Milch2006,Richardson+2006,DeRaedt+2016,Barany+2017}.
In particular notable are the measure theoretic treatments of Bayesian Logic
(BLOG) \cite{Milch+2005} in \cite{Wu+2018} and Markov Logic Networks (MLNs)
\cite{Richardson+2006} in \cite{Singla+2007}. While these data models are
relational, it is unclear, how suitable they are for general database 
applications and in particular, the investigation of typical database queries
is beyond the scope of these works.\par

Problems raised by the closed world assumption \cite{Reiter1978} in
probabilistic databases was discussed initially by Ceylan et al. in
\cite{Ceylan+2016} where they suggest the model of OpenPDBs. In
\cite{Borgwardt+2018}, the authors make a more fine-grained distinction between
an \emph{open world} and \emph{open domain assumption}, the latter of which
does not assume the attribute values of the database schema to come from a
known finite domain. The work \cite{Friedman+2019} considers semantic
constraints on open worlds in the OpenPDB framework. The semantics of OpenPDBs 
can be strengthened towards an open domain assumption by the means of 
ontologies \cite{Borgwardt+2017,Borgwardt+2018,Borgwardt+2019}.\par

The classification of views we discuss towards the end of this paper shares
similarities with previous classifications of queries such as \cite{Cheng+2003}
in the sense that it distinguishes \emph{how} aggregation is involved. The work
\cite{Wanders+2015} suggests a distinction between \enquote{traditional} and
\enquote{out-of-world aggregation} quite similar to the one we present.\par\medskip

\section{Preliminaries}\label{sec:preliminaries}
Throughout the paper, we denote the set of nonnegative integers by $\N$, the
set of rational numbers by $\Q$ and the set of real numbers by $\R$. We write
$\N_+$, $\Q_+$ and $\R_+$ for the restrictions of these sets to strictly 
positive numbers.\par

If $M$ is a set and $k\in\N$, then $\Sets{M}{k}$ denotes the set of subsets of
$M$ of cardinality $k$. The set of all finite subsets of $M$ is then given by
$\bigcup_{k\geq 0}\Sets{M}{k}\eqqcolon \Sets{M}{<\omega}$.\par
A \emph{bag} (also called \emph{multiset}) over a set $U$ is an unordered
collection of elements of $U$, possibly with repetitions. In order to
distinguish sets and bags, we use double curly braces $\bag{\cdots}$ when
explicitly denoting bags. Similarly to the notation for sets, we let
$\Bags{M}{k}$ denote the set of bags over the set $M$ of cardinality $k\in\N$
(that is, containing $k$ elements, counting copies). The set of all finite bags 
over $M$ is given by $\bigcup_{k\geq 0}\Bags{M}{k}\eqqcolon \Bags{M}{<\omega}$.
\par

There are multiple equivalent ways to formalize the notion of bags. We
introduce two such definitions that we use interchangeably later:

\begin{description}
	\item[Multiplicity perspective] A \emph{bag} $B$ over some 
		set $U$ is a function $\hits_B\colon U\to\N$ assigning a 
		\emph{multiplicity} to every element of $U$. The cardinality of 
		$B$ is $\card{B}\coloneqq\sum_{u\in U}\hits_B(u)$.
	\item[Quotient perspective] For all $a,b\in U^k$, let $a\sim b$ if $b$
		is a permutation of $a$. A \emph{bag} $B$ of cardinality 
		$\card{B}=k$ is a $\sim$-equivalence class on $U^k$.
\end{description}

While the multiplicity perspective better matches the intuitive semantics of
bags, the quotient view later has a closer connection to the probability
spaces we are going to construct.

\subsection{Relational Databases}\label{ssec:db}
We follow the general terminology and notions of the \emph{named perspective}
of databases, see for example \cite{Abiteboul+1995}. We fix two countably
infinite, disjoint sets $\Attributes$ and $\Relations$ of \emph{attribute
names} and \emph{relation names}, respectively. As usual, we drop the
distinction between names of attributes and relations and their model-theoretic
interpretation. A \emph{database schema} is a pair $\Schema = (\Att,\Rel)$ with
the following properties:
\begin{itemize}
	\item $\Att$ and $\Rel$ are finite subsets of $\Attributes$ resp.
		$\Relations$.
	\item For every attribute $A\in\Att$ there exists a set
		$\dom_{\Schema}(A)$, called its \emph{domain}.
	\item For every relation symbol $R\in\Rel$ there exists an associated
		$k$-tuple of distinct attributes from $\Att$ for some $k$,
		called its \emph{type} $\type_{\Schema}(R)$.
\end{itemize}
Implicitly, every relation $R\in\Rel$ has an \emph{arity} $\ar_{\Schema}(R)
\coloneqq \length{\type_{\Schema}(R)}$ and a \emph{domain} $\dom_{\Schema}(R)
\coloneqq \prod_{A\in\type_{\Schema}(R)} \dom_{\Schema}(A)$. Elements of the
domain of $R\in\Rel$ are called \emph{$R$-tuples}. Whenever a pair $(\Att,
\Rel)$ is given, we assume that all of the aforementioned mappings are given as
well, unless it is specified otherwise. Given a database schema $\Schema =
(\Att, \Rel)$ and a relation $R\in\Rel$, the set of \emph{$R$-facts} in
$\Schema$ is formally defined as $\facts_{\Schema} (R) = \set{R} \times
\dom_{\Schema}(R)$. The set of \emph{all} facts of schema $\Schema$ is given as
$\facts_{\Schema}(\Rel)\coloneqq \bigcup_{R\in\Rel}
\facts_{\Schema}(R)$.\par\medskip%

As usual, we denote $R$-facts in the fashion of $R(a_1,\dots,a_k)$ rather than
$(R,a_1,\dots,a_k)$. If $U \subseteq \dom_{\Schema}(R)$ for $R\in\Rel$, we let
$R(U) \coloneqq \set{R(u)\colon u\in U}$. If $U$ is a Cartesian product 
involving singletons, like for example $U = \set{a}\times V$, we may omit the 
braces of the singletons and replace crosses with commas so that $R(a,U) = 
\set{R(a,u)\colon u\in U}$.\par\medskip%

Finally, a \emph{database instance} $D$ of schema $\Schema=(\Att,\Rel)$ is a
\emph{finite} bag of facts from $\facts_{\Schema}(\Rel)$, that is, an element
of the set $\Instances_{\Schema}\coloneqq \Bags{\facts_{\Schema}(\Rel)}
{<\omega}$. We want to emphasize that in particular we allow single facts to 
appear two or more times within an instance. That is, we use bag semantics in 
our database instances.

\subsection{Topology and Measure Theory}\label{ssec:measuretheory}
We assume that the reader is familiar with the basic notions of point set
topology such as open and closed sets and continuous mappings. For a more
detailed introduction to the concepts we use, see \cref{app:topology}. In the
following, we concentrate on the background from measure theory. The
definitions and statements are based upon \cite{Srivastava1998} and Chapter 1
of \cite{Kallenberg1997}.\par
In topological terms, the spaces we use as our attribute domains later on are
called Polish spaces - complete, separable metrizable spaces. Such spaces are
the default choice for probability theory in a general setting, as they are
quite general while still exhibiting the nice behavior of closed intervals of
the real line, in particular the ability to approximate points by converging
sequences of a countable collection 
of open sets.

\begin{example}[{see \cite[ch.~18]{Fristedt+1997} and \cite[pp.~52~et~seqq.]
	{Srivastava1998}}]\label{ex:examplespaces}\mbox{}
	\label{ex:polishspaces}
	\begin{itemize}
		\item All finite and countably infinite spaces (with the
			discrete topology) are Polish.
		\item The spaces $\R$ and $\R\cup\set{\pm\infty}$ are Polish.
		\item Closed subspaces of Polish spaces are Polish.
		\item Countable disjoint unions and countable products of 
			Polish spaces are Polish.
	\end{itemize}
\end{example}

These examples already capture the most relevant cases for standard database 
applications. Nevertheless we stick to the abstract notion of Polish spaces
in order to keep the framework as general as possible. When we work with
Polish spaces, we will later always assume that we work with a fixed metric on
the space (turning it into a complete separable metric space). In particular,
we will use the standard notation $B_\epsilon(x)$ for the \emph{ball of radius
$\epsilon$} around the point $x$ (with respect to said metric).\par\bigskip

Let $\Space[X]$ be some set. A \emph{$\sigma$-algebra} on $\Space[X]$ is a 
family $\Measurable[X]$ of subsets of $\Space[X]$ such that $\Space[X]\in
\Measurable[X]$ and $\Measurable[X]$ is closed under complementation and 
\emph{countable} unions. If $\Measurable[G]$ is a family of subsets of 
$\Space[X]$, then the \emph{$\sigma$-algebra generated by $\Measurable[G]$} is 
the smallest $\sigma$-algebra $\Measurable[X]$ on $\Space[X]$ containing 
$\Measurable[G]$. A \emph{measurable space} is a pair
$(\Space[X],\Measurable[X])$ where $\Space[X]$ is an arbitrary set and
$\Measurable[X]$ is a $\sigma$-algebra on $\Space[X]$. Subsets of $\Space[X]$
are called \emph{$\Measurable[X]$-measurable} (or \emph{measurable} if
$\Measurable[X]$ is clear from context) if they belong to $\Measurable[X]$. A
\emph{probability measure} on $\Space[X]$ is a countably additive function
$P\colon\Measurable[X] \to[0,1]$ with $P(\emptyset)=0$ and $P(\Space[X])=1$.
($P$ being countably additive means $P\big(\bigcup_i \Event[X]_i\big) = \sum_i
P(\Event[X]_i)$ for any sequence $\Event[X]_0,\Event[X]_1,\Event[X]_2,\dots$ of
disjoint measurable sets.) A measurable space equipped with a probability
measure is called a probability space. If $\Xi$ is a probability space
$(\Space[X],\Measurable[X],P)$, we also write $\Pr_{X\sim\Xi} (X\in\Event[X]) =
P(\Event[X])$ or even omit the subscript $X\sim\Xi$, if the underlying probability
space is clear from context.\par\smallskip

Let $(\Space[X],\Measurable[X])$ and $(\Space[Y],\Measurable[Y])$ be measurable
spaces. A mapping $\phi\colon \Space[X]\to \Space[Y]$ is called 
\emph{$(\Measurable[X], \Measurable[Y])$-measurable} (or simply 
\emph{measurable} if the involved $\sigma$-algebras are clear from context) if 
the preimage under $\phi$ of every $\Measurable[Y]$-measurable set is 
$\Measurable[X]$-measurable. That is, if
\begin{equation*}
	\phi^{-1}(\Event[Y]') = \set{X\in\Space[X]\colon\phi(X)\in\Event[Y]'}
	\in\Measurable[X]\qquad\text{for all}~\Event[Y]'\in\Measurable[Y]\text.
\end{equation*}

\begin{fact}[{cf.~\cite[Lemmas 1.4, 1.7 \&{} 1.10]{Kallenberg1997}}]\label{fac:measurablebasics}
	Let $(\Space[X],\Measurable[X])$, $(\Space[Y],\Measurable[Y])$,
	$(\Space[Z],\Measurable[Z])$ be measurable spaces.
	\begin{itemize}
		\item Let $\Measurable[G]$ generate $\Measurable[Y]$. If 
			$\phi\colon\Space[X]\to\Space[Y]$ satisfies $\phi^{-1}
			(\Event[G])\in\Measurable[X]$ for all $\Event[G]\in
			\Measurable[G]$, then $\phi$ is measurable.
		\item If $\phi\colon\Space[X]\to\Space[Y]$ and $\psi\colon
			\Space[Y]\to\Space[Z]$ are measurable, then
			$\psi\circ\phi\colon\Space[X]\to\Space[Z]$ is
			$(\Measurable[X],\Measurable[Z])$-measurable.
		\item If $\Space[Y]$ is a metric space and $(\phi_n)_{n\geq 0}$
			is a sequence of measurable functions $\phi_n\colon
			\Space[X]\to\Space[Y]$ with $\lim_{n\to\infty} \phi_n
			=\phi$, then $\phi$ is measurable as well.
	\end{itemize}
\end{fact}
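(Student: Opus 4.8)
The statement collects three standard facts, and I would prove them in the order given, since the third relies on the first. The first part rests on the \emph{good-sets principle}. I would introduce the family $\Measurable[A] \coloneqq \set{\Event[Y]' \subseteq \Space[Y] : \phi^{-1}(\Event[Y]') \in \Measurable[X]}$ of those target sets whose $\phi$-preimage is measurable, and check that it is a $\sigma$-algebra on $\Space[Y]$; this is immediate because $\phi^{-1}$ commutes with complementation and with countable unions and sends $\Space[Y]$ to $\Space[X]$. Since by hypothesis $\Measurable[G] \subseteq \Measurable[A]$, minimality of the generated $\sigma$-algebra forces $\Measurable[Y] \subseteq \Measurable[A]$, which is exactly the measurability of $\phi$. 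The second part I would dispatch directly from the identity $(\psi \circ \phi)^{-1}(\Event[Z]') = \phi^{-1}\bigl(\psi^{-1}(\Event[Z]')\bigr)$: for $\Event[Z]' \in \Measurable[Z]$, measurability of $\psi$ puts $\psi^{-1}(\Event[Z]')$ into $\Measurable[Y]$, and then measurability of $\phi$ puts the whole preimage into $\Measurable[X]$.

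The third part is the one that genuinely uses the metric structure, and it is where I expect the real work to lie. Here $\Measurable[Y]$ is the Borel $\sigma$-algebra, generated by the open sets, so by the first part it suffices to show $\phi^{-1}(U) \in \Measurable[X]$ for each open $U \subseteq \Space[Y]$. The obstacle is that pointwise convergence $\phi_n \to \phi$ does not let me pull $U$ back through a single $\phi_n$, since a point whose image lies in $U$ may have approximants outside $U$. My plan is to replace $U$ by its open inner approximations $U_m \coloneqq \set{y \in \Space[Y] : d(y, \complement{U}) > 1/m}$, where $d(y, \complement{U}) \coloneqq \inf_{z \in \complement{U}} d(y,z)$; each $U_m$ is open, being the preimage of $(1/m, \infty)$ under the continuous distance function, so each $\phi_n^{-1}(U_m)$ already lies in $\Measurable[X]$.

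The key step is then to establish the identity
\begin{equation*}
	\phi^{-1}(U) = \bigcup_{m \geq 1}\ \bigcup_{N \geq 0}\ \bigcap_{n \geq N} \phi_n^{-1}(U_m),
\end{equation*}
whose right-hand side is a countable Boolean combination of sets in $\Measurable[X]$ and therefore lies in $\Measurable[X]$. For \enquote{$\subseteq$} I would note that if $\phi(X) \in U$ then $\delta \coloneqq d(\phi(X), \complement{U}) > 0$ since $U$ is open; choosing $m$ with $2/m < \delta$ places $\phi(X)$ well inside $U_m$, and convergence $\phi_n(X) \to \phi(X)$ then yields $\phi_n(X) \in U_m$ for all large $n$. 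For \enquote{$\supseteq$}, if $\phi_n(X) \in U_m$ eventually, then continuity of $y \mapsto d(y, \complement{U})$ forces $d(\phi(X), \complement{U}) \geq 1/m > 0$, so $\phi(X) \notin \complement{U}$. Matching the two nested unions to the \enquote{for some $m$} and \enquote{eventually} quantifiers is the only delicate bookkeeping; everything else is routine. An alternative I would keep in reserve is to fix a countable dense set and reduce to coordinatewise limits of real-valued distances, but the distance-function argument above is self-contained and I would prefer it.
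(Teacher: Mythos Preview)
Your proof is correct. The paper itself does not prove this statement: it is recorded as a \emph{Fact} with a citation to Kallenberg's textbook (Lemmas~1.4, 1.7 and~1.10), so there is no in-paper argument to compare against. What you have written is precisely the standard textbook route---the good-sets principle for part one, the preimage-of-composition identity for part two, and the distance-to-complement exhaustion $U_m = \set{y : d(y,\complement{U}) > 1/m}$ together with the $\liminf$-type countable combination for part three---and each step is sound as stated.
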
\par

If $(\Space[X],\Topology_{\Space[X]})$ is a topological space, the \emph{Borel
$\sigma$-algebra} $\Borel_{\Space[X]}$ on $\Space[X]$ is the $\sigma$-algebra
generated by $\Topology_{\Space[X]}$. Sets in the Borel $\sigma$-algebra are
also called $\emph{Borel}$.

\begin{fact}[{cf.~\cite[Lemma 1.5]{Kallenberg1997}}]
Any continuous function between the topological 
spaces $(\Space[X],\Topology_{\Space[X]})$ and $(\Space[Y],\Topology
_{\Space[Y]})$ is $(\Borel_{\Space[X]},\Borel_{\Space[Y]})$-measurable 
.
\end{fact}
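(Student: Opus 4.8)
The plan is to reduce the claim to the generating-set criterion stated in the first item of \cref{fac:measurablebasics}. By definition, the Borel $\sigma$-algebra $\Borel_{\Space[Y]}$ is generated by the topology $\Topology_{\Space[Y]}$, that is, by the family of open subsets of $\Space[Y]$. Hence, choosing $\Measurable[G] = \Topology_{\Space[Y]}$ as the generating family, it suffices to show that $\phi^{-1}(\Event[G]) \in \Borel_{\Space[X]}$ for every open set $\Event[G] \subseteq \Space[Y]$, and then invoke the criterion.

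First I would unwind the definition of continuity: a map $\phi\colon \Space[X]\to\Space[Y]$ between topological spaces is continuous precisely when the preimage $\phi^{-1}(\Event[G])$ of every open set $\Event[G]\in\Topology_{\Space[Y]}$ is open in $\Space[X]$, i.e.\ $\phi^{-1}(\Event[G])\in\Topology_{\Space[X]}$. Since $\Borel_{\Space[X]}$ is by definition generated by $\Topology_{\Space[X]}$, we have $\Topology_{\Space[X]}\subseteq\Borel_{\Space[X]}$, so every open subset of $\Space[X]$ is in particular Borel. Combining these two facts gives $\phi^{-1}(\Event[G])\in\Borel_{\Space[X]}$ for all $\Event[G]\in\Topology_{\Space[Y]}$.

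With this the hypothesis of the generating-set criterion is met for $\Measurable[G]=\Topology_{\Space[Y]}$, and \cref{fac:measurablebasics} immediately yields that $\phi$ is $(\Borel_{\Space[X]},\Borel_{\Space[Y]})$-measurable. There is no genuine obstacle in this argument: it rests entirely on the defining property of continuity together with the elementary observation that a topology is contained in the Borel $\sigma$-algebra it generates. The only point requiring minimal care is to apply the criterion against the \emph{generating} family $\Topology_{\Space[Y]}$ rather than attempting to verify the measurability condition directly for all of $\Borel_{\Space[Y]}$, which would be both unnecessary and unwieldy.
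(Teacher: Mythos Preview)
Your argument is correct and is exactly the standard proof of this fact. Note, however, that the paper does not actually prove this statement: it is recorded as a \emph{Fact} with a reference to \cite[Lemma~1.5]{Kallenberg1997}, so there is no in-paper proof to compare against. Your use of the generating-set criterion from \cref{fac:measurablebasics} together with the definition of continuity is precisely the argument one finds in the cited reference.
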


Two measurable spaces $(\Space[X],\Measurable[X])$ and
$(\Space[Y],\Measurable[Y])$ are called \emph{isomorphic} if there exists a
bijection $\phi\colon\Space[X]\to\Space[Y]$ such that both $\phi$ and
$\phi^{-1}$ are measurable. The mapping $\phi$ is then called an
\emph{isomorphism} between the measurable spaces. If $\Measurable[X]=
\Borel_{\Space[X]}$ and $\Measurable[Y]=\Borel_{\Space[Y]}$, then $\phi$ is
called a \emph{Borel isomorphism} and the measurable spaces are called
\emph{Borel isomorphic}. Measurable spaces that are isomorphic to some Polish
space with its Borel $\sigma$-algebra are called \emph{standard Borel
spaces}.\par\smallskip

If $\Measurable[X]_i$ is a $\sigma$-algebra on $X_i$ for all $i\in I$, the
\emph{product $\sigma$-algebra} $\bigotimes_{i\in I}\Measurable[X]_i$ of
$(\Measurable[X]_i)_{i\in I}$ is the $\sigma$-algebra on $\prod_{i\in
I}\Space[X]_i$ that is generated by the sets $\set{\pi_j^{-1}(\Event[X])\colon
\Event[X]\in \Measurable[X]_j}_{j\in I}$ where $\pi_j$ is the canonical
projection map $\pi_j\colon \prod_{i\in I} \Space[X]_i \to \Space[X]_j$.

\begin{fact}[{cf.~\cite[Lemma 1.2]{Kallenberg1997}}]
	\label{fac:borel-product}%
	Let $(\Space[X]_i)_{i\in I}$ be a \emph{countable} sequence of Polish
	spaces and let $\Borel_i$ be the Borel $\sigma$-algebra of
	$\Space[X]_i$. Then $\Space[X] = \prod_{i\in I} \Space[X]_i$ is Polish
	and $\Borel_{\Space [X]} = \bigotimes_{i\in I} \Borel_i$. That is,
	countable products of standard Borel spaces are standard Borel.
\end{fact}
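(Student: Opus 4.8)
The plan is to establish the two assertions separately---that $\Space[X]$ is Polish, and that $\Borel_{\Space[X]} = \bigotimes_{i\in I}\Borel_i$---and then to read off the closing statement about standard Borel spaces. Since $I$ is countable, I may assume throughout that $I\subseteq\N$.

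For the Polish part, I would fix on each factor a complete metric $d_i$ compatible with its topology and, after replacing $d_i$ by $\min\set{d_i,1}$ (which is again a complete compatible metric), define a metric on the product by $d(x,y) = \sum_{i\in I} 2^{-i}\, d_i(x_i,y_i)$. One checks routinely that $d$ induces exactly the product topology and that $d$-Cauchy sequences are coordinatewise Cauchy, so by completeness of each $d_i$ they converge in $\Space[X]$; hence $d$ is complete. Separability is inherited from the factors: picking a countable dense set $D_i\subseteq\Space[X]_i$ together with a base point, the countably many sequences that agree with the base point off a finite set and take values in the $D_i$ on that finite set are dense. Thus $(\Space[X],d)$ is a complete separable metric space, i.e.\ Polish.

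For the $\sigma$-algebra identity I would prove both inclusions. The inclusion $\bigotimes_{i\in I}\Borel_i\subseteq\Borel_{\Space[X]}$ is immediate: each projection $\pi_j\colon\Space[X]\to\Space[X]_j$ is continuous and hence $(\Borel_{\Space[X]},\Borel_j)$-measurable, so $\pi_j^{-1}(\Event[X])\in\Borel_{\Space[X]}$ for every $\Event[X]\in\Borel_j$; since such sets generate the product $\sigma$-algebra, the inclusion follows. The reverse inclusion is where countability is essential. Each factor, being separable metrizable, is second countable, and a countable product of second-countable spaces is again second countable; hence $\Space[X]$ admits a countable base $\cal{B}$. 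Every member of $\cal{B}$ may be chosen as a finite intersection of cylinders $\pi_j^{-1}(U_j)$ with $U_j\subseteq\Space[X]_j$ open, and each such cylinder lies in $\bigotimes_{i\in I}\Borel_i$; thus $\cal{B}\subseteq\bigotimes_{i\in I}\Borel_i$. Any open subset of $\Space[X]$ is a countable union of members of $\cal{B}$ and therefore also lies in $\bigotimes_{i\in I}\Borel_i$, which yields $\Borel_{\Space[X]}\subseteq\bigotimes_{i\in I}\Borel_i$.

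I expect the reverse inclusion to be the point requiring care: it rests on every open set being a countable union of basic cylinders, which is exactly what fails for uncountable index sets, where $\Borel_{\Space[X]}$ can be strictly larger than $\bigotimes_{i\in I}\Borel_i$. Finally, to obtain the closing remark, suppose each $\Space[X]_i$ is only standard Borel and fix Borel isomorphisms $\phi_i\colon\Space[X]_i\to\Space[Y]_i$ onto Polish spaces $\Space[Y]_i$. The product map $\prod_{i\in I}\phi_i$ is an isomorphism of the corresponding product measurable spaces, and by the two parts already shown $\prod_{i\in I}\Space[Y]_i$ is Polish with Borel $\sigma$-algebra $\bigotimes_{i\in I}\Borel_{\Space[Y]_i}$. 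Hence $\bigl(\Space[X],\bigotimes_{i\in I}\Borel_i\bigr)$ is isomorphic to a standard Borel space and is therefore itself standard Borel.
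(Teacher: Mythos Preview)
Your argument is correct and follows the standard textbook route: bound the metrics, form the weighted sum to get a complete compatible metric on the product, use second countability to show every open set is a countable union of cylinder sets, and conclude the two $\sigma$-algebras coincide. The closing remark about standard Borel spaces is also handled correctly via coordinatewise Borel isomorphisms.

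Note, however, that the paper does not actually prove this statement: it is recorded as a \emph{Fact} with a citation to \cite[Lemma~1.2]{Kallenberg1997} and no proof is given. So there is nothing in the paper to compare your approach against; you have supplied a self-contained proof where the authors simply invoke a reference.
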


\subsection{(Finite) Point Processes}\label{ssec:point-processes}
\emph{Point processes} are a well-known concept in probability theory that is
used to model distributions of a discrete (but unknown or even infinite) number
of points in some abstract \enquote{state space}, say the Euclidean space
$\R^n$ \cite{Daley+2003}. They are used to model a variety of both practical
and theoretical problems and appear in a broad field of applications such as,
for example, particle physics, ecology, geostatistics, astronomy and tracking
\cite{Moyal1962,Daley+2003,Degen2015}.
A concrete collection of points that is 
obtained by a draw from such a distribution model is called a
\emph{realization} of the point process. If all realizations are finite, we
speak of a finite point process \cite{Daley+2003}. We proceed to construct a
finite point process over a Polish state space, following the classic
constructions of \cite{Moyal1962, Macchi1975}. While modern point process
theory is much more evolved by casting point processes in the more general
framework of random measures \cite{Daley+2008}, the seminal model of
\cite{Moyal1962,Macchi1975} suffices for our studies due to our restriction to
finite point processes.\par\medskip

Let $(\Space[X],\Measurable[X])$ be a standard Borel space. Then for every $n$,
the product measurable space $(\Space[X]^n,\Measurable[X]^{\otimes n})$ with
$\Measurable[X]^{\otimes n}\coloneqq \Measurable[X]\otimes\dots\otimes\Measurable[X]$
($n$ times) is standard Borel as well (\cref{fac:borel-product}). Letting
$\sim_n$ denote the equivalence relation on $\Space[X]^n$ with
$(x_1,\dots,x_n)\sim_n (y_1,\dots,y_n)$ if there exists a permutation $\pi$ of
$\set{1,\dots,n}$ with $(y_1,\dots,y_n) = (x_{\pi(1)}, \dots,x_{\pi(n)})$, then
elements of $\Space[X]^n/\sim_n$
are basically unordered collections of $n$
(not necessarily different) points, that is, \emph{bags} (or
\emph{multisets}). Formally, we identify $\Space[X]^n/\sim_n$ with the
space $\Bags{\Space[X]}{n}$ of all $n$-element bags from
$X$. The space of all possible realizations is then naturally defined as
\begin{equation*}
	\Bags{\Space[X]}{<\omega}
	= \bigcup_{n\in\N} \Bags{\Space[X]}{n}
	= \bigcup_{n\in\N} \Space[X]^n/\sim_n\text.
\end{equation*}
This is the canonical sample space for a finite point process \cite{Daley+2003,
Moyal1962}, but we need to define a $\sigma$-algebra on this space.  The
original construction of \cite{Moyal1962} considers the
\emph{symmetrization} transformation $\sym$ from $\Space[X]^{<\omega}$ to
$\Bags{\Space[X]}{<\omega}$ where $\sym(x_1,\dots,x_n) = [(x_1,\dots,x_n)]_{\sim_n} =
\bag{x_1,\dots,x_n}$ and $\sym(\Event[X]) = \set{\sym(\bar x)\colon \bar
x\in\Event[X]}$ and defines the $\sigma$-algebra on $\Space[X]$ to be the
set of all subsets of $\Bags{\Space[X]}{<\omega}$ whose preimage under $\sym$
is measurable with respect to the $\sigma$-algebra
on $\Space[X]^{<\omega}$ that is generated using $(\Measurable[X]^{\otimes n})_{n\in\N}$
(pursuing the idea to lift probability measures from well-known product spaces
to the new, in terms of measure theory inconvenient
\enquote{bag-space}---note that the construction above indeed yields a 
$\sigma$-algebra on $\Bags{\Space[X]}{<\omega}$, see \cite[Lemma 1.3]
{Kallenberg1997}). An equivalent, but technically more convenient
construction (see \cite{Macchi1975}) is motivated by an interpretation of
point processes as \enquote{random counting measures}
\cite{Moyal1962,Macchi1975,Daley+2008}: for $\Event[X]\in\Measurable[X]$
and $n\in\N$, the set $\Event[C](\Event[X],n)\subseteq\Bags{\Space[X]}
{<\omega}$ is the set of bags $C$ over $\Space[X]$ with $\hits_C(\Event[X])
\coloneqq \sum_{X\in\Event[X]}\hits_C(X) = n$ (that is, with exactly $n$ 
\enquote{hits} in $\Event[X]$) is called the \emph{counting event} of
$\Event[X]$ and $n$. We define $\Measurable[C]_{\Space[X]}$ to be the
$\sigma$-algebra that is generated by the family of counting events
$\Event[C](\Event[X],n)$ where $\Event[X]$ is Borel in $\Space[X]$ and $n$ is a
nonnegative integer.  The family $\Measurable[C]_{\Space[X]}$ is known as the
\emph{counting $\sigma$-algebra} on $\Bags{\Space[X]}{<\omega}$.
It can be shown that the $\sigma$-algebra generated by the counting
events is the same as the $\sigma$-algebra defined from product
$\sigma$-algebras and the symmetrization operation (see \cite{Moyal1962,Macchi1975}).

\begin{definition}[{cf.~\cite[Def.~1]{Macchi1975}}]
	Let $(\Space[X],\Measurable[X])$ be a standard Borel space and let $P$
	be a probability measure on $\big(\Bags{\Space[X]}{<\omega},
	\Measurable[C]_{\Space[X]}\big)$. Then $\big(\Bags{\Space[X]}{<\omega},
	\Measurable[C]_{\Space[X]},P\big)$ is called a \emph{finite point 
	process} with \emph{state space} $(\Space[X],\Measurable[X])$.\par

	A finite point process $(\Space[Y],\Measurable[Y],P)$ with state space
	$(\Space[X], \Measurable[X])$ is called \emph{simple}, if any 
	realization is almost surely a set, i.\,e. if $\Pr\big(\hits_Y\big(\set{X}\big)\in\set{0,1}~\text{for
all}~X\in\Space[X]\big)= 1$.
\end{definition}

\section{Probabilistic Databases}\label{sec:probabilistic-databases}

In \cite{Grohe+2019}, we introduced a general notion of infinite probabilistic
databases as probability spaces of database instances, that is, probability
spaces $(\Space[D],\Measurable[D],P)$, where $\Space[D]\subseteq
\Instances_{\Schema}$ for some database schema $\Schema$.
Here $\Space[D]$ may be infinite, even uncountable. In fact, in
\cite{Grohe+2019} we only considered instances that are sets rather than bags,
but this does not make much of a difference here.  We left it open, however,
how to construct such probability spaces, and in particular how to define a
suitable measurable spaces $(\Space[D],\Measurable[D])$, which is nontrivial
for uncountable $\Space[D]$. In this section, we provide a general construction
for constructing such measurable spaces.

\subsection{Probabilistic Databases as Finite Point Processes}
\label{ssec:ppconstruction}

\emph{Throughout this paper, we only consider database schemas $\Schema$ where
for every attribute $A$ the domain $\dom_{\Schema}(A)$ is a Polish space.} This
is no real restriction; all domains one might typically find, such as the sets
of integers, reals, or strings over a finite or even countable alphabet have
this property. 

In the following, we fix a database schema $\Schema=(\Att,\Rel)$. It follows
from \cref{fac:borel-product} that not only the domains $\dom_{\Schema}(A)$ of
the attributes $A\in\Att$, but also the spaces $\dom_{\Schema}(R)$ and
$\facts_{\Schema}(R)$ for all $R\in\Rel$ are Polish. We equip all of these
spaces with their respective Borel $\sigma$-algebras and note that
$\dom_{\Schema}(R)$ and $\facts_{\Schema}(R)$ are Borel-isomorphic from the
point of view of measurable spaces.  Thus, they can be used interchangeably
when discussing measurability issues with respect to a single relation. 
For the set $\facts_{\Schema}(R)$ of facts \emph{using relation symbol $R\in
\Rel$}, let $\Measurable[F]_{\Schema}(R)$ denote its (Borel) $\sigma$-algebra. 
We equip $\facts_{\Schema}(\Rel)$, the set of \emph{all facts of schema 
$\Schema$} with the $\sigma$-algebra
\begin{equation*}
	\Measurable[F]_{\Schema}(\Rel)
	= \set{ F \subseteq \facts_{\Schema}(\Rel)\colon 
	F\cap\facts_{\Schema}(R)\in\Measurable[F]_{\Schema}(R)~
	\text{for all}~R\in\Rel}\text.
\end{equation*}
Note that this is indeed a $\sigma$-algebra and, moreover, turns 
$(\facts_{\Schema}(\Rel),\Measurable[F]_{\Schema}(\Rel))$ into a standard Borel 
space (cf. \cite[p.~39]{Fremlin2016} and \cite[p.~166]{Fremlin2013}).

Now a probabilistic database of schema $\Schema$ is supposed to be a
probability space $(\Space[D],\Measurable[D],P)$ where $\Space[D]\subseteq
\Instances_{\Schema}$. Without loss of generality we may assume that actually
$\Space[D]=\Instances_{\Schema}= \Bags{\facts_{\Schema}(\Rel)}{<\omega}$,
because we can adjust the probability measure to be $0$ on instances we are not
interested in. Thus a probabilistic database is a probability space over finite
sets of facts. This is exactly what a finite point process over the state space
consisting of facts is. We still need to define the $\sigma$-algebra
$\Measurable[D]$, but the theory of point processes gives us a generic way of
doing this: we let $\Measurable[D]_{\Schema} = \Measurable[C]_{\facts_{\Schema}
(\Rel)}$ be the counting $\sigma$-algebra of $\Instances_{\Schema}$ (cf. 
\cref{ssec:point-processes}).

\begin{definition}\label{def:pdb}
  A \emph{standard probabilistic database} of schema $\Schema$ is a
  probability space $(\Instances_{\Schema}, \Measurable[D]_{\Schema},P)$.
\end{definition}

That is, a standard probabilistic database of schema $\Schema$ is a finite
point process over the state space $(\facts_{\Schema}(\Rel),
\Measurable[F]_{\Schema})$. 

The reason we speak of \enquote{standard} PDBs in the definition is to
distinguish them from the more general PDBs introduced in \cite[Definition 3.1]
{Grohe+2019}. In \cite{Grohe+2019}, we left the $\sigma$-algebra unspecified
and only required the (mild) property, that the occurrence of measurable sets
of facts is themselves measurable. This requirement corresponds to a set
version of the counting events defined above and is thus given by default in a
standard probabilistic database.\par

Even though the construction of counting $\sigma$-algebras for point processes
is nontrivial, we are convinced that it is a natural generic construction of
$\sigma$-algebras over spaces of finite (or countable) sets and the extensive
usage of these constructions throughout mathematics for more than fifty years
now indicates their suitability for such tasks. \emph{Throughout this paper,
all probabilistic databases are standard. Therefore, we omit the qualifier
\enquote{standard} in the following and just speak of probabilistic databases
(PDBs).}\par

We defined instances of PDBs to be bags of facts. However, if a PDB, that
is, a finite point process is simple (see \cref{ssec:point-processes}), then it
may be interpreted as a PDB with set-instances.

\begin{example}\label{ex:finitepdb}
	Every finite probabilistic database (as introduced, for example,
	in $\cite{Suciu+2011}$) can be viewed as a standard PDB: Let
	$\tilde{\Space[D]}$ be a finite set of set-valued database instances
	over some schema $\Schema=(\Att,\Rel)$ and let $\tilde
	P\colon\tilde{\Space[D]} \to[0,1]$ a probability measure on
	$\tilde{\Space[D]}$ (equipped with the power set as its
	$\sigma$-algebra). Then $(\tilde{\Space[D]}, \tilde P)$ corresponds to
	the simple finite point process $(\Space[D],\Measurable[D],P)$ on the
	instance measurable space of $\Schema$ with state space
	$(\facts_{\Schema}(\Rel),\Measurable[F]_{\Schema}(\Rel))$ where
	$P(\Event[D])=\tilde P(\Event[D]\cap \tilde{\Space[D]})$ (interpreting
	$\tilde{\Space[D]}$ with a (finite) collection of bags with
	$\set{0,1}$-valued multiplicities).  
\end{example}

\subsection{The Possible Worlds Semantics of Queries and Views}\label{ssec:pws}
In the traditional database setting, \emph{views} are mappings from database
instances of an \emph{input schema} (or \emph{source schema}) $\Schema = 
(\Att,\Rel)$ to database instances of some \emph{output schema} (or
\emph{target schema}) $\Schema' = (\Att',\Rel')$. Views, whose output schema
$\Schema'$ consists of a single relational symbol only are called
\emph{queries}. Queries and views are usually given by syntactic
expressions in some \emph{query language}. As it is common, we will blur the
distinction between a query (or view) and its syntactic
representation.

Let $\pdb = (\Instances_{\Schema}, \Measurable_{\Schema}, P)$ be a
probabilistic database of schema $\Schema = (\Att, \Rel)$ and let $V$ be a view
of input schema $\Schema$ and output schema $\Schema' = (\Att', \Rel')$. The
image of a set $\Event \subseteq \Measurable$  of instances is $V(\Event) =
\set{ V(D) \colon D\in\Event }\subseteq \Instances_{\Schema'}$.

Now we would like to define a probability measure on the output space
$(\Instances_{\Schema'}, \Measurable_{\Schema'})$ by 
\begin{equation}\label{eq:pwsprobability}
	P'(\Event')
	\coloneqq P\big( V^{-1}(\Event') \big)
	= P\big( \set{ D\in\Instances \colon V(D)\in\Event' } \big)
\end{equation}
for all $\Event'\in\Measurable_{\Schema'}$. Then $V$ would map
$\pdb$ to $\pdb'\coloneqq (\Instances_{\Schema'},
\Measurable_{\Schema'},P')$. This semantics of views over PDBs is 
known as the \emph{possible worlds semantics} of probabilistic databases
\cite{Green2009,Aggarwal+2009,Suciu+2011,VandenBroeck+2017}.

However, $P'$ (as defined in \eqref{eq:pwsprobability}) is only well-defined if for all $\Event'\in\Measurable_{\Schema'}$ the set
$V^{-1}(\Event')$ is in $\Measurable_{\Schema}$, that is, if $V$ is a
\emph{measurable} mapping from $(\Instances_{\Schema},
\Measurable_{\Schema})$ to $(\Instances_{\Schema'},
\Measurable_{\Schema'})$. 

Measurability is not just a formality, but an issues that requires
attention. The following 
example shows that there are relatively simple \enquote{queries} that are
not measurable.  

\begin{example}\label{ex:nonborel}
  Let $\Schema=\Schema'$ be the schema consisting of a singe unary relation
  symbol $R$ with attribute domian $\R$ (equipped with the Borel $\sigma$-algebra), and let $B$ be some Borel
	set in $\R^2$. 

  We define a mapping $Q_B\colon\Instances_{\Schema}\to\Instances_{\Schema}$, our
  \enquote{query}, by     
  \begin{equation*}
    Q_B(D)\coloneqq
    \begin{dcases}
      D &\text{if}~D~\text{is a singleton}~\bag{R(x)}~\text{and
        there exists }~y\in\R~\text{s.\,t.}~(x,y)\in B,\\
      \emptyset &\text{otherwise.}
    \end{dcases}
  \end{equation*}
  Observe that 
  $
    Q_B^{-1}(\Instances_{\Schema})=\set[big]{\bag{R(x)}\colon x\in
      \proj_1(B)},
    $
  where
  $\proj_1(B)=\set{x\in\R \colon \text{there is}~y\in\R~\text{s.\,t.}~(x,y)\in
    B}$. It is a well known fact that there are Borel sets $B\subseteq\R^2$
    such that the projection $\proj_1(B)$ is \emph{not} a Borel set in $\R$ 
	 (see \cite[Theorem 4.1.5]{Srivastava1998}). For such sets $B$, the query 
 $Q_B$ is not measurable.\end{example}

The rest of this paper is devoted to proving that queries and views
expressed in standard query languages, specifically relational algebra,
possibly extended by aggregation, and datalog queries, are measurable
mappings and thus have a well-defined open-world semantics over
probabilistic databases.

It will be sufficient to focus on 
\emph{queries}, because views can be composed from queries and the
measurability results can be lifted (as we formally show in the next
subsection). \emph{Throughout the rest of the paper, we adopt the following
notational conventions: queries are denoted by $Q$
	and map a PDB $\pdb=(\Instances,\Measurable,
	P)$ to a PDB $\pdb'=(\Instances',\Measurable',P')$ such that $\pdb$ is of
schema $\Schema$ and $\pdb'$ is of schema $\Schema'$.}

\begin{observation}\label{obs:simplifying}
	The task of establishing measurability of queries in our framework is
	simplified by the following.
	\begin{enumerate}
		\item If we want to demonstrate the measurability of $Q$, it
			suffices to show that $Q^{-1}(\Event') \in \Measurable$
			for all \emph{counting events} $\Event[D]'=\Event[C]'
			(F,n)$ of $(\Instances',\Measurable')$. This is due to 
			\cref{fac:measurablebasics} because they generate 
			$\Measurable'$.
		\item Since compositions of measurable mappings are measurable
			(again from \cref{fac:measurablebasics}), composite queries are
			immediately measurable if all their components are
			measurable queries to begin with. In particular, we
			can demonstrate the measurability of general queries of
			some query language by structural induction.
	\end{enumerate}
\end{observation}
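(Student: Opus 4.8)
The plan is to derive both parts directly from \cref{fac:measurablebasics}, so the observation is less a theorem to be proved than a reduction to be set up carefully. Both assertions are corollaries of facts already stated for general measurable spaces; the work is only in identifying the right generating family and in making the inductive scaffolding explicit.

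For the first part, I would invoke the first bullet of \cref{fac:measurablebasics} with the generating family taken to be the counting events of the output space. By construction (see \cref{ssec:point-processes}), the $\sigma$-algebra $\Measurable'$ on $\Instances'$ is precisely $\Counting_{\facts_{\Schema'}(\Rel')}$, the $\sigma$-algebra \emph{generated} by the family $\set{\Event[C]'(F,n)\colon F\in\Measurable[F]_{\Schema'}(\Rel'), n\in\N}$ of counting events. Hence \cref{fac:measurablebasics} applies verbatim: a map $Q$ whose preimages $Q^{-1}(\Event[C]'(F,n))$ all lie in $\Measurable$ is automatically $(\Measurable,\Measurable')$-measurable, since preimages commute with the $\sigma$-algebra operations and the generators already land in $\Measurable$. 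The only point to verify is that the counting events really are the generating family, which is exactly how $\Measurable'$ was defined, so no further argument is needed.

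For the second part, the composition statement is immediate from the second bullet of \cref{fac:measurablebasics}: if $Q = Q_2\circ Q_1$ with $Q_1\colon\Instances\to\Instances''$ and $Q_2\colon\Instances''\to\Instances'$ both measurable (each intermediate instance space carrying its own counting $\sigma$-algebra), then $Q$ is measurable. To obtain the structural-induction principle, I would formalize a query language as a collection of atomic queries (such as relation-name accesses) together with operations that combine queries into larger expressions, so that every expression is a finite composition of such building blocks. The base case asserts measurability of the atomic queries; the inductive step applies the composition closure to each operation, each of which is realized as a mapping between the corresponding instance spaces. Measurability then propagates up the parse tree of the expression.

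The genuine difficulty does not reside in this observation at all, but in the base cases and inductive steps it licenses, that is, in showing that each individual relational-algebra operation (and each aggregate operator, and the datalog evaluation map) is itself a measurable mapping between the relevant counting-$\sigma$-algebra spaces. By the first part, each such step reduces to controlling the preimage of a single counting event $\Event[C]'(F,n)$, and finding suitable \emph{countable approximations} of these preimages is where the real work, and the essential reliance on the Polish-space assumption on the attribute domains, will be concentrated in the later sections.
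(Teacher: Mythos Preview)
Your proposal is correct and matches the paper's approach exactly: the observation carries its own justification inline, invoking the first two bullets of \cref{fac:measurablebasics} together with the definition of $\Measurable'$ as the $\sigma$-algebra generated by counting events, and you have simply (and accurately) unpacked that. Your added remarks about where the real work lies are apt and consistent with how the paper proceeds.
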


\begin{remark}
Let us again mention something related to the well-established knowledge on
point processes. The mappings (queries) we investigate map between point 
processes that are defined on different measure spaces that are themselves a
conglomerate of simpler measure spaces of different shape. It is well-known
that measurable transformations of the state space of a point process define
a new point process on the transformed state space (a strengthening of this
result is commonly referred to as the \enquote{mapping theorem} 
\cite{Last+2017}). Our queries however are in general already defined on point
configurations and not on the state space of facts. Thus, their measurability
can in general not be obtained by the idea just sketched.
\end{remark}

\subsection{Assembling Views from Queries}
We think of views as finite sets of queries, including one for every 
relation of the output schema. Suppose $V = \set{ Q_1, \dots, Q_k }$ is a view
consisting of measurable queries $Q_1, \dots, Q_k$ where the names of the
target relations of the $Q_i$ are mutually distinct. The target schema 
$\Schema'$ of $V$ is given by the union of the target schemas of $V$s 
individual queries. Now every fact $f \in \facts_{\Schema'}(\Rel')$ of the new
schema originates from the target schema of exactly one of the queries $Q_1,
\dots, Q_k$. We refer to that query as $Q_f$. Then for all $D \in \Instances$
and $f \in \facts_{\Schema'}(\Rel')$, we define $\hits_{V(D)}(f) \coloneqq 
\hits_{Q_f(D)}(f)$.  Now if $F\subseteq\facts_{\Schema'}(\Rel')$, let $F_i \coloneqq F \cap
\facts_{\Schema_i'}(\Rel_i')$ where $\Schema_i' = (\Att_i', \Rel_i')$ is the
target schema of $Q_i$. Then
\begin{equation*}
	\hits_{V(D)}(F) = n
	\Iff \text{there are}~n_1,\dots,n_k~\text{with}~\textstyle\sum_{i=1}^k 
		n_i=n~\text{such that}~\hits_{Q_i(D)}(F_i)=n_i\text.
\end{equation*}
Since the $F_i$ are measurable if and only if $F$ is measurable, the above
describes a countable union of measurable sets. Thus, $V$ is
measurable.\par\smallskip

\section{Relational Algebra}\label{sec:relational-algebra}
As motivated in \cref{ssec:pws}, we now investigate the measurability of
relational algebra queries in our model. The concrete relational algebra for
bags that we use here is basically the (unnested version of the) algebra that 
was introduced in \cite{Dayal+1982} and investigated respectively extended and 
surveyed in \cite{Albert1991,Grumbach+1996a,Grumbach+1996b}. It is called 
$\mathsf{BALG^1}$ (with superscript $1$) in \cite{Grumbach+1996a}. We do not
introduce nesting as it would yield yet another layer of abstraction and
complexity to the spaces we investigate, although by the properties that such
spaces exhibit, we have strong reason to believe that there is no technical
obstruction in allowing spaces of finite bags as attribute domains.  

The operations we consider are shown in the \cref{tab:balg} below.
As seen in \cite{Albert1991,Grumbach+1996a,Grumbach+1996b}, there is some
redundancy within this set of operations that will be addressed later. A
particular motivation for choosing this particular algebra is that possible
worlds semantics are usually built on top of set semantics and these operations
naturally extend the common behavior of relation algebra queries to bags. This
is quite similar to the original motivation of \cite{Dayal+1982} and 
\cite{Albert1991} regarding their choice of operations. A detailed overview
of their traditional semantics on single database instances can be found in 
\cref{app:operations}.
\begin{table}[H]
	\centering
	\caption{$\mathsf{BALG^1}$-operators considered in this paper (see 
	\cref{app:operations} for details).}\label{tab:balg}
	\begin{tabular}{lll}
		\toprule
		\textbf{Base Queries}
		& Constructors  & $Q=\bag{}$ and $Q=\bag{R(a)}$\\
		& Extractors 	& $Q=R$\\
		& Renaming	& $Q=\rename[A\to B](R)$\\
		\midrule
		\textbf{Basic Bag Operations} 
		& Additive Union 	& $Q=R_1\addunion R_2$\\
		& Difference 		& $Q=R_1\difference R_2$\\
		& Max-Union		& $Q=R_1\maxunion R_2$\\
		& (Min-)Intersection	& $Q=R_1\intersection R_2$\\
		& Deduplication		& $Q=\dedupe(R)$\\
		\midrule
		\textbf{SPJ-Operations}
		& Selection		& $Q=\select[{(A_1,\dots,A_k)\in\Event[B]}](R)$\\
		& Projection		& $Q=\project[(A_1,\dots,A_k)](R)$\\
		& Cross Product		& $Q=R_1\product R_2$\\
		\bottomrule
	\end{tabular}
\end{table}
 The main result we establish in this section is the 
following theorem:
\begin{theorem}\label{thm:balg}
	All queries expressible in the bag algebra $\mathsf{BALG^1}$ are measurable.
\end{theorem}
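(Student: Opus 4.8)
The plan is to prove \cref{thm:balg} by structural induction on the syntax of $\mathsf{BALG^1}$ expressions, exactly as sanctioned by \cref{obs:simplifying}. By part (2) of that observation, compositions of measurable queries are measurable, so it suffices to treat each operator of \cref{tab:balg} as an atomic step: for each operator, assuming its argument queries are measurable, I must show the resulting query is measurable. By part (1), to establish measurability of a single query $Q\colon\Instances\to\Instances'$ it suffices to show that $Q^{-1}(\Event[C]'(F,n))\in\Measurable$ for every generating counting event $\Event[C]'(F,n)$ of the target space, where $F$ ranges over Borel sets of facts of $\Schema'$ and $n\in\N$. So the whole proof reduces to a finite case analysis, one case per row of the table, each asking: given a Borel $F\subseteq\facts_{\Schema'}(\Rel')$ and $n\in\N$, is the set of input instances $D$ with $\hits_{Q(D)}(F)=n$ measurable in $\Measurable$?

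The base cases are the anchor. The constant queries $Q=\bag{}$ and $Q=\bag{R(a)}$ are constant maps, hence trivially measurable. The extractor $Q=R$ (projecting an instance onto the facts using relation symbol $R$) and the renaming $\rename[A\to B]$ are induced by a measurable (indeed Borel-isomorphic) transformation of the underlying fact space $\facts_{\Schema}(\Rel)$, so a counting event of the output pulls back to a counting event of the input, which is measurable by definition of $\Measurable[C]$. For the basic bag operations I would compute multiplicities directly: for instance, for additive union $R_1\addunion R_2$ one has $\hits_{Q(D)}(f)=\hits_{R_1(D)}(f)+\hits_{R_2(D)}(f)$, so the event $\hits_{Q(D)}(F)=n$ decomposes as a finite (hence countable) union over the ways to split $n=n_1+n_2$ of the intersections of two counting events on the coordinate queries, and these are measurable by the induction hypothesis. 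Difference, max-union, and min-intersection are handled the same way, writing the output multiplicity of each fact as $\max$, $\min$, or a truncated difference of the input multiplicities and reducing $\hits_{Q(D)}(F)=n$ to a countable Boolean combination of counting events. Deduplication $\dedupe$ turns multiplicities into $\set{0,1}$-values, and here the counting event on $F$ reduces to a statement about which facts in $F$ have nonzero multiplicity, again expressible through counting events on the argument.

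The genuinely delicate operators are the SPJ operations, and I expect \textbf{projection} to be the main obstacle, for precisely the reason illustrated by \cref{ex:nonborel}: projecting a relation drops attributes, and projections of Borel sets need not be Borel. Selection $\select[(A_1,\dots,A_k)\in\Event[B]](R)$ with Borel selection predicate $\Event[B]$ is comparatively benign: if $F$ is a Borel set of facts, then the facts that both survive the selection and lie in $F$ form the Borel set $F\cap(\set{R}\times\Event[B]')$ for an appropriate Borel $\Event[B]'$, so the output counting event pulls back to an input counting event on a Borel set. Cross product $R_1\product R_2$ requires relating a counting event on the product fact space to counting events on the factors; a fact $R(a,b)$ appears with multiplicity $\hits_{R_1(D)}(R_1(a))\cdot\hits_{R_2(D)}(R_2(b))$, so I would decompose $F$ using the product structure of the Borel $\sigma$-algebra (\cref{fac:borel-product}) and express the total hit count as a finite sum of products of coordinate hit counts, each measurable by induction. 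For projection, the naive reduction fails exactly because $\proj_{(A_1,\dots,A_k)}(F)$ need not be Borel, so the plan is to avoid forming that projection directly and instead to build a \enquote{countable approximation} as foreshadowed in the introduction: exploiting that the attribute domains are Polish (separable, with a countable dense set and a countable basis), I would approximate the projection event by a countable family of counting events defined through basic open boxes, and argue convergence using the sequential-limit closure of measurable maps from \cref{fac:measurablebasics} together with completeness of the domain to control the approximants. Closing this projection case rigorously is where the topological hypotheses are indispensable, and it is the step I would allocate the most care to; once it is done, the structural induction assembles all operators and the theorem follows.
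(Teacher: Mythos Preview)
Your overall plan (structural induction, reduce to counting events) matches the paper, but you have \textbf{inverted the difficulty of projection and cross product}, and this is a genuine gap, not a stylistic difference.

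Projection is in fact easy. Given a Borel set $F=\{R'\}\times\Event[B]$ of output facts, the event $\hits_{Q(D)}(F)=n$ is \emph{exactly} the input counting event $\hits_D\big(\{R\}\times\Event[B]\times\prod_{j>k}\dom_{\Schema}(A_j)\big)=n$, because bag projection sums multiplicities over the dropped coordinates. What you need is the \emph{preimage} of $F$ under the coordinate projection map on tuples, which is a Borel cylinder; you never need to take the image $\proj_{(A_1,\dots,A_k)}(F)$. Your worry about non-Borel projections, motivated by \cref{ex:nonborel}, does not apply here (that example encodes an existential test, not bag projection).

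Cross product is the genuinely hard case, and your sketch does not work. For a \emph{general} Borel $F\subseteq\facts_{\Schema'}(R')$, you cannot ``decompose $F$ using the product structure'' into pieces on which the hit count factors as a product of coordinate hit counts: \cref{fac:borel-product} says the rectangles \emph{generate} the product $\sigma$-algebra, not that every Borel set is a countable union of rectangles. The paper handles rectangles directly and then, for arbitrary $F$, stratifies $\Instances$ into the measurable sets of ``$k$-coarse'' instances (those whose distinct $R_i$-tuples are at pairwise distance $\geq 1/k$), localises using small balls from a countable dense set, and runs a good-sets argument to show the relevant family is a $\sigma$-algebra. This is exactly the ``countable approximation via the Polish structure'' you anticipated---but it belongs to cross product, not projection.

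A smaller gap: for difference, max-/min-union, and deduplication, the per-fact identity (e.g.\ $\hits_{Q(D)}(f)=\max(0,\hits_D(q_1(f))-\hits_D(q_2(f)))$) does \emph{not} aggregate to a simple identity on $\hits_{Q(D)}(F)$ in terms of $\hits_D(F_1),\hits_D(F_2)$; your reduction ``to a countable Boolean combination of counting events'' is not justified as stated. The paper proves a general criterion (its \cref{lem:qcrit}) that again uses Cauchy-sequence approximation in the Polish fact space to localise the count to finitely many individual facts, and then reads off the per-fact identity. Additive union is the only bag operation where your direct argument goes through unchanged.
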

Since compositions of measurable mappings are measurable, the measurability of
the operators from \cref{tab:balg} directly entails the measurability of
compound queries by structural induction.\par
First note that the measurability of the base queries is easy to prove.
\begin{lemma}
	The queries $\bag{}$, $\bag{R(a)}$ and $R$ are measurable.
\end{lemma}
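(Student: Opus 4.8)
The plan is to verify measurability for each of the three base queries directly, using \cref{obs:simplifying}(1): it suffices to show that the preimage of every generating counting event $\Event[C]'(F,n)$ lies in $\Measurable$. Recall that $\Event[C]'(F,n)$ consists of those output instances $D'$ with exactly $n$ hits in the measurable fact set $F\subseteq\facts_{\Schema'}(\Rel')$. So for each query $Q$, I would compute $Q^{-1}\bigl(\Event[C]'(F,n)\bigr)=\set{D\in\Instances\colon \hits_{Q(D)}(F)=n}$ and check that this set is a counting event (or a finite/countable combination of counting events) of the input space $(\Instances,\Measurable)$.

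For the constant query $Q=\bag{}$, the output is always the empty instance regardless of the input, so $\hits_{Q(D)}(F)=0$ for every $D$. Hence $Q^{-1}\bigl(\Event[C]'(F,n)\bigr)$ equals either the whole space $\Instances$ (when $n=0$) or $\emptyset$ (when $n\geq 1$), both of which are trivially in $\Measurable$. For the constant query $Q=\bag{R(a)}$, the output is the fixed single-fact instance $\bag{R(a)}$, so $\hits_{Q(D)}(F)$ equals the indicator of whether the fixed fact $R(a)$ lies in $F$; this value does not depend on $D$, so again the preimage is $\Instances$ or $\emptyset$ depending on $n$ and on whether $R(a)\in F$. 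In all cases the preimage is measurable.

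For the extractor query $Q=R$, the query selects from the input instance exactly those facts that use relation symbol $R$, discarding the rest; in the single-relation output schema these become the whole output. Here the key observation is that for $F\subseteq\facts_{\Schema'}(\Rel')$ the count $\hits_{Q(D)}(F)$ equals $\hits_D(F')$, where $F'\subseteq\facts_{\Schema}(\Rel)$ is the corresponding set of $R$-facts in the input schema (identified via the Borel isomorphism between $\facts_{\Schema'}(R')$ and $\facts_{\Schema}(R)$). Since this identification is a measurable isomorphism, $F'$ is measurable in the input fact space, and therefore
\begin{equation*}
	Q^{-1}\bigl(\Event[C]'(F,n)\bigr)
	=\set{D\in\Instances\colon \hits_D(F')=n}
	=\Event[C](F',n)\in\Measurable\text,
\end{equation*}
which is exactly a generating counting event of the input space.

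Since all three cases reduce the preimage of a counting event to a counting event (or a trivial set) of the input space, \cref{obs:simplifying}(1) yields measurability in each case. None of the steps presents a genuine obstacle; the only point requiring mild care is the bookkeeping of the Borel isomorphism $\facts_{\Schema}(R)\cong\facts_{\Schema'}(R')$ for the extractor, ensuring that the pulled-back fact set $F'$ remains measurable so that $\Event[C](F',n)$ is a legitimate generator of $\Measurable$.
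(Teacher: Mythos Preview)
Your proposal is correct and follows essentially the same approach as the paper: for the constant queries the preimage of any event is either $\Instances$ or $\emptyset$, and for the extractor $Q=R$ the preimage of a counting event $\Event[C]'(F,n)$ is again a counting event $\Event[C](F',n)$ in the input space. The paper argues the constant cases against an arbitrary measurable output event rather than a generating counting event, but this is an immaterial difference.
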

\begin{proof}
	First consider $Q = \bag{}$ and fix some $\Event[D]'\in
	\Measurable[D]'$. If $\bag{}\in\Event[D]'$, then
	$Q^{-1}(\Event[D]')=\Instances\in\Measurable$. Otherwise,
	$Q^{-1}(\Event[D]')=\emptyset\in\Measurable$. Thus, $Q$ is measurable.
	The same argument applies to $Q= \bag{R(a)}$.\par
	Now consider the query $Q = R$ and let $\Event[C]'(F,n)$ be a counting
	event in the output measurable space. Then for every instance
	$D\in\Instances$, $\hits_{Q(D)}(F) = n$ if and only if $\hits_D(F)=n$
	Thus, $Q^{-1}(\Event[C]'(F,n))$ is the counting event $\Event[C](F,n)$
	in $(\Instances,\Measurable)$.  Hence, $Q$ is measurable.\qedhere
\end{proof}

\subsection{Basic Bag Operations}
We will obtain the measurability of the basic bag operations $\addunion$,
$\difference$, $\intersection$, $\maxunion$, $\dedupe$ as a consequence 
of the following, more general result that gives some additional insight into 
properties that make queries measurable.

Consider a query $Q$ of input schema $\Schema$ and output schema $\Schema'$
operating on relations $R_1$ and $R_2$ of $\Schema$. Let $R'$ be the single
(output) relation of $\Schema'$.

\begin{lemma}\label{lem:qcrit}
	Suppose that given $Q$ there exist functions $q_1 \colon 
	\facts_{\Schema'}(R') \to \facts_{\Schema}(R_1)$ and $q_2\colon 
	\facts_{\Schema'}(R')\to\facts_{\Schema}(R_2)$ with the following
	properties:
	\begin{enumerate}
	\item for all $n\in\N$ there exists a set $M(n)\subseteq\N^2$ with $(0,0)
          \notin M(n)$ for $n>0$ such that for all $D\in\Instances$ and 
		all $f\in\facts_{\Schema'}(R')$ it holds that
		\begin{equation*}
			\hits_{Q(D)}(f)=n\qquad
			\text{if and only if}\qquad
			\big(\hits_D(q_1(f)),\hits_D(q_2(f))\big)\in M(n)\text;
		\end{equation*}
	\item both $q_1$ and $q_2$ are injective and continuous;
	\item the images of $F$ under $q_1$ and $q_2$ are measurable: 
		$q_1(F)\in\Measurable[F]_{\Schema}(R_1)$ and $q_2(F)\in
		\Measurable[F]_{\Schema}(R_2)$.
	\end{enumerate}
	Then $Q$ is measurable.
\end{lemma}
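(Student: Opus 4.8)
The plan is to start from \cref{obs:simplifying}: it suffices to show that
\begin{equation*}
	Q^{-1}\bigl(\Event[C]'(F,n)\bigr)=\set{D\in\Instances : \hits_{Q(D)}(F)=n}\in\Measurable
\end{equation*}
for every Borel $F\subseteq\facts_{\Schema'}(R')$ and every $n\in\N$. Reading the biconditional of condition~(1) for a fixed, realizable profile $(a,b)\in\N^2$ shows that the sets $M(n)$ are pairwise disjoint on the profiles that can actually occur, so there is a well-defined map $\mu\colon\N^2\to\N$ with $(a,b)\in M(\mu(a,b))$; since the profile $(0,0)$ is always realizable and $(0,0)\notin M(n)$ for $n>0$, we get $\mu(0,0)=0$. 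Hence
\begin{equation*}
	\hits_{Q(D)}(F)=\sum_{f\in F}\mu\bigl(\hits_D(q_1(f)),\hits_D(q_2(f))\bigr),
\end{equation*}
and because $D$ is a finite bag and $q_1,q_2$ are injective, only finitely many $f$ have $q_1(f)$ or $q_2(f)$ present in $D$; together with $\mu(0,0)=0$ this makes the sum finite.

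First I would transport the problem to ordered tuples. By the equivalence of the two constructions of the counting $\sigma$-algebra (\cref{ssec:point-processes}), $\Measurable=\Measurable[D]_\Schema$ consists precisely of those $S\subseteq\Instances$ whose preimage under the symmetrization map $\sym\colon\facts_\Schema(\Rel)^{<\omega}\to\Instances$ is measurable, and a subset of the disjoint union $\facts_\Schema(\Rel)^{<\omega}$ is measurable iff its intersection with each $\facts_\Schema(\Rel)^N$ lies in $\Measurable[F]_\Schema(\Rel)^{\otimes N}$. So it is enough to prove, for each $N$, that $\set{\bar e=(e_1,\dots,e_N)\in\facts_\Schema(\Rel)^N : \hits_{Q(\sym(\bar e))}(F)=n}$ is $\Measurable[F]_\Schema(\Rel)^{\otimes N}$-measurable. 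On $\facts_\Schema(\Rel)^N$ the value $\hits_{Q(\sym(\bar e))}(F)$ is determined by three kinds of data: which coordinates of $\bar e$ coincide, which coordinates lie in $q_1(F)$ respectively $q_2(F)$, and which pairs of coordinates $(e_l,e_{l'})$ arise as $(q_1(f),q_2(f))$ for a common $f\in F$.

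The key point, and the step I expect to be the main obstacle, is that the contribution of a fact $f$ depends \emph{jointly} on $\hits_D(q_1(f))$ and $\hits_D(q_2(f))$, so I must recognize when an $R_1$-coordinate and an $R_2$-coordinate of $\bar e$ are the two $q$-images of one and the same $f$. This is exactly where ordered tuples help: the relevant predicate is $(e_l,e_{l'})\in\phi(F)$, where $\phi\coloneqq(q_1,q_2)\colon F\to\facts_\Schema(R_1)\times\facts_\Schema(R_2)$ is injective (as $q_1$ is) and continuous (as $q_1,q_2$ are), hence an injective Borel map between standard Borel spaces. By the Lusin--Souslin theorem, $\phi(F)$ is then Borel and $\phi^{-1}$ is Borel measurable on it. Together with condition~(3), which gives $q_1(F),q_2(F)\in\Measurable[F]_\Schema$, and with measurability of the diagonal of the standard Borel square $\facts_\Schema(\Rel)^2$ (\cref{fac:borel-product}), all three kinds of data above are measurable conditions on $\bar e$.

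Finally I would assemble the pieces. For fixed $N$ there are only finitely many ``patterns'': an equality pattern on $\set{1,\dots,N}$ together with, for each resulting block, a record of its membership in $q_1(F)$ or $q_2(F)$ and a matching of $q_1(F)$-blocks to $q_2(F)$-blocks via $\phi(F)$. Each pattern is a Borel subset of $\facts_\Schema(\Rel)^N$, being a finite Boolean combination of the measurable predicates above; the patterns cover $\facts_\Schema(\Rel)^N$; and on each pattern the block sizes $c_s$ (the common multiplicity of the coordinates in block $s$ of $D=\sym(\bar e)$) are constant, so $\hits_{Q(\sym(\bar e))}(F)$ evaluates to the fixed number $\sum_{\text{matched }(s,t)}\mu(c_s,c_t)+\sum_{\text{unmatched }s}\mu(c_s,0)+\sum_{\text{unmatched }t}\mu(0,c_t)$. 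The level set is therefore the finite union of those patterns whose value is $n$, hence Borel; ranging over $N$ and descending through $\sym$ gives $Q^{-1}(\Event[C]'(F,n))\in\Measurable$, and \cref{obs:simplifying} completes the proof. This finite case analysis is insensitive to whether $R_1=R_2$: when the fact spaces coincide a single block may be recorded in both $q_1(F)$ and $q_2(F)$, but the bookkeeping is still carried out per $f\in F$ and is unaffected.
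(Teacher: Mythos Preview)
Your argument is correct, and it takes a genuinely different route from the paper's.

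The paper works entirely inside the counting $\sigma$-algebra: it fixes a countable dense set $F_0\subseteq\facts_{\Schema'}(R')$ and characterizes the event $\{\hits_{Q(D)}(F)=n\}$ by an explicit countable Boolean combination of counting events in $(\Instances,\Measurable)$, phrased in terms of Cauchy sequences in $F_0$ that approximate the finitely many $f\in F$ contributing to the count. Continuity of $q_1,q_2$ transports these Cauchy sequences into the input fact spaces, completeness of the Polish spaces forces the limits to exist, and condition~(3) makes the intermediate counting events legitimate. You instead pass through the symmetrization description of $\Measurable$ (which the paper records but does not otherwise exploit) to reduce to product spaces $\facts_\Schema(\Rel)^N$, and then run a finite combinatorial case split on equality patterns, $q_i(F)$-membership of blocks, and the matching of blocks via $\phi(F)=(q_1,q_2)(F)$. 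The one ingredient you need that the paper does not is Lusin--Souslin, to see that $\phi(F)$ is Borel; everything else is elementary bookkeeping. Your approach buys a cleaner structural proof that avoids the somewhat delicate Cauchy-sequence characterization; the paper's approach stays at the level of counting events throughout and uses only the topological tools already set up in its preliminaries. Both rely on injectivity of $q_1,q_2$ in an essential way (you to make the block-to-$f$ correspondence well-defined, the paper to separate the balls in the approximation), and both use condition~(3) only to know that $q_i(F)$ are measurable.
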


Let us briefly mention the impact of the various preconditions of the lemma 
before turning to its proof. The existence of the functions $q_1$ and $q_2$ 
ensures that preimages of counting events $\Event[C]'(F,n)$ under the
query $Q$ can be approximated by using the fact that our state spaces are
Polish. They \enquote{decompose} the set $F$ of facts into disjoint (and 
measurable!) sets of facts for the preimage in a continuous, invertible way 
that exactly captures how tuples in the preimage relate to tuples in the image.

\begin{proof}[{\proofname~(\cref{lem:qcrit})}]
	Assume that $q_1$ and $q_2$ exist with properties 1 to 3. We fix $F\in
	\Measurable[F]_{\Schema'}(R')$ and $n\in\N_+$ and show that $Q^{-1}
	(\CEvent'(F,n))$ is in $\Measurable$. Let $F_0$ be a countable, dense
	set in $\facts_{\Schema'}(R')$. We claim that $\hits_{Q(D)}(F)=n$ if 
	and only if
	\begin{equation*}
		\tag{$*$}\label{eq:bigone}%
		\begin{aligned}
			&\text{there exist}~\ell\in\N_+~\text{and}~n_1,\dots,
			n_{\ell}\in\N~\text{with}~\textstyle\sum_{i=1}^\ell 
			n_i = n~\text{and}\\
			&\quad[1]\text{there exist}~(n_{i,1},n_{i,2})\in M(n_i)
			~\text{and}~k_0\in\N_+~\text{and}\\
			&\quad[2]\text{there exist Cauchy sequences}~(f_1^k)_
			{k\in\N},\dots,(f_\ell^k)_{k\in\N}~\text{in}~F_0~
			\text{with}\\
			&\quad[2] B_{1/k_0}(f_i^k)\cap B_{1/k_0}
				(f_{i'}^{k'}) = \emptyset~\text{for
                                  all}~k,k'\text{ and }i\neq i'~
				\text{such that for all}~k>k_0\\
			&\quad[3] \hits_{D}(q_1(F)\cap B_{1/k}(q_1(f_i^k)))=n_{i,1}~\text{and}~
			\hits_{D}(q_2(F)\cap B_{1/k}(q_2(f_i^k)))=n_{i,2}\\
			&\quad[4]\text{for}~1\leq i\leq \ell~\text{and}\\
			&\quad[3] \hits_{D}(q_1(F)\setminus\textstyle\bigcup_{i=1}^\ell B_{1/k}(q_1(f_i^k)))=0~\text{and}~
			\hits_{D}(q_2(F)\setminus\textstyle\bigcup_{i=1}^\ell B_{1/k}(q_2(f_i^k)))=0\text.
		\end{aligned}
	\end{equation*}
	Note that \eqref{eq:bigone} is a countable combination of counting
	events in $(\Instances,\Measurable)$ (using condition 3, in 
	particular). Thus, to show the measurability of $Q$ it suffices to 
	show the equivalence of $\hits_{Q(D)}(F)=n$ and	\eqref{eq:bigone}.
	\par\medskip%
	\begin{figure}[H]
		\centering
		\caption{Example illustration of \eqref{eq:bigone} for two
			facts $f$ and $f'$. Both these facts are approximated
			by Cauchy sequences that under $q_1$ and $q_2$ also
			approximate their images.}
		\begin{tikzpicture}
			\tikzset{%
				point/.style={inner sep=0pt,circle,fill,minimum width=4pt}
			}
			\draw (-5,1) rectangle (-3,3.3);
			\node at (-4,.5) {$\facts_{\Schema'}(R')$};
			\draw[pattern=north west lines,pattern color=black!10!white,smooth cycle] plot  coordinates {(-4.4,1.2) (-3.6,2) (-3.5,3.0) (-4.8,3.1)};
			\node at (-3.6,1.5) {$F$};
			\node[point,label={[shift={(.1,0)}]left:$f$}] (f) at (-4.2,2.3) {};
			\node[point,gray,label={[shift={(.2,-.2)}]above left:$f^k$}] (fk) at (-4.25,2.6) {};
			\draw[gray] (fk) circle (.4cm);
			\draw[-stealth,dashed,gray] (fk) to (f);
			\node[point,label={[shift={(.1,.1)}]below left:$f'$}] (f') at (-4,2) {};

			\draw[xshift=5.5cm,pattern=north west lines,pattern 
				color=black!10!white,smooth cycle] plot 
				coordinates {(-4.4,1.2) (-3.2,2.2) (-4.8,2.8)};
			\node[xshift=5.5cm] at (-3.4,1.4) {$\scriptstyle q_1(F)$};

			\draw[xshift=5.5cm] (-5,1) rectangle (-3,3.3);
			\node[xshift=5.5cm] at (-4,.5) {$\facts_{\Schema}(R_1)$};
			
			\node[xshift=5.5cm,point,gray] (a) at (-3.6,2.6) {};
			\node[xshift=5.5cm,point] (a') at (-3.65,2.3) {};
			\draw[gray] (a) circle (.4cm);
			\node[xshift=5.5cm,point,gray] (b) at (-4.5,1.8) {};
			\node[xshift=5.5cm,point] (b') at (-4.25,1.85) {};
			\draw[gray] (b) circle (.4cm);

			\draw[xshift=8.5cm,pattern=north west lines,pattern 
				color=black!10!white,smooth cycle] plot 
				coordinates {(-4.4,1.9) (-3.6,1.6) (-3.2,3) (-4.8,3.2)};
			\node[xshift=8.5cm] at (-3.4,1.4) {$\scriptstyle q_2(F)$};

			\draw[xshift=8.5cm] (-5,1) rectangle (-3,3.3);
			\node[xshift=8.5cm] at (-4,.5) {$\facts_{\Schema}(R_2)$};

			\node[xshift=8.5cm,point,gray] (c) at (-3.7,2.8) {};
			\node[xshift=8.5cm,point] (c') at (-3.8,2.6) {};
			\draw[gray] (c) circle (.4cm);
			\node[xshift=8.5cm,point,gray] (d) at (-4.2,1.6) {};
			\node[xshift=8.5cm,point] (d') at (-4,1.8) {};
			\draw[gray] (d) circle (.4cm);

			\draw[-stealth,shorten >=0.05cm,bend left=10,very thick] (f) to node[midway,below,xshift=.8cm,yshift=-0.1cm]{$q_1$}(a');
			\draw[-stealth,shorten >=0.05cm,bend left=10,very thick] (f) to node[midway,below,xshift=2.8cm,yshift=-0.1cm]{$q_2$} (c');
			\draw[-stealth,dashed,gray,bend left=10,shorten >=0.05cm,very thick] (fk) to (a);
			\draw[-stealth,dashed,gray,bend left=10,shorten >=0.05cm,very thick] (fk) to (c);
		\end{tikzpicture}
	\end{figure}
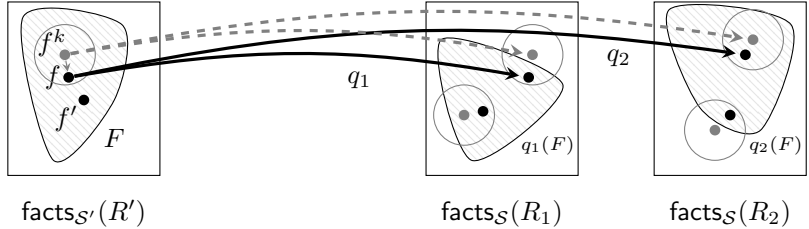

	Assume $\hits_{Q(D)}(F)=n$. Let $f_1,\dots,f_\ell$ be the facts from 
	$F$ with the property that $\hits_D(q_1(f))>0$ or
        $\hits_D(q_2(f))>0$.

	Let $n_i \coloneqq \hits_{Q(D)}(f_i)$. From condition 1 we know that
	$(\hits_D(q_1(f_i)),\hits_D(q_2(f_i)))\in M(n_i)$ as well as $\sum_{i=1}
	^\ell n_i=n$. Let $(f_1^k),\dots,(f_\ell^k)$ be Cauchy sequences from $F_0$
	that converge to $f_1,\dots,f_\ell$. Since $\ell$ is finite, the balls
	around $f_i^k$ and $f_{i'}^k$ do not intersect for sufficiently large $k$ as
	well as the balls around their images under $q_1$ respectively $q_2$ (since
	both of them are injective and continuous). Thus, $\hits_D(q_1(F)\cap 
	B_{1/k}(q_1(f_i^k))) = \hits_D(q_1(f_i))$ and $\hits_D(q_2(F)\cap B_{1/k}
	(q_2(f_i^k))) = \hits_D(q_2(f_i))$ for sufficiently large $k$. Therefore, 
	$D$ satisfies \eqref{eq:bigone}.\par\medskip

	Now for the other direction, suppose $D$ satisfies \eqref{eq:bigone}.
	As the $f_i^k$ are Cauchy sequences, the spaces $\facts_{\Schema'}(R_j)$ are
	Polish and hence complete, and the $q_j$ are continuous
	there exists (for every $1\leq i\leq \ell$) some $f_i\in F$ such that 
	$f_i^k\to f_i$, $q_1(f_i^k)\to q_1(f_i)$ and $q_2(f_i^k)\to q_2(f_i)$ 
	as $k\to\infty$ and $(\hits_D(q_1(f_i)),\hits_D(q_2(f_i)))=
	(n_{i,1},n_{i,2})\in M(n_i)$. By condition 1, $Q(D)$ contains $f_i$
	with multiplicity $n_i$ and as $\sum_{i=1}^\ell n_i=n$ (and since
	$D$ had no other facts with positive multiplicity than the above),
	it follows that $\hits_{Q(D)}(F)=n$.
\end{proof}

Note that the result above easily generalizes to queries that depend on an
arbitrary number of relations of the input probabilistic database. \cref{lem:qcrit}
provides a criterion to establish the measurability of queries. Checking its
precondition for bag operations we consider turns out to be quite easy and
yields the following lemma.

\begin{lemma}
	The following queries are measurable:
	\begin{enumerate}
		\item\label{itm:addunion} \emph{(Additive Union)} $Q = R_1\addunion R_2$ with $R_1,R_2\in
			\Rel$ of equal type.	
		\item\label{itm:difference} \emph{(Difference)} $Q = R_1\difference R_2$ with $R_1,R_2
			\in\Rel$ of equal type.
		\item\label{itm:intersection} \emph{((Min-)Intersection)} $Q = R_1\intersection R_2$ with $R_1,
			R_2\in\Rel$ of equal type.
		\item\label{itm:maxunion} \emph{(Max-Union)} $Q = R_1\maxunion R_2$ with $R_1,R_2\in
			\Rel$ of equal type.
		\item\label{itm:dedupe} \emph{(Deduplication)} $Q = \dedupe(R)$ with $R\in\Rel$.
	\end{enumerate}
\end{lemma}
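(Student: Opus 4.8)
The plan is to derive all five statements as instances of \cref{lem:qcrit} (and, for deduplication, of its single-relation specialisation noted in the remark following that proof), so that no new measure-theoretic argument is required. Since $R_1$, $R_2$ and the output relation $R'$ share a common type for the four binary operations, the fact spaces $\facts_{\Schema'}(R')$, $\facts_{\Schema}(R_1)$ and $\facts_{\Schema}(R_2)$ all sit over the same product domain and differ only in their relation symbol. I would therefore take $q_1$ and $q_2$ to be the canonical relabelling bijections $q_1\colon R'(a)\mapsto R_1(a)$ and $q_2\colon R'(a)\mapsto R_2(a)$; for $\dedupe(R)$ there is a single input relation and $q_1\colon R'(a)\mapsto R(a)$ plays the analogous role.

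With these choices, conditions~2 and~3 of \cref{lem:qcrit} are immediate: each $q_j$ acts as the identity on the tuple coordinates and merely renames the relation symbol, hence is a homeomorphism onto its codomain, in particular injective and continuous, and it carries each measurable set of $R'$-facts onto the corresponding set of $R_j$-facts, which is measurable by the definition of $\Measurable[F]_{\Schema}(R_j)$ as a Borel $\sigma$-algebra. The entire content of the proof thus reduces to condition~1, that is, to reading off, for each operation, the set $M(n)\subseteq\N^2$ relating the output multiplicity of a fact $f$ to the pair of input multiplicities $\big(\hits_D(q_1(f)),\hits_D(q_2(f))\big)$.

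I would then record the pointwise multiplicity law of each operator and translate it into $M(n)$: for additive union, $\hits_{Q(D)}(f)=\hits_D(q_1(f))+\hits_D(q_2(f))$, giving $M(n)=\set{(a,b)\colon a+b=n}$; for difference, $\hits_{Q(D)}(f)=\max\set{\hits_D(q_1(f))-\hits_D(q_2(f)),0}$, giving $M(n)=\set{(a,b)\colon a=b+n}$ for $n>0$; for (min-)intersection, $M(n)=\set{(a,b)\colon\min\set{a,b}=n}$; for max-union, $M(n)=\set{(a,b)\colon\max\set{a,b}=n}$; and for deduplication, the unary law $\hits_{Q(D)}(f)=\min\set{\hits_D(q_1(f)),1}$ yields $M(0)=\set{0}$, $M(1)=\N_+$ and $M(n)=\emptyset$ for $n\geq 2$. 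In each case the sanity condition $(0,0)\notin M(n)$ for $n>0$ (respectively $0\notin M(n)$ for $n>0$ in the unary case) is visibly met, since a positive output multiplicity forces at least one positive input multiplicity.

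There is no genuine obstacle here; the work is entirely bookkeeping once \cref{lem:qcrit} is in hand. The only points that warrant care are the slightly different shape of $M(n)$ at $n=0$ versus $n>0$ in the difference and deduplication cases, and the observation that $\dedupe$ must be handled through the single-relation version of \cref{lem:qcrit} rather than the two-relation statement. With the $M(n)$ written down, \cref{lem:qcrit} applies verbatim and delivers the measurability of all five queries.
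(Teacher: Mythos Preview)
Your proposal is correct and follows essentially the same approach as the paper: apply \cref{lem:qcrit} with the canonical relabelling maps $q_i\colon R'(a)\mapsto R_i(a)$ and read off the appropriate $M(n)$ from the pointwise multiplicity law of each operator. The only cosmetic difference is that the paper dispatches (min-)intersection and max-union by noting they are expressible in terms of $\uplus$ and $-$ (citing \cite{Albert1991}) and hence only spells out the $M(n)$ for items~\ref{itm:addunion}, \ref{itm:difference} and \ref{itm:dedupe}, whereas you handle all five directly; both routes are equally valid and yours is arguably cleaner.
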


\begin{proof}\crefname{enumi}{Statement}{Statements}
	As $\maxunion$ and $\intersection$ are expressible via 
	$\addunion$ and $\difference$ (cf. \cite{Albert1991}), we only
	show \cref{itm:addunion,itm:difference,itm:dedupe}.

	\begin{enumerate}
		\item Define $q_1$ and $q_2$ by $q_i(R(x))=R_i(x)$. Then
			$q_i$, $i\in\set{1,2}$ is injective and continuous
			and $q_i(F)=F_i\in\Measurable[F]_{\Schema}(R_i)$. Now
			let $k\in\N$ and let $M(k)\subseteq\N^2$ be the set of
			pairs $(k_1,k_2)$ with the property that $k_1+k_2=k$.
			Then $\hits_{Q(D)}(f) = k$ if and only if 
			$\big(\hits_D\big(q_1(f)\big),\hits_D\big(q_2(f)\big)\big)\in M(k)$. Together, by
			\cref{lem:qcrit}, $Q$ is measurable.
		\item This works exactly like in the case of $\addunion$ with
			$M(k)$ being the set of pairs $(k_1,k_2)$ with $\max(k_1-k_2,0)=k$.
			\setcounter{enumi}{4}
		\item In this case, we only use a single function $q$ that
			maps $R'(x)$ to $R(x)$. Again, $q$ is obviously both
			continuous and injective and $q(F)\in
			\Measurable[F]_{\Schema}(R)$ for every measurable $F$.
			If $k=1$, we let $M(k)=\N\setminus\set{0}$ and $M(k)=\set{0}$
			otherwise. Then clearly $\hits_{Q(D)}(R(x))=k$ if and only
			if $\hits_D(q(R(x))\in M(k)$ and again, $Q$ is measurable
			by \cref{lem:qcrit}.\qedhere
	\end{enumerate}
\end{proof}

\subsection{Selection, Projection and Join}
In this section, we investigate selection and projection as well as the cross
product of two relations. We start with the following helpful lemma that allows
us to restructure our relations into a more convenient shape to work with.
Semantically, it might be seen as a special case of a projection query.

\begin{lemma}\label{lem:reordering}%
	Reordering attributes within the type of a relation yields a
	measurable query.
\end{lemma}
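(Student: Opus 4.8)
The plan is to invoke \cref{lem:qcrit} in its single-relation form (just as was done for deduplication, where only one decomposition function $q$ and a set $M(n)\subseteq\N$ were used), since reordering operates on a single input relation. Write $\type_{\Schema}(R)=(A_1,\dots,A_k)$, let $\pi$ be the permutation of $\set{1,\dots,k}$ realizing the reordering, so that the output relation $R'$ has type $(A_{\pi(1)},\dots,A_{\pi(k)})$, and let $Q$ send each fact $R(a_1,\dots,a_k)$ to $R'(a_{\pi(1)},\dots,a_{\pi(k)})$.

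First I would produce the decomposition function $q\colon\facts_{\Schema'}(R')\to\facts_{\Schema}(R)$ as the inverse coordinate permutation, $q\big(R'(b_1,\dots,b_k)\big)=R(b_{\pi^{-1}(1)},\dots,b_{\pi^{-1}(k)})$; the types match since $b_{\pi^{-1}(i)}$ lies in $\dom_{\Schema}(A_i)$. This $q$ exactly undoes the reordering, hence is a bijection between $\facts_{\Schema'}(R')$ and $\facts_{\Schema}(R)$ and in particular injective, settling the injectivity part of condition~2. Permuting the factors of a finite Cartesian product is a homeomorphism of the underlying product spaces, so $q$ is continuous (the rest of condition~2); being a homeomorphism between Polish spaces it is a Borel isomorphism, and therefore $q(F)\in\Measurable[F]_{\Schema}(R)$ for every $F\in\Measurable[F]_{\Schema'}(R')$, which is condition~3.

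For condition~1, the key observation is that reordering neither creates nor deletes facts and leaves multiplicities untouched: by construction $\hits_{Q(D)}(f)=\hits_D(q(f))$ for every output fact $f$ and every $D\in\Instances$. Taking $M(n)=\set{n}$ for each $n$ — which trivially satisfies $0\notin M(n)$ for $n>0$ — we obtain $\hits_{Q(D)}(f)=n$ if and only if $\hits_D(q(f))\in M(n)$. All three hypotheses of \cref{lem:qcrit} then hold, and measurability of $Q$ follows.

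There is no real obstacle here: the only step needing attention is that $q$ maps measurable sets to measurable sets, and this is immediate once one notes that coordinate permutation is a homeomorphism of the Polish fact spaces and hence a Borel isomorphism. In effect the reordering query is an isomorphism of state spaces that preserves the bag structure, so \cref{lem:qcrit} applies in the simplest possible way with $M(n)=\set{n}$.
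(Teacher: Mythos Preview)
Your proof is correct. The single-relation version of \cref{lem:qcrit} applies cleanly: the inverse coordinate permutation $q$ is a homeomorphism between the Polish fact spaces, hence injective, continuous, and Borel-measurable in both directions, and the choice $M(n)=\set{n}$ captures the multiplicity-preserving nature of reordering.

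The paper takes a slightly more direct route and does \emph{not} invoke \cref{lem:qcrit}. It first reduces to transpositions (so that the bijection is its own inverse), defines $q$ in the \emph{forward} direction $\facts_{\Schema}(R)\to\facts_{\Schema'}(R')$, observes that $q^{-1}$ sends measurable rectangles to measurable rectangles (hence $q^{-1}(F)$ is measurable for every $F\in\Measurable[F]_{\Schema'}(R')$), and concludes immediately from $Q^{-1}(\CEvent'(F,n))=\CEvent(q^{-1}(F),n)$. Your argument packages the same core observation---coordinate permutation is a Borel isomorphism and multiplicities are preserved---through the general machinery of \cref{lem:qcrit}, which is perfectly legitimate since that lemma is already available. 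The paper's version is marginally more elementary (it avoids the Cauchy-sequence approximation hidden inside the proof of \cref{lem:qcrit}, which is overkill here), while yours handles the general permutation in one shot without the reduction to transpositions.
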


\begin{proof}
	Recall that any permutation can be expressed as a composition of 
	transpositions. Thus, we only consider the case where two attributes,
	say $A$ and $B$, switch places within the type of some relation $R\in
	\Rel$. Let $q$ be the function that maps $\facts_{\Schema}(R)$ to 
	$\facts_{\Schema'}(R')$ by swapping the entries for attribute
	$A$ and $B$. Obviously, under $q$, the preimage of a measurable 
	rectangle in $\Measurable[F]_{\Schema'}(R)$ is a measurable rectangle
	itself. As $\hits_{Q(D)}(F)=n$ if and only if $\hits_{D}(q^{-1}(F))=n$,
	$Q$ is measurable.
\end{proof}

\begin{lemma}\label{lem:selection-measurable}
	The query $Q = \select[{(A_1,\dots,A_k)\in\Event[B]}](R)$ is measurable
	for all $R\in\Rel$, all pairwise distinct attributes $A_1,\dots,A_k
	\in\type_{\Schema}(R)$ and all Borel subsets $\Event[B]$ of 
	$\prod_{i=1}^k \dom_{\Schema}(A_i)$.
\end{lemma}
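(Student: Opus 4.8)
The plan is to invoke \cref{obs:simplifying} and reduce to showing that the preimage of every output counting event lies in $\Measurable$. The guiding idea is that selection, under bag semantics, keeps each qualifying fact with its \emph{full} multiplicity and deletes every other fact, so the whole query is governed by a single set $S$ of surviving facts. Note that this does not fit the two-relation template of \cref{lem:qcrit} directly: the per-fact behaviour ``keep $f$ with its multiplicity or drop it entirely'' depends on \emph{which} fact $f$ is (namely on whether $f$ satisfies the predicate), and such a fact-dependent case distinction cannot be absorbed into a single fact-independent set $M(n)$ with a continuous $q_1$. I would therefore argue directly by isolating $S$.

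Concretely, I first introduce the coordinate projection $p\colon\dom_\Schema(R)\to\prod_{i=1}^k\dom_\Schema(A_i)$ sending a tuple to its $(A_1,\dots,A_k)$-entries, and set $S\coloneqq R\big(p^{-1}(\Event[B])\big)\subseteq\facts_\Schema(R)$, the set of $R$-facts satisfying the selection condition. The crucial observation is that $S$ is measurable: as a coordinate projection in a finite product of Polish spaces, $p$ is continuous, so $p^{-1}(\Event[B])$ is Borel because $\Event[B]$ is; and since $x\mapsto R(x)$ is a Borel isomorphism of $\dom_\Schema(R)$ onto $\facts_\Schema(R)$, the image $S$ is measurable in $\facts_\Schema(R)$. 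This is precisely where selection is easier than projection: here $\Event[B]$ enters only through the \emph{preimage} $p^{-1}(\Event[B])$, which always stays Borel, whereas the problematic forward projection underlying \cref{ex:nonborel} need not.

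With $S$ in hand, the remainder is bookkeeping on multiplicities. Identifying the output facts of $R'$ with the corresponding $R$-facts, the bag semantics of selection gives $\hits_{Q(D)}(f)=\hits_D(f)$ for $f\in S$ and $\hits_{Q(D)}(f)=0$ otherwise, so for every measurable $F\subseteq\facts_{\Schema'}(R')$ and every $D\in\Instances$ one obtains $\hits_{Q(D)}(F)=\hits_D(F\cap S)$. Since $S$ is measurable, $F\cap S$ is measurable, and therefore $Q^{-1}\big(\CEvent'(F,n)\big)=\Event[C](F\cap S,n)$ is a counting event of $(\Instances,\Measurable)$, hence lies in $\Measurable$. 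By \cref{obs:simplifying} this establishes the measurability of $Q$.

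I do not expect a genuine obstacle here; the one point that demands care is recognising that selection uses $\Event[B]$ only via the Borel-preserving preimage $p^{-1}(\Event[B])$, so that $S$, and with it every $F\cap S$, remains in the relevant $\sigma$-algebra. Everything else is the identity $\hits_{Q(D)}(F)=\hits_D(F\cap S)$ together with the reduction to counting events from \cref{obs:simplifying}.
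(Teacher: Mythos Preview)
Your argument is correct and is essentially the paper's own proof: both isolate the measurable set of surviving facts (your $S$ is exactly the paper's $F_{\Event[B]}$) and then observe $Q^{-1}(\CEvent'(F,n))=\CEvent(F\cap S,n)$. The only cosmetic difference is that the paper first invokes \cref{lem:reordering} to write $F_{\Event[B]}$ as a measurable rectangle $\set{R}\times\Event[B]\times\dom_\Schema(A_{k+1})\times\cdots\times\dom_\Schema(A_m)$, whereas you obtain measurability of $S$ directly from continuity of the coordinate projection $p$; both justifications are equivalent.
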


\begin{proof} Fix some $F\in\Measurable[F]_{\Schema'}(R')$ and $n\in\N$. By 
	\cref{lem:reordering}, we may assume that $\type_{\Schema}(R) = 
	(A_1,\dots,A_m)$ where $m\geq k$. Let $F_{\Event[B]}\coloneqq \set{R}
	\times\Event[B]\times\dom_{\Schema}(A_{k+1})\times\cdots\times
	\dom_{\Schema}(A_m)$. Note that $F_{\Event[B]}\in
	\Measurable[F]_{\Schema}(R)$. (This is a consequence of 
	\cref{fac:borel-product}.) As $\hits_{Q(D)}(F)=n$ if and only if
	$n=\hits_{Q(D)}(F\cap F_{\Event[B]}) = \hits_D(F\cap F_{\Event[B]})$, $Q$
	is measurable.
\end{proof}

\begin{example}
	Assume that $\dom_{\Schema}(A) = \dom_{\Schema}(B) = \R$ and both $A$
	and $B$ appear in the type of $R\in\Rel$. It is well-known (and can
	be shown by standard arguments) that the sets $\Event[B]_{=}\coloneqq
	\set{(x,y)\in\R^2\colon x=y}$ and $\Event[B]_{<}\coloneqq\set{(x,y)\in
	\R^2\colon x<y}$ are Borel in $\R^2$. Thus $\select[A=B](R)\coloneqq
	\select[{(A,B)\in\Event[B]_{=}}](R)$ and $\select[A<B](R)\coloneqq
	\select[{(A,B)\in\Event[B]_{<}}](R)$ are measurable by 
	\cref{lem:selection-measurable}.
\end{example}

\begin{lemma}\label{lem:projection-measurable}.
	The query $Q = \project[A_1,\dots,A_k](R)$ is measurable for all $R\in
	\Rel$ and all mutual distinct $A_1,\dots,A_k\in\type_{\Schema}(R)$.
\end{lemma}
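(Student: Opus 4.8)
The plan is to read off measurability directly from the bag semantics of projection via a counting identity, exactly in the spirit of the proof of \cref{lem:selection-measurable}, rather than through \cref{lem:qcrit}. The latter does not apply here: the fibre of the projection over a single output fact is in general a whole (uncountable) family of input facts, not a single one, so there is no injective fact-to-fact map $q$ to feed into that lemma. By \cref{lem:reordering} I may assume $\type_{\Schema}(R) = (A_1,\dots,A_m)$ with $m\geq k$, so that the projection retains the first $k$ coordinates. I then introduce the fact-level projection $p\colon\facts_{\Schema}(R)\to\facts_{\Schema'}(R')$ given by $p\big(R(x_1,\dots,x_m)\big) = R'(x_1,\dots,x_k)$. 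As a coordinate projection between products of Polish spaces, $p$ is continuous, hence Borel measurable, so that $p^{-1}(F)\in\Measurable[F]_{\Schema}(R)$ for every $F\in\Measurable[F]_{\Schema'}(R')$.

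The central observation is the identity $\hits_{Q(D)}(F) = \hits_D\big(p^{-1}(F)\big)$, valid for every instance $D\in\Instances$ and every $F\subseteq\facts_{\Schema'}(R')$. Since projection in bag semantics does not deduplicate, the multiplicity of an output fact $f$ is the sum of the multiplicities of all input facts projecting to $f$, i.e.\ $\hits_{Q(D)}(f) = \hits_D\big(p^{-1}(\set{f})\big)$. Summing over $f\in F$, and using that $p^{-1}(F) = \bigcup_{f\in F} p^{-1}(\set{f})$ is a disjoint union (each input fact projects to exactly one output fact) together with the additivity of $\hits_D$, yields the identity. Although $F$ may be uncountable, only finitely many facts carry positive multiplicity in the finite bag $D$, so every sum involved has only finitely many nonzero terms.

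Combining the two pieces, for any counting event $\CEvent'(F,n)$ of $(\Instances',\Measurable')$ I get $Q^{-1}\big(\CEvent'(F,n)\big) = \CEvent\big(p^{-1}(F),n\big)$, which is a counting event of $(\Instances,\Measurable)$ precisely because $p^{-1}(F)$ is Borel. By \cref{obs:simplifying}, $Q$ is therefore measurable.

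The one point that deserves genuine care—and the reason projection looks deceptively dangerous—is the contrast with \cref{ex:nonborel}: the \emph{forward} image of a Borel fact set under projection can fail to be Borel, which is exactly what breaks the query $Q_B$ there. Here, however, query measurability is governed entirely by \emph{preimages} of the continuous fact-level map $p$, and preimages of Borel sets under continuous (indeed, merely measurable) maps are always Borel. Thus the pathology of Example~\ref{ex:nonborel} never enters, and no real topological obstacle arises; the only subtlety is to recognize that the bag-semantic summation of multiplicities is captured cleanly by counting hits in the preimage fibre.
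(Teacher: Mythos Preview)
Your proof is correct and follows essentially the same route as the paper: after invoking \cref{lem:reordering}, the paper defines the set $F_{\Event[B]}\coloneqq\set{R}\times\Event[B]\times\dom_{\Schema}(A_{k+1})\times\cdots\times\dom_{\Schema}(A_m)$ and observes $\hits_{Q(D)}(F)=n\Leftrightarrow\hits_D(F_{\Event[B]})=n$, which is exactly your identity $Q^{-1}(\CEvent'(F,n))=\CEvent(p^{-1}(F),n)$ written out concretely (your $p^{-1}(F)$ is their $F_{\Event[B]}$). Your additional commentary on why \cref{lem:qcrit} does not apply and on the contrast with \cref{ex:nonborel} is accurate but not present in the paper's terse proof.
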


\begin{proof} Again, fix some $F\in\Measurable[F]_{\Schema'}(R')$ and $n\in\N$.
	Note that $F$ is of the shape $\set{R'}\times\Event[B]$ where 
	$\Event[B]$ is Borel in $\dom_{\Schema'}(R')=\prod_{i=1}^k 
	\dom_{\Schema}(A_k)$. By \cref{lem:reordering}, we may again assume
	that $\type_{\Schema}(R) = (A_1,\dots,A_m)$ with $m\geq k$. Define
	$F_{\Event[B]}$ exactly like in the proof of 
	\cref{lem:selection-measurable}: $F_{\Event[B]}\coloneqq\set{R}\times
	\Event[B]\times\dom_{\Schema}(A_{k+1})\times\cdots\times 
	\dom_{\Schema}(A_m)$.  Again, $F_{\Event[B]}\in
	\Measurable[F]_{\Schema}(R)$. Now, we have $\hits_{Q(D)}(F)=n$ if and
	only if $\hits_D(F_{\Event[B]})=n$ and hence, $Q$ is measurable.
\end{proof}

\begin{lemma}\label{lem:crossprod-measurable}
	The query $Q = R_1 \product R_2$ is measurable for all $R_1, R_2 \in
	\Rel$.
\end{lemma}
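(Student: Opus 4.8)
The two maps one would like to feed into \cref{lem:qcrit}, namely the projections $q_1\colon R'(a,b)\mapsto R_1(a)$ and $q_2\colon R'(a,b)\mapsto R_2(b)$, are continuous but \emph{not} injective, so that lemma is unavailable here. Instead I would argue directly and exploit that the cross product acts multiplicatively on multiplicities: under the bag semantics of $\product$ one has $\hits_{Q(D)}\big(R'(a,b)\big)=\hits_D\big(R_1(a)\big)\cdot\hits_D\big(R_2(b)\big)$, hence for every $F\subseteq\facts_{\Schema'}(R')$
\begin{equation*}
	\hits_{Q(D)}(F)=\sum_{R'(a,b)\in F}\hits_D\big(R_1(a)\big)\cdot\hits_D\big(R_2(b)\big)\text.
\end{equation*}
Throughout I identify $\facts_{\Schema'}(R')$ with the product $\facts_{\Schema}(R_1)\product\facts_{\Schema}(R_2)$, whose $\sigma$-algebra $\Measurable[F]_{\Schema'}(R')$ is generated by the measurable rectangles $F_1\product F_2$ by \cref{fac:borel-product}. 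By \cref{obs:simplifying} it suffices to show that $Q^{-1}\big(\CEvent'(F,n)\big)=\set{D\colon\hits_{Q(D)}(F)=n}$ lies in $\Measurable$ for every Borel $F$ and every $n\in\N$; equivalently, that the $\N$-valued map $\Phi_F\colon D\mapsto\hits_{Q(D)}(F)$ is measurable.

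The easy first step is the case of a rectangle $F=F_1\product F_2$ with $F_1\in\Measurable[F]_{\Schema}(R_1)$ and $F_2\in\Measurable[F]_{\Schema}(R_2)$. There the sum factorises as $\hits_{Q(D)}(F_1\product F_2)=\hits_D(F_1)\cdot\hits_D(F_2)$, so that
\begin{equation*}
	Q^{-1}\big(\CEvent'(F_1\product F_2,n)\big)
	=\bigcup_{k_1k_2=n}\big(\CEvent(F_1,k_1)\cap\CEvent(F_2,k_2)\big)\text,
\end{equation*}
which is a finite union of intersections of counting events of $(\Instances,\Measurable)$ and hence measurable. In other words $\Phi_{F_1\product F_2}$ is the product of the two measurable maps $D\mapsto\hits_D(F_1)$ and $D\mapsto\hits_D(F_2)$.

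The real work is passing from rectangles to arbitrary Borel $F$, and the key observation is that for each fixed $D$ the set function $F\mapsto\hits_{Q(D)}(F)$ is a finite measure on $\facts_{\Schema'}(R')$ (the counting measure of the finite bag $Q(D)$). I would therefore run a $\pi$--$\lambda$ argument. Let $\Measurable[L]$ be the family of Borel $F$ for which $\Phi_F$ is measurable. The measurable rectangles form a $\pi$-system contained in $\Measurable[L]$ (by the previous step) that generates $\Measurable[F]_{\Schema'}(R')$, and the full space $\facts_{\Schema'}(R')$ is itself such a rectangle. That $\Measurable[L]$ is a $\lambda$-system follows from the measure property: for $G\subseteq F$ we have $\Phi_{F\setminus G}=\Phi_F-\Phi_G$, and for an increasing chain $F_1\subseteq F_2\subseteq\cdots$ we have $\Phi_{\bigcup_iF_i}=\sup_i\Phi_{F_i}$; since differences and suprema of $\N$-valued measurable functions are measurable (their level sets being countable unions, respectively intersections, of the sets $\Phi_{F_i}^{-1}(m)$), $\Measurable[L]$ is closed under proper differences and increasing countable unions. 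By the $\pi$--$\lambda$ theorem, $\Measurable[L]$ contains all of $\Measurable[F]_{\Schema'}(R')$, which proves the claim. The only genuine obstacle is conceptual rather than computational: one must recognise that the injective-decomposition criterion of \cref{lem:qcrit} has to be abandoned and replaced by this measure-theoretic extension; once the rectangle case is established, the extension is routine.
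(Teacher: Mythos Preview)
Your argument is correct and takes a genuinely different---and considerably shorter---route than the paper.

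Both you and the paper begin identically: for a measurable rectangle $F=F_1\times F_2$ one has $\hits_{Q(D)}(F)=\hits_D(F_1)\cdot\hits_D(F_2)$, so the preimage of $\CEvent'(F,n)$ is a countable union of intersections of counting events (one small slip: for $n=0$ the union over $k_1k_2=0$ is countable rather than finite, but this is harmless). From there the approaches diverge. The paper restricts to \emph{$k$-coarse} instances (those whose $R_i$-facts are at mutual distance $\geq 1/k$), proves that this restriction is measurable, and then runs a good-sets argument on the family $\Event[F]_k$ of Borel $F$ for which $\Event(F\cap S_r(t),n,k)$ is measurable for all small balls $S_r(t)$; the coarseness bound guarantees that at most one fact of $Q(D)$ lies in each such ball, which is what makes complements and finite intersections go through. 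Only at the end does one take the union over $k$. Your argument avoids all of this metric machinery by observing that for each fixed $D$ the map $F\mapsto\hits_{Q(D)}(F)$ is a finite measure; this immediately gives $\Phi_{F\setminus G}=\Phi_F-\Phi_G$ and $\Phi_{\bigcup_i F_i}=\sup_i\Phi_{F_i}$ for increasing chains, so the family $\Measurable[L]=\set{F:\Phi_F~\text{measurable}}$ is a $\lambda$-system containing the $\pi$-system of rectangles, and Dynkin's theorem finishes.

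What your approach buys is brevity and independence from the Polish structure beyond \cref{fac:borel-product}. What the paper's approach buys is a template (localise via small balls, exploit that coarse instances see at most one fact per ball) that is reused elsewhere in the paper---for instance in the proof of \cref{lem:qcrit} and of \cref{lem:groupbyagg}---so it fits their overall toolkit even if it is heavier than necessary for this particular lemma.
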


First we note that this turns out to be more involved than it seems on first
sight. The straight-forward approach would be to take a counting event
$\CEvent(F,n)$ in the output measurable space and to decompose $F$ into its
\enquote{left and right parts} $F_1\subseteq\facts_{\Schema}(R_1)$ and $F_2
\subseteq\facts_{\Schema}(R_2)$ such that the instances from the preimage of
the query are exactly those with $\hits_D(F_1)=n_1$ and $\hits_D(F_2)=n_2$ such
that $n_1\cdot n_2=n$, similar to the setting of \cref{lem:qcrit}. This 
approach does not settle the case since the sets $F_1$ and $F_2$ need not be
measurable in general (see \cite[Theorem 4.1.5]{Srivastava1998}; we used the 
same argument in \cref{ex:nonborel}) which in particular violates the second 
precondition of \cref{lem:qcrit}.

\begin{sketch}
	Using renaming, we may assume that the types of $R_1$ and $R_2$ are
	disjoint in terms of attribute names. Consider $F\in\Measurable[F]_
	{\Schema'}(R')$ and $n\in\N$. If $F$ is a measurable rectangle
	$F = F_1\times F_2$, it is easy to see that the na\"ive approach
	sketched above works via $\hits_{Q(D)}(F)=n$ if and only if
	$\hits_D(F_1)\cdot\hits_D(F_2)= n$.\par
	In the general case of $F$ being an arbitrary Borel set, we consider
	the \emph{$k$-coarse} preimage of $\CEvent'(F,n)$ first. These are the
	database instances from $\Instances$ whose minimal inter-tuple
	distance is at least $\frac{1}{k}$ for some fixed Polish metrics. One
	can show that these $k$-coarse preimages of the query are measurable
	for all $F,n$ and $k$. As the union of these preimages over all
	positive integers $k$ is exactly the preimage of $\CEvent'(F,n)$, $Q$
	is measurable. The details of this proof are shown in
	\cref{app:crossprod}.
\end{sketch} \par

Altogether, within the last three sections, we have established the measurability of all
the (bag) relational algebra operators from \cref{tab:balg} and thus have
proven \cref{thm:balg}. Of course any additional operator that is expressible 
by a combination of operations from \cref{tab:balg} is immediately measurable
as well, including for example \emph{natural joins} $Q=R_1\natjoin R_2$ or
selections where the selection predicate is a Boolean combinations of
predicates of the shape $(A_1,\dots,A_k)\in\Event[B]$.

\section{Aggregate Queries}\label{sec:aggregate-queries}
In this section, we study various kinds of aggregate operators. Let $U$ and $V$
be standard Borel spaces. An \emph{aggregate operator} (or \emph{aggregator})
from $U$ to $V$ is a mapping $\Phi$ that sends bags of elements of $U$ to
elements of $V$: $\Phi\colon\Bags{U}{<\omega} \to V$. Every such aggregator
$\Phi$ gives rise to a query $Q=\varpi_\Phi(R)$ defined by $Q(D)\coloneqq
\bag{R'(v)}$ for $v\coloneqq\Phi(\bag{ u \colon R(u)\in D })$. (The 
notation we use for aggregation queries is loosely based on that of 
\cite{Fink+2012}.) Observe that for every instance $D$, $\hits_{Q(D)}\big(R'(v)
\big) = 1$ if and only if $\Phi(\bag{ u \colon R(u)\in D })=v$ (and $0$ 
otherwise). It is easy to see that $Q=\varpi_\Phi(R)$ is a measurable query
whenever $\Phi$ is measurable w.\,r.\,t.\ the counting $\sigma$-algebra on $\Bags{U}
{<\omega}$: we have $\hits_{Q(D)}(F)=1$ if and only if $D\in\set{R}\times
\Phi^{-1}(\set{v\colon R'(v)\in F})$ (and $\hits_{Q(D)}(F) = 0$ otherwise).

\begin{example}\label{ex:commonaggregates} The following are the most common 
	aggregate operators:
	\begin{itemize}
		\item (Count) $\CNT(\bag{a_1,\dots,a_n}) = n$ and $\CNTd(
			\bag{a_1,\dots,a_n})=\size{\set{a_1,\dots,a_n}}$.
		\item (Sum) $\SUM(\bag{a_1,\dots,a_n}) = a_1 + \dots + a_n$
			where $a_i$ are (for instance) real numbers.
		\item (Minimum\,/\,Maximum) $\MIN(\bag{a_1,\dots,a_n}) =
			\min\set{a_1,\dots,a_n}$ and $\MAX(\bag{a_1,\dots,a_n})
			= \max\set{a_1,\dots,a_n}$ for ordered domains.
		\item (Average) $\AVG(\bag{a_1,\dots,a_n}) = \frac{1}{n}
			\big(a_1+\dots+a_n\big)$ where the $a_i$ might again be 
			real numbers.
	\end{itemize}
\end{example}

Note that $\varpi_{\CNT}$ and $\varpi_{\CNTd}$ are trivially measurable within
our framework by the usage of the counting $\sigma$-algebra (and the 
measurability of deduplication for $\CNTd$).

\begin{lemma}\label{lem:aggaux}
	For all $m\in\N$, let $\phi_m\colon U^m\to V$ be a \emph{symmetric}
	function, i.\,e., $\phi_m(u) = \phi_m(u')$ for all $u\in U^m$
	and all permutations $u'$ of $u$. If $\phi_m$ is measurable for all 
	$m$, then $\Phi\colon\Bags{U}{<\omega}\to V$ defined via $\Phi(\bag{u_1,
	\dots,u_m}) \coloneqq \phi_m(u_1,\dots,u_m)$ is measurable w.\,r.\,t. the counting 
	$\sigma$-algebra on $\Bags{U}{<\omega}$.
\end{lemma}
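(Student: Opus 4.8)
The plan is to route the whole argument through the symmetrization map $\sym$, rather than wrestle with the generating counting events $\Event[C](\Event[X],n)$ directly. First I would record that $\Phi$ is well defined at all only because every $\phi_m$ is symmetric: two representatives $(u_1,\dots,u_m)$ and $(u_1',\dots,u_m')$ of the same bag differ by a permutation of $\set{1,\dots,m}$, so $\phi_m$ assigns them the same value in $V$, and hence $\Phi(\bag{u_1,\dots,u_m})$ does not depend on the chosen ordering. It then remains to prove measurability, i.e.\ that $\Phi^{-1}(B)\in\Measurable[C]_U$ for every measurable $B\subseteq V$, where $\Measurable[C]_U$ denotes the counting $\sigma$-algebra on $\Bags{U}{<\omega}$ and I write $\Measurable[U]$ for the $\sigma$-algebra of the standard Borel space $U$.

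The crucial ingredient I would invoke is the equivalent description of the counting $\sigma$-algebra recalled in \cref{ssec:point-processes}: $\Measurable[C]_U$ coincides with the \emph{symmetrization $\sigma$-algebra}, that is, the family of all $A\subseteq\Bags{U}{<\omega}$ whose preimage $\sym^{-1}(A)$ is measurable in $U^{<\omega}=\bigcup_{m\in\N}U^m$ with respect to the $\sigma$-algebra generated by the product $\sigma$-algebras $(\Measurable[U]^{\otimes m})_{m\in\N}$. Thus $A\in\Measurable[C]_U$ if and only if $\sym^{-1}(A)$ is measurable in $U^{<\omega}$, and it suffices to show that $\sym^{-1}\big(\Phi^{-1}(B)\big)=(\Phi\circ\sym)^{-1}(B)$ is measurable in $U^{<\omega}$ for every measurable $B\subseteq V$.

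The key observation would then be that $\Phi\circ\sym$ is nothing but the family $(\phi_m)_m$ glued together: for $(u_1,\dots,u_m)\in U^m$ one has $(\Phi\circ\sym)(u_1,\dots,u_m)=\Phi(\bag{u_1,\dots,u_m})=\phi_m(u_1,\dots,u_m)$, so the restriction of $\Phi\circ\sym$ to the component $U^m$ equals $\phi_m$. Since $U^{<\omega}$ carries the disjoint-union $\sigma$-algebra generated by the $\Measurable[U]^{\otimes m}$ --- for which a set $C$ is measurable exactly when each $C\cap U^m$ lies in $\Measurable[U]^{\otimes m}$ --- a map out of $U^{<\omega}$ is measurable if and only if every one of its restrictions to the $U^m$ is. These restrictions are precisely the $\phi_m$, which are measurable by hypothesis; hence $\Phi\circ\sym$ is measurable, $(\Phi\circ\sym)^{-1}(B)$ is measurable in $U^{<\omega}$, and by the equivalence above $\Phi^{-1}(B)\in\Measurable[C]_U$. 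This closes the argument.

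I do not anticipate a genuine obstacle here: once the problem is transported along $\sym$, everything reduces to the defining property of the symmetrization $\sigma$-algebra together with the elementary principle that measurability out of a countable disjoint union is tested componentwise. The only two points that warrant care are correctly invoking the equivalence of the counting and symmetrization $\sigma$-algebras (established in \cref{ssec:point-processes}) and observing that the symmetry of the $\phi_m$ is exactly what makes $\Phi$ a well-defined function on bags. A direct verification working with the generating counting events would also be possible, but substantially more cumbersome, which is precisely why I would prefer the $\sym$-based route.
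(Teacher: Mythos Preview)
Your proof is correct and follows essentially the same approach as the paper: both arguments reduce measurability of $\Phi$ to the measurability of the $\phi_m$ via the equivalence between the counting $\sigma$-algebra and the symmetrization $\sigma$-algebra recorded in \cref{ssec:point-processes}. The paper works level-by-level, noting that $\phi_m^{-1}(\Event[V])$ is a symmetric Borel set in $U^m$ and invoking the one-to-one correspondence between such sets and measurable sets in $\Bags{U}{m}$ (citing \cite{Macchi1975}), whereas you package the same idea globally by composing with $\sym$ on all of $U^{<\omega}$; the content is the same.
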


\begin{proof}
	It suffices to show that the restriction $\Phi_m$ of $\Phi$ to 
	$\Bags{U}{m}$ is measurable for all $m\in\N$. If $\Event[V]$ is Borel
	in $V$, then $\phi_m^{-1}(\Event[V])$ is Borel in $U^m$ as $\phi_m$ is
	measurable.  Moreover, since $\phi_m$ is symmetric, $\phi_m^{-1}
	(\Event[V])$ is a symmetric set (i.\,e. if $\bar u\in\phi_m^{-1}
	(\Event[V])$, then every permutation of $u$ is in $\phi_m^{-1}
	(\Event[V])$ as well). But then $\Phi_m^{-1}(\Event[V])$ is measurable
	since there is a one-to-one correspondence between the measurable sets
	of $\Bags{U}{m}$ and the symmetric Borel sets of $U^m$ \cite[Theorem\,1]
	{Macchi1975}.
\end{proof}

As an example application of this lemma we note that all the mappings $\Phi$
that were introduced in \cref{ex:commonaggregates} are measurable---the related
mappings $\phi_m$ of \cref{lem:aggaux} are all continuous and thus measurable
in all of the cases.\par

A concept closely tied to aggregation is \emph{grouping}. Suppose we want to
group a relation $R$ by its attributes $A_1,\dots,A_k$ and perform the 
aggregation only over the values of attribute $A$, and separately for every
distinct $(A_1,\dots,A_k)$-entry in $R$. Without loss of generality, we assume
that the type of $R$ is $A_1\times\dots\times A_k\times A$. We define
a query $Q=\varpi_{A_1,\dots,A_k,\Phi(A)}(R)$ by
\begin{equation*}
	Q(D)=\bag{R'(\bar u,v) \colon
		R(\bar u)\in\pi_{A_1,\ldots,A_k}(R(D))~\text{and}~
	v=\Phi(\bag{u\colon R(\bar u,u)\in D})}\text.
\end{equation*}
                                                       
\begin{lemma}\label{lem:groupbyagg}
	Let $\type_{\Schema}(R)=A_1\times\dots\times A_k\times A$ and
	$U=\dom_{\Schema}(A)$. If $\Phi\colon\Bags{U}{<\omega}\to V$ is
	measurable (with $U$ and $V$ standard Borel), then 
	$\varpi_{A_1,\dots,A_k,\Phi(A)}(R)$ is a measurable query.
\end{lemma}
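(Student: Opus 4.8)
The plan is to reduce, step by step, to the single genuinely new difficulty---counting the distinct groups whose aggregate falls into a prescribed Borel set---and to settle that by the same ``countable approximation'' technique used for \cref{lem:qcrit} and the cross product (\cref{lem:crossprod-measurable}), exploiting one structural feature special to grouping. By \cref{obs:simplifying} it suffices to show $Q^{-1}\big(\CEvent'(F,n)\big)\in\Measurable$ for every Borel $F\subseteq\facts_{\Schema'}(R')$ and every $n$. Since the measurable rectangles $F=\set{R'}\times\Event[G]\times\Event[E]$ with $\Event[G]$ Borel in $\prod_{i=1}^k\dom_{\Schema}(A_i)$ and $\Event[E]$ Borel in $V$ form a $\pi$-system generating $\Measurable[F]_{\Schema'}(R')$, and counting events over a generating ring generate the counting $\sigma$-algebra, I would first argue that it is enough to treat such rectangles. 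For a rectangle, an output fact $R'(\bar u,v)$ lies in $F$ precisely when $\bar u\in\Event[G]$ and $v\in\Event[E]$.

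Next I would peel off the group-coordinate condition $\bar u\in\Event[G]$ using selection. Writing $\mathrm{grp}\colon\facts_{\Schema}(R)\to\prod_{i=1}^k\dom_{\Schema}(A_i)$ for the (continuous) projection onto the grouping attributes, all facts of a single group share the same value of $\mathrm{grp}$; hence $D_{\Event[G]}\coloneqq\select[{(A_1,\dots,A_k)\in\Event[G]}](R)(D)$ retains exactly the facts of those groups with $\bar u\in\Event[G]$, leaving each such group's bag of $A$-values unchanged. As selection is measurable (\cref{lem:selection-measurable}) and $\hits_{Q(D)}(F)$ equals the number of groups of $D_{\Event[G]}$ whose aggregate lies in $\Event[E]$, the problem reduces to the following core claim: for every Borel $\Event[E]\subseteq V$ and every $m\in\N$, the set of instances having \emph{exactly} $m$ distinct groups $\bar u$ with $\Phi\big(\bag{u\colon R(\bar u,u)\in D}\big)\in\Event[E]$ is measurable. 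Since ``exactly $m$'' is the difference of the ``at least $m$'' and ``at least $m+1$'' events, I would only establish the latter.

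For the ``at least $m$'' event I would mimic the approximation scheme of \cref{lem:qcrit}: fix a countable dense set $W$ of $\prod_{i=1}^k\dom_{\Schema}(A_i)$ and express the event as the existence of $m$ Cauchy sequences $(w_i^t)_t$ in $W$ and a threshold $t_0$ such that, for all $t>t_0$, the balls $B_{1/t}(w_i^t)$ are pairwise disjoint, each contains at least one group coordinate of $D$, and the bag of $A$-values of the facts whose group coordinate falls in $B_{1/t}(w_i^t)$ has $\Phi$-image in $\Event[E]$. The decisive point---and the reason the aggregate can be handled at all---is that, because every fact of a group carries the \emph{same} group coordinate, once $t$ is large enough that a ball isolates a single group $\bar u$ (which happens eventually, since $D$ is finite and its distinct group coordinates are then pairwise bounded away), the facts captured by that ball are \emph{exactly} the facts of $\bar u$. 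Thus the captured bag equals the true group bag $\bag{u\colon R(\bar u,u)\in D}$ \emph{exactly}, not merely approximately, so its aggregate is $\Phi(\text{group bag})$ on the nose. The map sending $D$ to the captured bag is $\project[A]\circ\select[{\mathrm{grp}\in B}]$, which is measurable and lands in $\Bags{U}{<\omega}$; composing with the measurable $\Phi$ (\cref{lem:aggaux}) shows that ``the captured bag has $\Phi$-image in $\Event[E]$'' is a measurable condition on $D$ for each fixed ball, while ``contains a group coordinate'' is the complement of a counting event for the Borel fact set $\mathrm{grp}^{-1}(B)$. As in \cref{lem:qcrit}, the whole statement is then a countable combination of such measurable conditions, and completeness of $\prod_{i=1}^k\dom_{\Schema}(A_i)$ together with the finiteness of $D$ yields both directions of the equivalence between the formula and ``at least $m$ groups with aggregate in $\Event[E]$''.

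The hard part is neither the reduction nor the bookkeeping, but the fact that, unlike in \cref{lem:qcrit}, a single output fact is produced from \emph{several} input facts, so membership of an output fact in $F$ cannot be routed through preimages of individual input facts (there is no injective continuous $q_i$ to invoke). The escape is precisely the exact-recovery observation above: grouping partitions the facts by a shared, isolable coordinate, so each group's bag---and hence its aggregate---can be read off exactly from a sufficiently fine ball, after which the already-established measurability of plain aggregation (\cref{lem:aggaux}) and of selection and projection does the rest. I would also note that the measurability of the ``$t$-coarse'' auxiliary events is exactly analogous to the inter-tuple separation bookkeeping already carried out for the cross product (\cref{lem:crossprod-measurable}), so no new idea is needed there.
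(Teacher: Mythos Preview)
Your approach is correct and rests on the same key observation as the paper: since all facts of a group share one group coordinate, a sufficiently small ball in $\prod_j\dom_{\Schema}(A_j)$ captures a group's $A$-bag \emph{exactly}, so the already-established measurability of $\varpi_\Phi$ (together with selection and projection) applies ball-by-ball. Where you diverge is in the packaging. The paper assembles, for each finite tuple $(x_1,\dots,x_n,\epsilon)$ from a countable dense set, an auxiliary query
\[
\tilde Q_{(x_1,\dots,x_n,\epsilon)}
=\textstyle\bigcup_{i=1}^n
\project[\bar A]\big(\select[\bar A\in B_\epsilon(x_i)](R)\big)
\product
\varpi_\Phi\big(\project[A]\big(\select[\bar A\in B_\epsilon(x_i)](R)\big)\big)
\]
which is measurable as a composition of operators already handled, and then argues that $Q$ is a pointwise limit of these $\tilde Q$, invoking \cref{fac:measurablebasics}. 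You instead reduce to rectangular counting events, peel off the $\Event[G]$-constraint via \cref{lem:selection-measurable}, and express ``at least $m$ groups with aggregate in $\Event[E]$'' as a countable formula in the style of \cref{lem:qcrit}. The paper's route is shorter and reuses the limit machinery; yours is more explicit about the preimage and closer in spirit to how \cref{lem:qcrit} is argued. One small point to tighten: measurable rectangles form a $\pi$-system, not a ring, so your reduction step should appeal to a Dynkin/monotone-class argument (using finiteness of instances for additivity of $\hits$) rather than to ``counting events over a generating ring''.
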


\begin{proof}
  Let $Q=\varpi_{A_1,\dots,A_k,\Phi(A)}(R)$ and
  $\bar A = (A_1,\dots,A_k)$. Observe that for every tuple
  $x_1, \dots,x_n,\epsilon$ with
  $x_i\in\prod_{j=1}^k\dom_{\Schema}(A_j)$ and $\epsilon > 0$, the
  following query is a composition of measurable queries and thus
  measurable itself:
	\begin{equation*}
		\tilde Q_{(x_1,\dots,x_n,\epsilon)} = 
		\textstyle\bigcup_{i=1}^n
		\project[\bar A]\big(\select[\bar A\in B_{\epsilon}(x_i)](R)
		\big)
		\product\varpi_\Phi\big(\project[A]
		\big(\select[\bar A\in B_{\epsilon}(x_i)](R)\big)
		\big)\text.
	\end{equation*}
	We have $\hits_{Q(D)}(F) = n$ if and only if there exist pairwise 
	distinct $f_1,\dots,f_n\in F$ such that $Q(D)$ has $1$ hit in each of
	the $f_i$ and nowhere else in $F$. Having $D$ fixed, every $f_i$
	determines the value of the $(A_1,\dots,A_k)$-part of an $R$-fact in
	$D$. Call this tuple $y_i$. We can fix a countable sequence of
	$(n+1)$-tuples $(x_1,\dots,x_n,\epsilon)$ such that (1) all $x_i$ are from a
	countable dense set in $\prod_{j=1}^k\dom_{\Schema}(A_j)$, (2) $d(x_i,y_i) <
	\epsilon$ for some fixed Polish metric, and, (3) $\epsilon\to 0$.
	Then $Q$ is the (pointwise) limit of the $\tilde Q_{(x_1,\dots,x_n,
	\epsilon)}$ and, as such, $Q$ is measurable.
\end{proof}

As noted before, the aggregates of \cref{ex:commonaggregates} easily satisfy
the precondition of \cref{lem:groupbyagg}.

\begin{corollary}
	The query $\varpi_{A_1,\dots,A_k,\Phi}(R)$ with $A_1,\dots,A_k\in
	\type_{\Schema}(R)$ is measurable for all aggregates $\Phi \in 
	\set{\CNT,\CNTd,\SUM,\MIN,\MAX,\AVG}$.
\end{corollary}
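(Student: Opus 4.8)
The plan is to reduce the Corollary to \cref{lem:groupbyagg} and \cref{lem:aggaux} by verifying, for each aggregate in the given list, that the associated aggregator $\Phi\colon\Bags{U}{<\omega}\to V$ is measurable with respect to the counting $\sigma$-algebra. Once measurability of each individual $\Phi$ is established, \cref{lem:groupbyagg} immediately yields the measurability of the grouped query $\varpi_{A_1,\dots,A_k,\Phi}(R)$, since that lemma requires exactly this hypothesis (with $U=\dom_{\Schema}(A)$ and $V$ the standard Borel output domain). So the entire content of the proof is a case check that the six operators meet the precondition.

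For the two counting aggregates, I would not even invoke \cref{lem:aggaux}: the text already observes that $\varpi_{\CNT}$ and $\varpi_{\CNTd}$ are measurable directly from the construction of the counting $\sigma$-algebra (with $\CNTd$ additionally using the measurability of deduplication). Concretely, $\CNT^{-1}(\set{n})$ is the counting event $\CEvent(U,n)$, which is measurable by definition, so $\CNT$ is measurable; and $\CNTd$ factors through the deduplication map followed by $\CNT$, both of which are measurable. For the remaining aggregates $\SUM$, $\MIN$, $\MAX$, $\AVG$, I would apply \cref{lem:aggaux}: in each case the restriction to bags of fixed cardinality $m$ is induced by a symmetric function $\phi_m\colon U^m\to V$, namely $\phi_m(u_1,\dots,u_m)=u_1+\dots+u_m$ for $\SUM$, the coordinatewise minimum or maximum for $\MIN$ and $\MAX$, and $\tfrac{1}{m}(u_1+\dots+u_m)$ for $\AVG$. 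Each $\phi_m$ is symmetric by commutativity of the respective operation, and each is continuous as a map between the relevant Polish spaces (addition, scalar multiplication, and the $\min$/$\max$ functions on an ordered domain such as $\R$ or $\R\cup\set{\pm\infty}$ are all continuous). Continuity gives Borel measurability, so \cref{lem:aggaux} applies and $\Phi$ is measurable on $\Bags{U}{<\omega}$.

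I do not expect a serious obstacle here, since the hard analytic work has already been absorbed into \cref{lem:aggaux} (the correspondence between symmetric Borel sets of $U^m$ and measurable sets of $\Bags{U}{m}$, citing \cite{Macchi1975}) and into \cref{lem:groupbyagg} (the pointwise-limit approximation argument). The only mild points of care are the degenerate cases at $m=0$: $\SUM$ and $\MAX$ of the empty bag must be assigned a fixed value (e.g.\ $0$ and $-\infty$), and $\AVG$ of the empty bag is conventionally undefined or set to a default element of $V$; since \cref{lem:aggaux} treats each cardinality $m$ separately, these conventions cause no difficulty as long as $V$ is chosen to contain the relevant sentinel value (which is why taking $V=\R\cup\set{\pm\infty}$ for $\MIN$, $\MAX$ is convenient). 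Thus the proof is simply: invoke the direct argument for $\CNT$ and $\CNTd$, invoke \cref{lem:aggaux} via continuous symmetric $\phi_m$ for $\SUM$, $\MIN$, $\MAX$, $\AVG$, and then conclude with \cref{lem:groupbyagg}.
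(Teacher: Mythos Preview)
Your proposal is correct and matches the paper's own reasoning exactly: the paper notes just before the corollary that the aggregates of \cref{ex:commonaggregates} satisfy the precondition of \cref{lem:groupbyagg}, having already observed that $\CNT$ and $\CNTd$ are measurable directly from the counting $\sigma$-algebra and that the remaining aggregates are measurable via \cref{lem:aggaux} because the associated $\phi_m$ are continuous. Your additional remarks on the $m=0$ edge cases are a nice point of care that the paper leaves implicit.
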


\section{Datalog Queries}\label{sec:datalog-queries}
In this section, we want to show that our measurability results
extend to datalog queries and in fact all types of queries with
operators based on countable
iterative (or inductive, inflationary, fixed-point) processes. We will
not introduce datalog or any of the related query languages. The
details in the definitions do not matter when it comes to
measurability of the queries. Here, we only consider set PDBs and queries with a set
(rather than bag) semantics. The key observation is the following lemma.

\begin{lemma}\label{lem:countable-union}
  Let $Q_i$, for $i\in\N_+$, be a countable family of measurable
  queries of the same schema such that $Q=\bigcup_{i\ge 1}Q_i$,
  defined by $Q(D)\coloneqq\bigcup_{i\ge 0}Q_i(D)$ for every instance $D$, is
  a well-defined query (that is, $Q(D)$ is finite for every $D$). Then
  $Q$ is measurable.
\end{lemma}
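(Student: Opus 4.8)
The plan is to verify the criterion from \cref{obs:simplifying}: since the counting events generate the output $\sigma$-algebra, it suffices to show that $Q^{-1}\big(\CEvent'(F,n)\big)\in\Measurable$ for every measurable $F\subseteq\facts_{\Schema'}(\Rel')$ and every $n\in\N$. The central observation is that, because we work with set instances and each $Q(D)$ is finite, the finite partial unions $Q^{(N)}\coloneqq\bigcup_{i=1}^N Q_i$ satisfy $Q^{(N)}(D)\uparrow Q(D)$ and in fact \emph{stabilize}: for every $D$ there is an $N$ with $Q^{(N)}(D)=Q(D)$. Consequently $\hits_{Q(D)}(F)=\sup_N\hits_{Q^{(N)}(D)}(F)$, and this supremum of a nondecreasing sequence of nonnegative integers is attained.

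First I would use this to express the preimage as a countable Boolean combination of preimages of counting events under the $Q^{(N)}$. Since $\hits_{Q(D)}(F)=n$ holds exactly when the nondecreasing sequence $\hits_{Q^{(N)}(D)}(F)$ stays bounded by $n$ and reaches $n$, I obtain
\begin{equation*}
  Q^{-1}\big(\CEvent'(F,n)\big)
  = \Big(\bigcap_{N\ge 1}\set{D\colon \hits_{Q^{(N)}(D)}(F)\le n}\Big)
  \cap \Big(\bigcup_{N\ge 1}\set{D\colon \hits_{Q^{(N)}(D)}(F)\ge n}\Big).
\end{equation*}
Each set on the right is a finite Boolean combination of the counting-event preimages $\big(Q^{(N)}\big)^{-1}\big(\CEvent'(F,j)\big)$ for $0\le j\le n$, writing $\set{\hits\le n}=\bigcup_{j=0}^n(\cdots)$ and $\set{\hits\ge n}=\Instances\setminus\set{\hits\le n-1}$. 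Hence the whole expression is measurable as soon as every partial union $Q^{(N)}$ is measurable, and by \cref{fac:measurablebasics} the reduction to the finite case is complete.

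It then remains to prove that a \emph{finite} union of measurable set-queries is measurable, which by induction reduces to the binary case $Q_1\cup Q_2$. Here I would avoid a direct counting-event decomposition: as with the cross product (\cref{lem:crossprod-measurable}), naively splitting $F$ and counting overlaps runs into the non-measurable-diagonal obstruction, because $\hits_{(Q_1(D)\cup Q_2(D))}(F)$ depends on which facts the two outputs share. Instead I would reuse the established relational-algebra machinery: rename the (identical) output relations of $Q_1$ and $Q_2$ to two fresh symbols $R_1',R_2'$ of the same type, assemble the measurable view $V=\set{Q_1,Q_2}$ into the two-relation schema (measurable by the view-assembly construction), and then apply the max-union operator $R_1'\maxunion R_2'$, whose measurability was shown in \cref{sec:relational-algebra}. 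Since both outputs carry only $\set{0,1}$ multiplicities, max-union computes exactly the set union $Q_1(D)\cup Q_2(D)$, and the composition is measurable by \cref{fac:measurablebasics}.

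The main obstacle is precisely this binary union step: the overlap between $Q_1(D)$ and $Q_2(D)$ is genuinely a coincidence/diagonal question of the same flavor that made the cross product subtle, so the argument must route through the max-union operator (which already resolves it) rather than through a naive hit-counting decomposition. The remaining ingredients—the stabilization of the partial unions and the Boolean identity above—are routine once set semantics and the finiteness of $Q(D)$ are exploited.
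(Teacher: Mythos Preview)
Your argument is correct and structurally matches the paper's: both reduce to the finite partial unions $Q^{(N)}=\bigcup_{i=1}^N Q_i$, observe these are measurable (via max-union, exactly as you spell out), and then pass to the limit. The paper's proof is two lines: it asserts that finite unions of measurable queries are measurable and then invokes the pointwise-limit clause of \cref{fac:measurablebasics}, writing simply $Q=\lim_{N\to\infty}Q^{(N)}$.

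The one genuine difference is how the limit step is handled. The paper appeals to closure of measurable maps under pointwise limits, which tacitly presumes a metric on the target instance space (cf.\ the hypothesis in \cref{fac:measurablebasics}). You instead give a direct, elementary argument at the level of counting events, using monotonicity and eventual stabilisation of $\hits_{Q^{(N)}(D)}(F)$ to write $Q^{-1}(\CEvent'(F,n))$ as a countable Boolean combination of $(Q^{(N)})^{-1}(\CEvent'(F,j))$. This is a bit longer but more self-contained: it does not rely on any topology on $\Instances_{\Schema'}$ and makes the role of set semantics and finiteness of $Q(D)$ completely explicit. Your caution about the binary union (routing through $\maxunion$ and view assembly rather than a naive hit-count decomposition) is well placed and is exactly what the paper's one-line ``finite union of measurable queries'' is silently relying on.
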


\begin{proof}
  For every $n\in\N_+$, let $Q^{(n)}\coloneqq \bigcup_{i=1}^nQ_i$. As a finite
  union of measurable queries, $Q^{(n)}$ is measurable. Since $Q=\lim_{n\to
  \infty} Q^{(n)}$, the measurability of $Q$ follows.
\end{proof}

As every datalog query can be written as a countable union of
conjunctive queries, we obtain the following corollary.

\begin{corollary}
  Every datalog query is measurable.
\end{corollary}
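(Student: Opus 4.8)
The plan is to reduce the statement to \cref{lem:countable-union} by exhibiting, for a given datalog query $Q$, a countable family of measurable queries $(Q_i)_{i\geq 1}$ of the same schema with $Q=\bigcup_{i\geq 1}Q_i$ in the pointwise sense, and then to verify that this union is well-defined. First I would recall the classical operational semantics of datalog: evaluating a program $P$ on an input instance $D$ amounts to iterating the immediate consequence operator $T_P$, and the value of the designated output predicate is read off from the least fixed point $\bigcup_{i\geq 0}T_P^i(D)$. Equivalently, $Q$ is semantically equivalent to the infinite union of conjunctive queries obtained by unfolding the recursion to bounded depth; writing $Q_i$ for the query that returns the output predicate after $i$ rounds of $T_P$ (or, equivalently, the union of all expansions of depth at most $i$), we obtain $Q=\bigcup_{i\geq 1}Q_i$.

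Next I would argue that each $Q_i$ is measurable. Every such $Q_i$ is a positive, recursion-free query, hence expressible using only selection, projection, cross product, and union. All of these operators were shown to be measurable in \cref{thm:balg}; since we are working with set semantics here, I would realize set union by the (idempotent) max-union $\maxunion$ and postcompose projections and joins with deduplication $\dedupe$ — both again covered by \cref{thm:balg} — so that all intermediate relations stay set-valued. By closure of measurable queries under composition (\cref{obs:simplifying}), every $Q_i$ is therefore measurable.

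It then remains only to check the well-definedness hypothesis of \cref{lem:countable-union}, namely that $Q(D)$ is finite for every instance $D$. This is immediate in our setting: a database instance $D$ is finite, so its active domain is a finite subset of the (possibly uncountable) attribute domains; consequently the fixed-point iteration stabilizes after finitely many steps, and the output relation, having fixed arity over a finite active domain, is a finite set of facts. With all hypotheses of \cref{lem:countable-union} satisfied, the measurability of $Q$ follows. I expect the only genuinely load-bearing steps to be the classical decomposition of datalog into a countable union of conjunctive queries and this finiteness argument: the measure-theoretic content is entirely absorbed by \cref{lem:countable-union} together with the relational-algebra results, so the one point deserving care is confirming that the pointwise union is finite over uncountable domains — which, reassuringly, reduces to the finiteness of the active domain of a finite instance rather than to any property of the domains themselves.
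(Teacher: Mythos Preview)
Your proposal is correct and follows essentially the same route as the paper: decompose the datalog query as a countable union of conjunctive (hence relational-algebra, hence measurable) queries and invoke \cref{lem:countable-union}. The paper's own argument is a single sentence to this effect; you have simply filled in the details it leaves implicit, including the set-semantics bookkeeping and the finiteness check via the active domain.
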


The same is true for queries in languages like inflationary datalog or least
fixed-point logic. For partial datalog / fixed-point logic, we cannot
directly use Lemma~\ref{lem:countable-union}, but a slightly more
complicated argument still based on countable limits works there
as well.

\section{Beyond Possible Worlds Semantics}
\label{sec:beyond-possible-worlds-semantics}
In the literature on probabilistic databases, and motivated by real world
application scenarios, also other kinds of queries have been investigated that
have no intuitive description in the possible worlds semantics framework. A
range of such queries is surveyed in \cite{Aggarwal+2009,Wang+2013}. The 
reason for the poor integration into possible worlds semantics is because such
queries lack a sensible interpretation on single instances that could be lifted
to PDB events. Instead, they directly refer to the probability
space of all instances.

Notable examples of such queries (cf. \cite{Koch+2008,Aggarwal+2009,Wang+2013}) 
are:
\begin{itemize}
	\item \emph{probabilistic threshold queries} that intuitively return
		a deterministic table containing only those facts which have a 
		marginal probability over some specified threshold;
	\item \emph{probabilistic top-$k$-queries} that intuitively return a
		deterministic table containing the $k$ most probable facts; 
	\item \emph{probabilistic skyline queries}~\cite{Pei+2007} that consider
		how different instances compare to each other with respect to
		some notion of \emph{dominance}; and
	\item \emph{conditioning}~\cite{Koch+2008} the
		probabilistic database to some event.
\end{itemize}
Note that the way we informally explained the first two queries above is only 
sensible if the space of facts is discrete. In a continuous 
setting, we interpret these queries with respect to a suitable countable 
partition of the fact space into measurable sets.

Let $\PDBs_{\Schema}$ denote the class of probabilistic databases of schema
$\Schema$. Note that all PDBs in $\PDBs_{\Schema}$ have the same instance
measurable space $(\Instances,\Measurable)$. Queries and, more
generally, views of
input schema $\Schema$ and output schema $\Schema'$ are now mappings
$V\colon\PDBs_{\Schema}\to\PDBs_{\Schema'}$.\par

We classify views in the following way:
\begin{definition}\label{def:types}
	Let $V\colon\PDBs_{\Schema}\to\PDBs_{\Schema'}$ with $V\colon
	\pdb=(\Instances,\Measurable,P)\mapsto(\Instances',\Measurable',P')=
	\pdb'$. 
	\begin{enumerate}
		\item Every view $V$ is of \emph{type I}.
		\item The view $V$ is of \emph{type II} (or, \emph{pointwise
			local}) if for every $\pdb\in\PDBs_{\Schema}$ there
			exists a measurable mapping $q_{\pdb}\colon 
			\Instances\to\Instances$ such that $P'(\Event') = 
			P(q_{\pdb}^{-1} (\Event'))$ for every $\Event'\in
			\Measurable$.
		\item\label{itm:typeIII}
			The view $V$ is of \emph{type III} (or, \emph{uniformly
			local}) if there exists a measurable mapping
			$q\colon\Instances \to\Instances$ such that
			$P'(\Event') = P(q^{-1}(\Event'))$ for every
			$\Event'\in\Measurable'$.
	\end{enumerate}
\end{definition}

\newcommand*{\viewtype}[1]{\cal{V}^{\text{#1}}}

Letting $\viewtype{I}$, $\viewtype{II}$ and $\viewtype{III}$ denote the classes
of type I, type II and type III views (from $\PDBs_{\Schema}$ to 
$\PDBs_{\Schema'}$).
Then $\viewtype{III}$ captures the possible worlds semantics of views.
Obviously, $\viewtype{III} \subseteq \viewtype{II} \subseteq
\viewtype{I}$. The following examples show that these inclusions are strict.

\begin{example}
	Consider the query $Q = Q_\alpha(D) = \set{ f \in \facts_{\Schema}(R) 
	\colon P(\CEvent(f,>0)) \geq \alpha} = q_\pdb$ for some
      $\alpha > 0$. Note that the set of facts of marginal probability
      at least $\alpha$ is finite in every PDB \cite{Grohe+2019}, hence the
      query is well-defined.
	This query is of type II. However, considering the simple PDBs $\pdb_1$
	and $\pdb_2$ and two distinct facts $f$ and $f'$ such that
	\begin{itemize}
		\item the only possible world of positive probability in
			$\pdb_1$ is $\bag{ f }$ with $P_{\pdb_1}(\bag{ f })=1$;
		\item similarly, $\pdb_2$ has the worlds $\bag{ f }$ and 
			$\bag{ f' }$ with $P_{\pdb_2}(\bag{ f }) =
			P_{\pdb_2}(\bag{ f' }) = \frac{1}{2}$.
	\end{itemize}
	Suppose $q$ exists like in the \cref{def:types}, part 
	\labelcref{itm:typeIII} and consider the event $\Event'$ that $f'$
	occurs (this is a set of instances in the target measurable space of
	$Q_\alpha$). Then $P_{\pdb_1}(q^{-1}(\Event')) = 0$ entails 
	$\bag{f} \notin q^{-1}(\Event')$. On the other hand $P_{\pdb_2}
	(q^{-1}(\Event')) = 1$ and thus $\bag{f},\bag{f'}\in q^{-1}(\Event')$,
	a contradiction. Thus, $Q$ is type II, but not type III.\par
\end{example}
\begin{example}
 	Fix some PDB $\pdb$ with three possible worlds $D_1$, $D_2$ and
	$D_3$ with probabilities $p_1 = \frac{1}{6}$, $p_2 = \frac{1}{3}$ and
	$p_3 = \frac{1}{2}$. Now consider the query $Q$ that conditions
	$\pdb$ on the event $\set{ D_1, D_2 }$ and pick the database instance
	$D = D_1$. Then $P(D\cap\set{D_1,D_2}) = P(\set{D_1}) = \frac{1}{6}$ and 
	$P(\set{D_1,D_2}) = \frac{1}{6}+\frac{1}{2} = \frac{4}{6}$. Thus,
	$P(Q^{-1}(D)) = \frac{1}{6} / \frac{4}{6} = \frac{1}{4}$, but there is
	no event $\Event$ in $\pdb$ with the property that
	$P( \Event ) = 1/4$. Thus, $Q$ is type I, but not type II.
\end{example}

\section{Conclusions}\label{sec:conclusion}
In this work, we described how to construct suitable probability spaces for
infinite probabilistic databases, completing the picture of \cite{Grohe+2019}.
The viability of this model as a general framework for finite \emph{and 
infinite} databases is supported by its compositionality with respect to
typical database queries. Our main technical results establish that standard query languages
have a well-defined open-world semantics.\par

It might be interesting to explore, whether more in-depth results on point
processes have a natural interpretation when it comes to probabilistic 
databases. We believe for example that there is a strong connection between
the infinite independence assumptions that were introduced in \cite{Grohe+2019}
and the class of Poisson point processes (cf. \cite[p.~52]{Last+2017}).\par
In the last section of the paper, we briefly discussed queries for PDBs that go
beyond the possible worlds semantics. Such queries are very relevant for PDBs
and deserve a systematic treatment in their own right in an infinite 
setting.

\subparagraph*{Acknowledgments}\label{sec:acknowledgments}
We are grateful to Sam Staton for insightful discussions related to
this work, and for pointing us to point processes. We also thank Peter J.
Haas for discussions on the open-world assumptions and the math behind the MCDB system.
\phantomsection\label{sec:bibliography}\addcontentsline{toc}{section}{\refname}

\clearpage\appendix

\section{Notation}\label{app:notation}
\begin{table}[H]
	\centering
	\caption{Basic notation used throughout the paper.}
	\begin{tabular}{p{.22\textwidth} p{.7\textwidth}}
		\toprule
		\multicolumn{2}{l}{%
			\bfseries{}\sffamily{}%
			General Notation%
		}\\
		\midrule
		$\N$, $\Q$, $\R$
		& the sets of nonnegative integers, rational numbers and
		real numbers, respectively\\
		$\N_+$, $\Q_+$, $\R_+$
		& the restrictions of $\N$, $\Q$ and $\R$ to positive numbers\\
		$\Sets{M}{k}$, $\Sets{M}{<\omega}$
		& the sets of $k$-elementary respectively finite \emph{subsets}
		of a set $M$\\
		$\bag{\dots}$
		& the explicit denotation of a bag / multiset\\
		$\Bags{M}{k}$, $\Bags{M}{<\omega}$
		& the sets of $k$-elementary respectively finite \emph{bags}
		over a set $M$\\
		$\hits_N(\cdot)$
		& multiplicity function of a bag $N$\\
		\midrule
		\multicolumn{2}{l}{%
			\bfseries{}\sffamily{}%
			Topology, Measure Theory and Point Processes%
		}\\
		\midrule
		$\Space[X], \Space[Y], \Space[Z]$
		& underlying set of a probability / measurable / topological space\\
		$X, Y, Z$
		& an element of $\Space[X]$, $\Space[Y]$ or $\Space[Z]$,
		respectively\\
		$x, y, z$
		& elements of $X$, $Y$ or $Z$, respectively (provided that
		$X$, $Y$, $Z$ are (subsets of) the powerset of some other 
		space)\\
		$\Measurable[X], \Measurable[Y], \Measurable[Z]$
		& a $\sigma$-algebra on  $\Space[X]$, $\Space[Y]$ or 
		$\Space[Z]$, respectively\\
		$\Measurable[G]$
		& used for a generating family of a $\sigma$-algebra\\
		$\Event[X], \Event[Y], \Event[Z]$
		& sets (but usually \emph{measurable} sets) in $(\Space[X],
		\Measurable[X])$, $(\Space[Y],\Measurable[Y])$ or $(\Space[Z],
		\Measurable[Z])$\\
		$P$
		& (probability) measures on some measurable space $(\Space[X],
		\Measurable[X])$\\
		$\Xi$
		& a probability space $\Xi = (\Space[X],\Measurable[X],P)$\\
		$\Topology$
		& a topology on some space $\Space[X]$\\
		$B_\epsilon(x)$
		& \enquote{ball} of radius $<\epsilon$ around a point $x$ of a metric
		space\\
		$\Borel_{\Space[X]}$
		& the Borel $\sigma$-algebra induced by some topological
		space $(\Space[X],\Topology)$\\
		\midrule
		\multicolumn{2}{l}{%
			\bfseries{}\sffamily{}%
			Databases and Probabilistic Databases%
		}\\
		\midrule
		$\Schema$
		& a database schema $\Schema = (\Att, \Rel)$\\
		$\Att$
		& a family of attribute names $A$\\
		$A,B$
		& an attribute (name)\\
		$\Rel$
		& a family of relation names $R$\\
		$R,S$
		& relations / relation names\\
		$\dom_{\Schema}(\cdot),\type_{\Schema}(\cdot),\ar_{\Schema}(\cdot)$
		& the domain, type and arity mappings of a schema $\Schema$\\
		$\facts_{\Schema}(R),\facts_{\Schema}(\Rel)$
		& the set of $R$-facts, resp. \emph{all} facts, in schema $\Schema$\\
		$D$
		& a database instance\\
		$f$
		& a single fact\\
		$F$
		& a set of facts\\
		$\Instances$, $\Instances_{\Schema}$ 
		& the space of database instances w.\,r.\,t. some schema
		$\Schema$\\
		$\Measurable$,  $\Measurable_{\Schema}$
		& the $\sigma$-algebra belonging to $\Instances$ resp. $\Instances_{\Schema}$\\
		$\CEvent(F,n)$
		& the counting event belonging to $F$ and $n$\\
		$\Delta$
		& a probabilistic database $\Delta=(\Instances,\Measurable,P)$\\
		$\PDBs_{\Schema}$
		& the class of PDBs of schema $\Schema$\\
		$Q$
		& a query\\
		$V$
		& a view\\
		$\rename,\addunion,\intersection,\difference,\maxunion,\dedupe,\select,\project,\product$
		& bag relational algebra operators, see \cref{tab:balg} (p.~\pageref{tab:balg})\\
		$ \Phi$
		& an aggregator\\
		$ \varpi$
		& an aggregate query (possibly with grouping)\\
		\bottomrule
	\end{tabular}
\end{table}
\section{Notions from General Topology}\label{app:topology}
In this section, we introduce the relevant topological notions that are needed
in our work. The reader may find further reference in textbooks on general
topology such as \cite{Bourbaki1995}. Polish spaces are in particular discussed
within \cite{Bourbaki1989}. \par\smallskip

A \emph{topological space} is a pair $(X,\Topology)$ where $X$ is a set and 
$\Topology$ is a family of subsets of $X$ such that
\begin{itemize}
	\item both $\emptyset$ and $X$ belong to $\Topology$;
	\item $\Topology$ is closed under \emph{arbitrary} unions; and
	\item $\Topology$ is closed under \emph{finite} intersections.
\end{itemize}
Such a family $\Topology$ is called \emph{topology on $X$} and its individual sets
are called \emph{open sets}. Complements (relative to $X$) of open sets are
called \emph{closed}. Occasionally, we might refer to a topological space $X$,
if the topology on $X$ is clear from context.\par

If $\Topology$ and $\Topology'$ are topologies on $X$, then $\Topology$ is called
\emph{coarser} than $\Topology'$ if $\Topology\subseteq \Topology'$. Vice versa, 
$\Topology'$ is called \emph{finer}. A mapping between two topological spaces is 
called \emph{continuous}, if the preimage of every open set is open. It is
called \emph{open}, if it maps open sets to open sets.\par

Whenever $((X_i,\Topology_i))_{i\in I}$ is a family of topological spaces, then
the \emph{product topological space} (or \emph{product topology}) of 
$((X_i,\Topology_i))_{i\in I}$ is the (unique) coarsest topology $(X,\Topology)$ with
$X=\prod_{i\in I} X_i$ such that all the canonical projection maps $\proj_{i\in
I}\colon X\to X_i$ are continuous.\par\smallskip

A \emph{metric space} is a pair $(X,d)$ where $X$ is a set and $d\colon X\to\R$
such that for all $x,y,z\in X$
\begin{itemize}
	\item $d(x,y)\geq 0$; and $d(x,y)= 0$ if and only if $x=y$;
	\item $d(x,y) = d(y,x)$; and
	\item $d(x,y)\leq d(x,z)+d(z,y)$.
\end{itemize}
In a metric space $(X,d)$, for $x\in X$ and $r\in\R_+$, we denote by $B_r(x)$ 
the \emph{open ball of radius $r$ around $x$}, that is, the set $B_r(x) 
\coloneqq \set{ y \in X \colon d(x,y) < r }$.\par

A set $Y\subseteq X$ is called \emph{open}, if for every $y\in Y$, there is 
some $r>0$ such that $B_r(y)\subseteq Y$. Complements of open sets are called
\emph{closed}. The open sets of the metric space $(X,d)$ form a topology on
$X$, which we refer to as the topology on $X$ that is \emph{induced} (or 
\emph{generated}) by $d$. A topological space is called \emph{metrizable} if
it can be equipped with some metric that generates its topology.\par

A \emph{Cauchy sequence} in a metric space $(X,d)$ is a sequence $(x_k)_{k\geq 
0}$ of elements of $X$ with the property that $d(x_k,x_{k+1})\to 0$ as $k\to
\infty$. The metric space $(X,d)$ is called \emph{complete} if $\lim_{k\to
\infty} x_k\in X$ for every Cauchy sequence $(x_k)_{k\geq 0}$ in $X$. A
topological space is called \emph{completely metrizable} if it can be equipped
with some \emph{complete} metric that generates its topology.\par

A set $Y$ in a topological space $(X,\Topology)$ (or metric space $(X,d)$) is 
called \emph{dense} if for every $x\in X$ either $x\in Y$ or there are is a
sequence $(y_k)_{k\geq 0}$ in $Y$ with $\lim_{k\to\infty} y_k = x$. A
topological space (or metric space) is called \emph{separable} if it contains a
countable dense set. A \emph{Polish space} is a separable, completely
metrizable topological space. In particular, separable, complete 
\emph{metric} spaces are Polish. We call metrics of such spaces \emph{Polish}
metrics and refer to the topology of a Polish space as its \emph{Polish 
topology}.

\section{Measurability of Cross Products}\label{app:crossprod}

\begin{lemma}\label{lem:crossprod-measurable-app}
$Q=R_1\product{}R_2$ is $(\Measurable,\Measurable')$-measurable for all
$R_1,R_2\in\Rel$ with disjoint types.
\end{lemma}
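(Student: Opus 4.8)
The plan is to avoid decomposing the set $F$ altogether, and instead to treat the map $F\mapsto\hits_{Q(D)}(F)$, for each fixed instance $D$, as a finite measure on the product space and then lift measurability from rectangles to all Borel sets by a $\pi$--$\lambda$ (Dynkin) argument. Since $R_1$ and $R_2$ have disjoint types, $\facts_{\Schema'}(R')$ is Borel-isomorphic to the product $\facts_{\Schema}(R_1)\times\facts_{\Schema}(R_2)$, and by \cref{fac:borel-product} its Borel $\sigma$-algebra $\Measurable[F]_{\Schema'}(R')$ is the product $\sigma$-algebra generated by the measurable rectangles $F_1\times F_2$. By \cref{obs:simplifying} it suffices to show that $Q^{-1}(\CEvent'(F,n))\in\Measurable$ for every Borel $F\subseteq\facts_{\Schema'}(R')$ and every $n\in\N$. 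For an instance $D$ write $\mu_D(F)\coloneqq\hits_{Q(D)}(F)$; for fixed $D$ this is a finite, integer-valued measure on $\facts_{\Schema'}(R')$, since $Q(D)$ is a finite bag. As $Q^{-1}(\CEvent'(F,n))=\set{D\in\Instances\colon\mu_D(F)=n}$, it is enough to prove that $D\mapsto\mu_D(F)$ is $(\Measurable,\Borel_{\R})$-measurable for every Borel $F$: the preimage of the singleton $\set{n}$ is then measurable.

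First I would settle the base case of measurable rectangles $F=F_1\times F_2$ with $F_1\in\Measurable[F]_{\Schema}(R_1)$ and $F_2\in\Measurable[F]_{\Schema}(R_2)$. Here the product structure of the cross product gives $\mu_D(F_1\times F_2)=\hits_D(F_1)\cdot\hits_D(F_2)$, where $F_1,F_2$ are regarded as measurable subsets of $\facts_{\Schema}(\Rel)$. Each map $D\mapsto\hits_D(F_i)$ is measurable because its level sets are exactly the counting events $\CEvent(F_i,m)\in\Measurable$; hence their product is measurable and $F_1\times F_2\in\mathcal D$, where $\mathcal D$ denotes the class of Borel sets $F$ for which $D\mapsto\mu_D(F)$ is measurable. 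This is the only place where the non-linearity of the cross product enters.

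Next I would verify that $\mathcal D$ is a Dynkin system. The whole space $\facts_{\Schema}(R_1)\times\facts_{\Schema}(R_2)$ lies in $\mathcal D$, since $\mu_D(\facts_{\Schema'}(R'))=\hits_D(\facts_{\Schema}(R_1))\cdot\hits_D(\facts_{\Schema}(R_2))$ is again a product of two measurable counting functions. For $F,G\in\mathcal D$ with $F\subseteq G$, finite additivity gives $\mu_D(G\setminus F)=\mu_D(G)-\mu_D(F)$, a difference of measurable functions, so $G\setminus F\in\mathcal D$. Finally, if $(F_m)_{m\geq 0}$ is an increasing sequence in $\mathcal D$ with union $F$, then continuity from below of the finite measure $\mu_D$ yields $\mu_D(F)=\lim_{m\to\infty}\mu_D(F_m)$ pointwise in $D$, so $D\mapsto\mu_D(F)$ is a pointwise limit of measurable functions and hence measurable by \cref{fac:measurablebasics}; thus $F\in\mathcal D$. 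Since the measurable rectangles form a $\pi$-system generating $\Measurable[F]_{\Schema'}(R')$ and are contained in $\mathcal D$, the $\pi$--$\lambda$ theorem gives $\mathcal D=\Measurable[F]_{\Schema'}(R')$, which proves the lemma.

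The conceptual obstacle is exactly the one flagged after the statement: one cannot split a general Borel $F$ into measurable ``left'' and ``right'' parts, so neither the naive rectangle decomposition nor \cref{lem:qcrit} applies. The point of the argument above is that the non-measurable splitting is never needed: the only operations performed on $F$ are proper difference, countable increasing union, and the rectangle base case, all of which are harmless because, for each fixed $D$, $\mu_D$ is a bona fide finite measure. The step requiring the most care is the Dynkin closure, in particular confirming that the total mass $D\mapsto\mu_D(\facts_{\Schema'}(R'))$ is measurable and that the monotone-limit step legitimately invokes the pointwise-limit part of \cref{fac:measurablebasics}; this route also makes the $k$-coarse approximation of the sketch unnecessary.
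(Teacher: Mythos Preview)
Your argument is correct and takes a genuinely different---and cleaner---route than the paper. The paper's proof (Appendix~\ref{app:crossprod}) also rests on a good-sets principle, but it applies that principle at the level of the \emph{events} $\Event(F\cap S_r(t),n,k)$ after first restricting to $k$-coarse instances, so that within each small rectangle $S_r(t)$ at most one output fact can occur; this localisation is what lets the paper reduce complements and disjoint unions to arithmetic on hit counts. Your approach sidesteps the metric/$k$-coarse machinery entirely by moving one level up: you fix $D$, observe that $\mu_D=\hits_{Q(D)}(\cdot)$ is an honest finite measure on $\facts_{\Schema'}(R')$, and run the $\pi$--$\lambda$ argument on the function $D\mapsto\mu_D(F)$ rather than on preimages of individual counting events. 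Because finite measures are automatically subtractive on nested pairs and continuous from below, the Dynkin closure is immediate, and the rectangle base case is exactly Claim~\ref{clm:rectangles-multiply}. What the paper's approach buys is a template (the $k$-coarse decomposition) that it reuses elsewhere and that makes the combinatorics of facts in $D$ explicit; what your approach buys is brevity and the elimination of all topological input beyond \cref{fac:borel-product}.
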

\par\bigskip

We split the proof in several parts. First we show the following claim, stating
that preimages of counting events that are products of Borel sets are
measurable.\par\bigskip

\begin{claim}\label{clm:rectangles-multiply}
	For all $n\in\N$, it is $Q^{-1}(\CEvent'(F,n))\in\Measurable$
	whenever $F = \set{ R(t_1,t_2)\colon t_1\in T_1,t_2\in T_2}$ for
	Borel sets $T_1$ and $T_2$ of $\dom_{\Schema}(R_1)$ resp.
	$\dom_{\Schema}(R_2)$.
\end{claim}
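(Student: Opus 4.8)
The plan is to exploit that, for a \emph{rectangular} set $F$, the multiplicity that the cross product assigns to $F$ factors into a product of two independent counting quantities of the input instance. This is precisely the situation in which the naïve decomposition of $F$ into a left and a right part succeeds, because here both parts are manifestly measurable (unlike the case of a general Borel $F$, which is what forces the harder $k$-coarse argument later).

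First I would record the pointwise bag-semantics of $\product$: for every instance $D$ and every output fact $R'(t_1,t_2)$,
\begin{equation*}
	\hits_{Q(D)}\big(R'(t_1,t_2)\big) = \hits_D\big(R_1(t_1)\big)\cdot\hits_D\big(R_2(t_2)\big)\text.
\end{equation*}
Summing over the distinct facts of $F$ and using that $D$ is finite (so only finitely many summands are nonzero), the double sum separates:
\begin{equation*}
	\hits_{Q(D)}(F)
	= \sum_{t_1\in T_1}\sum_{t_2\in T_2}\hits_D\big(R_1(t_1)\big)\cdot\hits_D\big(R_2(t_2)\big)
	= \hits_D\big(R_1(T_1)\big)\cdot\hits_D\big(R_2(T_2)\big)\text.
\end{equation*}
Thus $\hits_{Q(D)}(F)=n$ depends on $D$ only through the two numbers $\hits_D(R_1(T_1))$ and $\hits_D(R_2(T_2))$.

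Next I would verify that the two relevant sets of facts are measurable. Since $T_1,T_2$ are Borel, the sets $R_1(T_1)=\set{R_1}\times T_1$ and $R_2(T_2)=\set{R_2}\times T_2$ lie in $\Measurable[F]_{\Schema}(R_1)$ resp.\ $\Measurable[F]_{\Schema}(R_2)$, hence in $\Measurable[F]_{\Schema}(\Rel)$. Therefore the counting events $\CEvent(R_1(T_1),a)$ and $\CEvent(R_2(T_2),b)$ belong to $\Measurable$ for all $a,b\in\N$.

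Finally I would write the preimage as a finite Boolean combination of these counting events. For $n>0$, setting $a=\hits_D(R_1(T_1))$ and $b=\hits_D(R_2(T_2))$, we have $ab=n$ iff $(a,b)$ is one of the finitely many factorizations of $n$ into positive integers, so
\begin{equation*}
	Q^{-1}\big(\CEvent'(F,n)\big)
	= \bigcup_{\substack{a,b\in\N_+\\ ab=n}}\big(\CEvent(R_1(T_1),a)\cap\CEvent(R_2(T_2),b)\big)\text,
\end{equation*}
a finite union of intersections of measurable sets; for $n=0$ we have $ab=0$ iff $a=0$ or $b=0$, giving $Q^{-1}(\CEvent'(F,0))=\CEvent(R_1(T_1),0)\cup\CEvent(R_2(T_2),0)$. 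In either case the right-hand side is in $\Measurable$, which proves the claim. There is no deep obstacle here: the crux is the factorization of the double sum, and the only points requiring minor care are checking that $R_i(T_i)$ is measurable and that, for each fixed $n$, the index set of factorizations is finite.
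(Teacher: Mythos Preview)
Your proof is correct and follows essentially the same approach as the paper: both exploit the factorization $\hits_{Q(D)}(F)=\hits_D(R_1(T_1))\cdot\hits_D(R_2(T_2))$ and express the preimage as a (countable) union of intersections of counting events indexed by factorizations of $n$. Your treatment is slightly more explicit in deriving the factorization from the per-fact bag semantics and in handling the $n=0$ case separately, but the core idea is identical.
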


\begin{subproof}
	Let $T_i\in\Borel(\dom_{\Schema}(R_i))$ and let
	$F_i=\set{R_i (t)\colon t\in T_i}$ for $i\in\set{1,2}$. By the
	semantics of $Q$ and since $F=\set{R(t_1,t_2)\colon{}R_1
	(t_1)\in F_1~\text{and}~R_2 (t_2)\in F_2}$ it holds that
	$\hits_{Q(D)}(F) = \hits_{D}(F_1)\cdot\hits_{D}(F_2)$
	for all instances $D\in\Instances$. Thus,
	\begin{equation*}
		D\in Q^{-1}(\CEvent'(F,n)) \iff
		\begin{aligned}[t]
			&\text{there exist}~n_1,n_2\in\N~
				\text{with}~n_1\cdot n_2=n~\text{such that}\\
			&\quad	D\in\CEvent(F_1,n_1)~\text{and}~D\in
				\CEvent(F_2,n_2)\text.
		\end{aligned}
	\end{equation*}
	In particular, $Q^{-1}(\CEvent'(F,n))\in\Measurable$.
\end{subproof}
\par\bigskip

Now that we have established that the preimage of every product of Borel sets is
measurable, we turn our attention to the general setting. We fix countable dense
sets $X$, $X_1$ and $X_2$ in the Polish spaces $\dom_{\Schema'}(R)$,
$\dom_{\Schema}(R_1)$ and $\dom_{\Schema}(R_2)$. Also, we let
$d$, $d_1$ and $d_2$ denote fixed, Polish metrics on the aforementioned spaces.
For $D\in\Instances$, define
\begin{align*}
	d_i^*(D) \coloneqq \min\set{d_i(t,t')\colon\hits_D(R_i(t)),
	\hits_D(R_i(t'))>0~\text{and}~ t\neq t'}\text.
\end{align*}
If the set on the right is empty, we let $d_i^*(D)=\infty$. For the purpose of
this proof, we refer to instances $D\in\Instances$ as \emph{$k$-coarse} if both
$d_1^*(D)$ and $d_2^*(D)$ are at least $k^{-1}$. We let
\begin{equation}\label{eq:DFnk-event}
	\Event(F,n,k) \coloneqq Q^{-1}(\CEvent'(F,n))\cap
		\set{D\in\Instances\colon D~\text{$k$-coarse}}
\end{equation}
denote the restriction of the preimage of $\CEvent'(F,n)$ to $k$-coarse 
instances. Note 
\begin{equation*}
	Q^{-1}(\CEvent'(F,n))=\bigcup_{k\in\N}\Event(F,n,k)\text.
\end{equation*} Thus, we are done, once we have proven the measurability of 
$\Event(F,n,k)$ for all $F\in\Measurable[F]'$ and $n,k\in\N$, $k>0$.\par\bigskip

Conceptually, for certain \emph{simple} sets $F$ of facts, we will prove the 
measurability of $\Event(F,n,k)$ directly from the measurability of 
$Q^{-1}(\CEvent'(F,n))$ and $\set{D\in\Instances\colon D~
\text{$k$-coarse}}$ and proceed to show that we can obtain the measurability of 
$\Event(F,n,k)$ for arbitrary $F$ from these simple cases.\par\bigskip

We start out by proving that the set of $k$-coarse instances is $\Measurable$-%
measurable.

\begin{claim}\label{clm:coarse-measurable}
	The set $\set{D\in\Instances\colon D~\text{$k$-coarse}}$ of $k$-coarse
	instances in $\Instances$ is $\Measurable$-measurable.
\end{claim}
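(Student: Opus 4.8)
The plan is to describe the complement of the set of $k$-coarse instances as a countable union of intersections of counting events. First I would observe that, since every instance $D$ is finite, each minimum defining $d_i^*(D)$ ranges over a finite set; hence $d_i^*(D) < k^{-1}$ holds if and only if $D$ contains two \emph{distinct} $R_i$-facts $R_i(t)$, $R_i(t')$ with $t\neq t'$ and $d_i(t,t') < k^{-1}$. Writing $\Event[N]_i$ for the set of instances satisfying this condition for a fixed $i\in\set{1,2}$, it then suffices to prove that each $\Event[N]_i$ is $\Measurable$-measurable, because the set of $k$-coarse instances is exactly $\Instances\setminus(\Event[N]_1\cup\Event[N]_2)$.

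The key idea is to witness such a pair of close distinct points by two \emph{disjoint} balls centred at points of the countable dense set $X_i$, whose centres are close enough to force the witnessed points to lie within $k^{-1}$. Concretely, I would range over all triples $(a,b,r)$ with $a,b\in X_i$ and $r\in\Q_+$ satisfying $d_i(a,b)\geq 2r$ (so that the open balls $B_r(a)$ and $B_r(b)$ are disjoint) and $d_i(a,b)+2r < k^{-1}$ (so that, by the triangle inequality, every point of $B_r(a)$ and every point of $B_r(b)$ lie at distance $< k^{-1}$). For each such triple, the event that $D$ has at least one hit in $R_i(B_r(a))$ and at least one hit in $R_i(B_r(b))$ equals $\big(\Instances\setminus\CEvent(R_i(B_r(a)),0)\big)\cap\big(\Instances\setminus\CEvent(R_i(B_r(b)),0)\big)$, which is measurable because open balls are Borel, so $R_i(B_r(a)),R_i(B_r(b))\in\Measurable[F]_{\Schema}(\Rel)$. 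I claim $\Event[N]_i$ is precisely the union of these events over all admissible triples; since $X_i$ and $\Q_+$ are countable, this exhibits $\Event[N]_i$ as a countable union of measurable sets.

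It then remains to verify both inclusions. For the forward direction, any admissible triple with a positive hit in each ball yields points $u\in B_r(a)$ and $v\in B_r(b)$ occurring in $D$; disjointness forces $u\neq v$, while the closeness condition forces $d_i(u,v) < k^{-1}$, so $D\in\Event[N]_i$. For the converse, given distinct $R_i$-points $t\neq t'$ in $D$ with $\delta\coloneqq d_i(t,t') < k^{-1}$, I would choose a rational $r>0$ so small that $4r<\delta$ and $\delta+4r<k^{-1}$ (possible because $\delta<k^{-1}$ is strict), and then use density of $X_i$ to pick $a,b\in X_i$ with $d_i(a,t)<r$ and $d_i(b,t')<r$. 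The triangle inequality gives $d_i(a,b) > \delta-2r > 2r$ and $d_i(a,b)+2r < \delta+4r < k^{-1}$, so $(a,b,r)$ is admissible and witnesses $t\in B_r(a)$ and $t'\in B_r(b)$ with positive hits.

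The step requiring the most care is the passage from total multiplicity -- all that counting events can detect -- to the existence of two \emph{distinct} points, since a single fact of multiplicity two must not be mistaken for two nearby points. The disjointness requirement $d_i(a,b)\geq 2r$ is exactly what rules this out, and dually the converse direction must shrink $r$ enough to separate $t$ and $t'$ into disjoint balls while still keeping $d_i(a,b)+2r$ below $k^{-1}$; the tension between these two demands is resolved precisely by the strict inequality $\delta<k^{-1}$.
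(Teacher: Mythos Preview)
Your argument is correct and in fact cleaner than the paper's own proof. Both proofs express the complement $\set{D\colon d_i^*(D)<k^{-1}}$ as a countable Boolean combination of counting events, but the paper takes a more elaborate route: it fixes exact target multiplicities $k,k'\in\N_+$ and rational slack parameters $\epsilon_L,\epsilon_U$, and then requires that \emph{for all} sufficiently small rational radii $r$ one can find dense-set centres $x,x'$ with $\hits_D(B_r(x))=k$ and $\hits_D(B_r(x'))=k'$ exactly. This $\exists k,k',\epsilon_L,\epsilon_U\,\forall r\,\exists x,x'$ structure, combined with equalities in the counting events, makes the verification of both directions noticeably heavier.

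Your version replaces all of this with a single existential over triples $(a,b,r)\in X_i\times X_i\times\Q_+$ subject to the two metric constraints $d_i(a,b)\ge 2r$ and $d_i(a,b)+2r<k^{-1}$, and uses only the events ``at least one hit in $R_i(B_r(\cdot))$'', i.e.\ complements of the zero-count events. The key simplification is recognising that one does not need to pin down exact multiplicities to separate two \emph{distinct} points: disjointness of the balls already forces distinctness, and ``$\ge 1$ hit'' is just as measurable as ``$=k$ hits''. The price the paper pays for tracking exact counts is the universal quantifier over $r$ (needed so that eventually each ball isolates a single underlying fact), which your disjointness condition makes unnecessary. Both approaches are valid; yours is shorter and conceptually more transparent.
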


\begin{subproof}
	We claim (for $i\in\set{1,2}$):
	\begin{equation}\label{eq:k-coarse-parts}
		d_i^*(D) < \frac{1}{k}
		\quad\iff\quad
		\begin{aligned}[t]
			& \text{there exist}~k,k'\in\N_+~\text{and}~\epsilon_L,
				\epsilon_U\in\Q_+\text,~\epsilon_U<\epsilon_L<
				\textstyle\frac{1}{k}~\text{such that}\\
			&\quad \text{for all}~r\in\Q_+~\text{where}~r<\min
				\set{\textstyle\frac{1}{3}(\frac{1}{k}-
				\epsilon_L),\frac{\epsilon_U}{4}}\\
			&\quad\quad \text{there are}~x,x'\in X_i~
				\text{with}~0<d(x,x')<\textstyle\frac{1}{k}-
				\epsilon_U+2r~\text{such that}\\
		       &\quad\quad\quad \hits_D(B_r(x))=k~\text{and}~
		       		\hits_D(B_r(x'))=k' 
		\end{aligned} 
	\end{equation} 
	where $B_r(x)=\set{R_i(t)\colon d_i(t,x)<r}$. Note that the
	property on the right hand side of \labelcref{eq:k-coarse-parts} is
	expressible as a \enquote{countable Boolean combination} of counting
	events (the set of instances satisfying it is Borel). Since
	\begin{equation*}
		\set{D\in\Instances\colon D~\text{is $k$-coarse}} =
		\Instances \setminus \set{D\in\Instances\colon d_1^*(D)<\textstyle
			\frac{1}{k}~\text{or}~d_2^*(D)<\frac{1}{k}}\text,
	\end{equation*}
	we are done once we have demonstrated \labelcref{eq:k-coarse-parts}.
	We do so by showing both directions. Note that the rationale for $i=1$
	and $i=2$ is identical.

\begin{description}
	\item[$(\Rightarrow)$] 
		Let $d_i^*(D)<\frac{1}{k}$ and let $f$ and $f'$ be
		$R_1$-facts appearing in $D$ such that $d_i(f,f') <
		\frac{1}{k}$ is minimal (since $0 < d_i^*(D) < \frac{1}{k}$, $f$
		and $f'$ exist). Let $0 < \epsilon_U < \epsilon_L < \frac{1}{k}$
		be rational numbers with $d_i(f,f') \in (\frac{1}{k} -
		\epsilon_L, \frac{1}{k} - \epsilon_U)$. Let $k=\hits_D(f)$ and
		$k'=\hits_D(f')$. Since $X_i$ is dense, for every positive $r$
		(so in particular for all rational $r < \min\set{\frac{1}{3}
		(\frac{1}{k} - \epsilon_L, \frac{\epsilon_U}{4}}$) there exist
		$x,x' \in X_i$ such that $f\in B_r(x)$ and $f'\in B_r (x')$.
		By the choice of $r$, $B_r(x)$ and $B_r(x')$ are disjoint. In
		particular $d_i(x,x')>0$. Also, since $f$ and $f'$ are a pair
		of $R_1$-facts of $D$ of minimal distance, there are no
		other $R_1$-facts (other than $f$ resp. $f'$) that are
		contained in $B_r(x)$ resp. $B_r(x')$. Thus,
		\begin{equation*}
			k  = \hits_D(f)  = \hits_D(B_r(x))
			\qquad\text{and}\qquad
			k' = \hits_D(f') = \hits_D(B_r(x'))\text.
		\end{equation*}
		Moreover, the distance of $x$ and $x'$ is necessarily smaller 
		than $\frac{1}{k}-\epsilon_U+2r$:
		\begin{equation*}
			d_i(x,x')
			\leq d_i(x,f)+d_i(f,f')+d_i(f',x')
			< r + \textstyle\frac{1}{k}-\epsilon_U + r
			= \textstyle\frac{1}{k}-\epsilon_U+2r\text.
		\end{equation*}
		On the other hand (by a similar application of the triangle
		inequality), $d_i(x,x') > r > 0$. Overall, the right hand side 
		of \labelcref{eq:k-coarse-parts} holds.

	\item[$(\Leftarrow)$]
		Now suppose the right hand side of
		\labelcref{eq:k-coarse-parts} holds and let $k,k',\epsilon_L$
		and $\epsilon_U$ such that the rest of the statement is
		satisfied.\par

		Now for all positive rational $r < \min\set{\frac{1}{3}
		(\frac{1}{k} - \epsilon_L), \frac{\epsilon_U}{4}}$ there exist
		$x$ and $x'$, $x\neq x'$ from $X_i$ with distance smaller than
		$\frac{1}{k} - \epsilon_U + 2r$ such that $\hits_D (B_r(x)) =
		k$ and $\hits_D(B_r(x')) = k'$. This holds in particular, if
		additionally $r < d_i^*(D)/3$. For such $r$, the balls $B_r(x)$
		and $B_r(x')$ contain at most one $R_1$-fact from $D$
		each, say $f$ in $B_r (x)$ and $f'$ in $B_r(x')$. Then $k =
		\hits_D(B_r(x)) = \hits_D(f)$ and $k' = \hits_D(B_r(x')) =
		\hits_D(f')$ and both are $>0$.\par

		It is 
		\begin{equation*}
			d(f,f')
			\leq d(x,f)+d(x,x') + d(f',x')
			< r + (\textstyle \frac{1}{k} - \epsilon_U + 2r) + r
			< \textstyle \frac{1}{k}
		\end{equation*}
		where the last inequality is due to $r < \epsilon_U/4$.	
		Together, $f$ and $f'$ witness $d_i^*(D)<\frac{1}{k}$.\qedhere
      \end{description}
\end{subproof}
\par\bigskip

For $k\in\N$ we let 
\begin{equation*}
	\Event[F]_k \coloneqq
	\set*{F\in\Measurable[F]_{\Schema'}(R)\colon
		\Event(F\cap S_r(t),n,k)\in\Measurable~\text{f.\,a.}~
		t\in\dom_{\Schema'}(R)\text,~r\in\Q_+\text,
		~r<\textstyle\frac{1}{3k}~\text{and}~n\in\N} 
\end{equation*}
where $S_r(t) \coloneqq \set{R} \times B_r^{(1)}(t_1) \times B_r^{(2)}
(t_2)$ is the rectangle around $t = (t_1,t_2) \in \dom_{\Schema}(R_1)
\times \dom_{\Schema} (R_2)$ whose \enquote{sides} are given by the
balls $B_r^{(i)}(t_i)$ of $d_i$-radius $r$ around $t_i$ in $\dom_{\Schema}
(R_i)$.\par\bigskip

\begin{claim}
	$\Event[F]_k=\Measurable[F]_{\Schema'}(R)$ for all $k\in\N$.
\end{claim}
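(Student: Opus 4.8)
The plan is to read this as a \enquote{good sets} statement and prove it by Dynkin's $\pi$-$\lambda$ theorem. Since $\Event[F]_k\subseteq\Measurable[F]_{\Schema'}(R)$ holds by definition, only the inclusion $\Measurable[F]_{\Schema'}(R)\subseteq\Event[F]_k$ needs work. I would fix $k$ and take as generating $\pi$-system the family $\mathcal P$ of measurable rectangles $\set{R'}\times T_1\times T_2$ with $T_1\in\Borel(\dom_{\Schema}(R_1))$ and $T_2\in\Borel(\dom_{\Schema}(R_2))$; this is closed under intersection and generates $\Measurable[F]_{\Schema'}(R)$ by \cref{fac:borel-product}. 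It then suffices to show that $\mathcal P\subseteq\Event[F]_k$ and that $\Event[F]_k$ is a $\lambda$-system.

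For the base case, observe that if $F=\set{R'}\times T_1\times T_2$ is a rectangle, then so is $F\cap S_r(t)$, with Borel sides $T_i\cap B_r^{(i)}(t_i)$. Hence $Q^{-1}(\CEvent'(F\cap S_r(t),n))\in\Measurable$ by \cref{clm:rectangles-multiply}, and intersecting with the measurable set of $k$-coarse instances (\cref{clm:coarse-measurable}) yields $\Event(F\cap S_r(t),n,k)\in\Measurable$ for all admissible $t,r,n$; thus every rectangle lies in $\Event[F]_k$, and in particular so does the whole space $\facts_{\Schema'}(R')=\set{R'}\times\dom_{\Schema}(R_1)\times\dom_{\Schema}(R_2)$.

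The geometric fact that drives the $\lambda$-system closure is that for $r<\tfrac{1}{3k}$ and any $k$-coarse $D$, the rectangle $S_r(t)$ contains \emph{at most one} fact of $Q(D)$: two distinct output facts in $S_r(t)$ would force two distinct $R_1$- or $R_2$-facts of $D$ within $d_i$-distance $2r<\tfrac{1}{k}$, contradicting $k$-coarseness. Using this single-fact property, closure under complements follows from the pointwise additivity $\hits_{Q(D)}(S_r(t))=\hits_{Q(D)}(F\cap S_r(t))+\hits_{Q(D)}((\facts_{\Schema'}(R')\setminus F)\cap S_r(t))$ over the disjoint decomposition of $S_r(t)$, which gives
\[
\Event\big((\facts_{\Schema'}(R')\setminus F)\cap S_r(t),n,k\big)=\bigcup_{m\ge n}\Big(\Event(S_r(t),m,k)\cap\Event(F\cap S_r(t),m-n,k)\Big),
\]
a countable union of measurable sets (using the base case for $S_r(t)$ and $F\in\Event[F]_k$). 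For a disjoint union $F=\bigsqcup_j F_j$ of members of $\Event[F]_k$, the single-fact property collapses the sum $\hits_{Q(D)}(F\cap S_r(t))=\sum_j\hits_{Q(D)}(F_j\cap S_r(t))$ to at most one nonzero term on $k$-coarse instances, so that $\Event(F\cap S_r(t),n,k)=\bigcup_j\Event(F_j\cap S_r(t),n,k)$ for $n>0$ and $\Event(F\cap S_r(t),0,k)=\bigcap_j\Event(F_j\cap S_r(t),0,k)$, both countable combinations of measurable sets.

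Dynkin's theorem applied to $\mathcal P\subseteq\Event[F]_k$ then gives $\Measurable[F]_{\Schema'}(R)=\sigma(\mathcal P)\subseteq\Event[F]_k$, and with the trivial reverse inclusion we conclude $\Event[F]_k=\Measurable[F]_{\Schema'}(R)$. The main obstacle is isolating and correctly exploiting the single-fact property of $k$-coarse instances inside $S_r(t)$: it is exactly the bound $r<\tfrac{1}{3k}$ built into the definition of $\Event[F]_k$ that linearizes the otherwise multiplicative counting of the cross product within each small rectangle and makes the closure of $\Event[F]_k$ under complements and unions go through; the remaining steps are routine bookkeeping with counting events.
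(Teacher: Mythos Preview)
Your proposal is correct and follows essentially the same \enquote{good sets} route as the paper: both prove the base case for measurable rectangles via \cref{clm:rectangles-multiply} and \cref{clm:coarse-measurable}, and both exploit the key geometric observation that for $r<\tfrac{1}{3k}$ a $k$-coarse instance has at most one distinct output fact in $S_r(t)$ to obtain closure under complements and countable disjoint unions. The only difference is packaging: the paper shows directly that $\Event[F]_k$ is a $\sigma$-algebra (adding a finite-intersection step and invoking the observation that complement, finite intersection and countable disjoint union together yield a $\sigma$-algebra), whereas you show it is a $\lambda$-system and apply Dynkin's theorem to the $\pi$-system of rectangles---this saves the intersection check but is otherwise the same argument.
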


\begin{subproof}
	Fix some $k\in\N$. We demonstrate $\Event[F]_k=\Measurable[F]_{\Schema'}
	(R)$ in the following two steps (using the \emph{good sets
	principle} \cite{Ash1972}):
	\begin{enumerate}
		\item\label{itm:rectangles-are-good}
			Let $F$ be a rectangle like in
			\cref{clm:rectangles-multiply} (that is,
			$F=\set{R(t_1,t_2)\colon t_1\in T_1,t_2\in
			T_2}$ for some measurable sets $T_1,T_2$ of
			$R_1$- resp $R_2$-tuples) belong to
			$\Event[F]_k$.
		\item\label{itm:good-sets-sigma-algebra} 
			$\Event[F]_k$ is a $\sigma$-algebra on
			$\facts_{\Schema'}(R)$, more precisely, we show
			that $\Event[F]_k$ contains $\facts_{\Schema'}
			(R)$ and is closed under complement, finite 
			intersection and countable disjoint union.%
			\footnote{If a family of subsets is closed under
			complement, finite intersection and \emph{disjoint}
			countable union, then it is closed under general
			countable unions as well. We proceed this way as it
			feels more natural to argue about (finite)
			intersections and disjoint unions in the present
			setting.}
	\end{enumerate}
	Since the sets from \ref{itm:rectangles-are-good} generate 
	$\Measurable[F]_{\Schema'}(R)$, \ref{itm:rectangles-are-good} and
	\ref{itm:good-sets-sigma-algebra} together will imply that $\Event[F]_k=
	\Measurable[F]_{\Schema'}(R)$.\par

	\begin{enumerate}
		\item\label{itm:Fk-rectangles}
			Let $F=\set{R(t_1,t_2)\colon{}R_1(t_1)\in
			F_1~\text{and}~ R_2(t_2)\in F_2}$ for some
			$F_i\in\Measurable[F]_{\Schema}(R_i)$ for
			$i\in\set{1,2}$.  Fix an arbitrary
			$t\in\dom_{\Schema'}(R)$ with $t=(t_1, t_2)$
			where $t_1\in\dom_{\Schema}(R_1)$ and $t_2\in
			\dom_{\Schema}(R_2)$. Let $r\in\Q_+$ with $r <
			\frac{1}{3k}$. Then 
			\begin{equation*}
				F\cap S_r(t) = 
				(F_1\times F_2)\cap
					(B_r^{(1)}(t_1)\times B_r^{(2)}(t_2)) = 
				(F_1\cap B_r^{(1)}(t_1))\times 
					(F_2\times B_r^{(2)}(t_2))
			\end{equation*} 
			is a measurable rectangle itself. Thus,
			$Q^{-1}(\CEvent'(F\cap S_r(t)))$ is
			measurable by \cref{clm:rectangles-multiply}. Since 
			$\set{D\in\Instances\colon D~\text{is $k$-coarse}}$ is 
			measurable by \cref{clm:coarse-measurable}, also the 
			intersection 
			\begin{equation*}
				Q^{-1}(\CEvent'(F\cap S_r(t),n))
					\cap\set{D\colon D~\text{$k$-coarse}} =
				\Event(F\cap S_r(t),n,k)
			\end{equation*} 
			(cf.~\cref{eq:DFnk-event}) is measurable. Thus, $F\in
			\Event[F]_k$ follows.

		\item We show that $\Event[F]_k$ is a $\sigma$-algebra on
			$\facts_{\Schema'}(R)$. In the following let
			$X$ be a countable dense set in $\dom_{\Schema'}
			(R)$.
			\begin{itemize}
				\item $\facts_{\Schema'}(R)\in
					\Event[F]_k$ follows from item
					\labelcref{itm:Fk-rectangles} above, 
					since $\facts_{\Schema'}(R)=
					\facts_{\Schema}(R_1)\times
					\facts_{\Schema}(R_2)$ is a 
					measurable rectangle.
				\item Let $F\in\Event[F]_k$ and consider its complement 
					$\complement{F}$. Since
					$F\in\Event[F]_k$, for all $x\in X$,
					all $r\in\Q_+$ with $r < \frac{1}{3k}$
					and all $n \in \N$ it holds that
					$\Event(F\cap S_r(x),n,k)\in\Measurable$.
					We fix such $x$, $r$ and $n$
					arbitrarily. Then 
					\begin{align*}
						&D\in\Event(\complement{F}\cap
						S_r(x),n,k)\\
						\iff&
						D~\text{is $k$-coarse and}~
						\hits_{Q(D)}
						(\complement{F}\cap S_r(x))=n\\
						\iff&
						D~\text{is $k$-coarse and}~
						\hits_{Q(D)}(S_r(x))
						-\hits_{Q(D)}
						(F\cap S_r(x)) = n\\
						\iff&
						D\in\Event(S_r(x),n_1,k) \cap 
						\Event(F\cap S_r(x),n_2,k)\\
						&\qquad\text{for some}~n_1,n_2
						\in\N~\text{with}~n_1-n_2=n
						\text.
					\end{align*}
					Thus, $\complement{F}\in\Event[F]_k$.
					Note that $Q(D)$ contains at
					most one fact in $S_r(x)$ since $D$ is
					$k$-coarse.
				\item Now let $F_1,F_2\in\Event[F]_k$.
					Similarly, $\Event(F_1\cap S_r(x),
					n,k)$ and $\Event(F_2\cap S_r(x),n,k)$
					are $\Measurable$-measurable for all $x\in
					X$, rational $0<r<\frac{1}{3k}$ and
					$n\in\N$.  Again, we fix such $x,r$ and
					$n$. Then 
					\begin{align*}
						&D\in\Event((F_1\cap F_2)\cap
						S_r(x),n,k)\\
						\iff&
						D~\text{is $k$-coarse and}~
						\hits_{Q(D)}
						((F_1\cap F_2)\cap S_r(x))=n\\ 
						\iff&
						D~\text{is $k$-coarse and}\\
						&\qquad\min\set{
						\hits_{Q(D)}(F_1\cap
						S_r(x)),\hits_{Q(D)}
						(F_2\cap S_r(x))}=n\\
						\iff&
						D\in\Event(F_1\cap S_r(x),n_1,k)
						\cap\Event(F_2\cap S_r(x),n_2,k)
						\\
						&\qquad\text{for some}~n_1,n_2
						\in\N~\text{with}~\min\set{n_1,
						n_2}=n\text.
					\end{align*}
					Thus\footnote{In the equivalences, note
					that the straight-forward assertion 
					\enquote{$n$ hits in $S_r(x)$ and $0$
					hits in $F\cap S_r(x)$} fails for
					$n=0$ in the case where $Q
					(D)$ has hits in $S_r(x)$ but all of 
					them are in $F\cap S_r(x)$.}, $F_1\cap 
					F_2\in\Event[F]_k$. Again, the second 
					equivalence above holds because $D$ 
					contains at most one fact from $S_r(x)$.
		
				\item Finally, let $F_i\in\Event[F]_k$ for
					$i\geq 0$ such that the $F_i$ and $F_j$
					are disjoint for $i\neq j$.  For every
					$i\geq 0$ for all $x\in X$, all
					$r\in\Q_+$ with $r<\frac{1}{3k}$ and
					all $n\in\N$, $\Event(F_i\cap
					S_r(x),n,k)\in\Measurable$. Now, once
					again, fix $x$, $r$ and $n$ as
					specified. Then 
					\begin{align*}
						&D\in\Event((\textstyle
						\bigcup_i F_i)\cap S_r(x),n,k)\\
						\iff&
						D~\text{is $k$-coarse and}~
						\hits_{Q(D)}
						((\textstyle\bigcup_i F_i)\cap 
						S_r(x))=n\\
						\iff&
						D~\text{is $k$-coarse and}~
						\hits_{Q(D)}
						(\textstyle\bigcup_i (F_i\cap 
						S_r(x)))=n\\
						\iff&
						\begin{aligned}[t]
							\text{there is}~i\in\N~
							\text{with}~
							\hits_{Q(D)}
							(F_i\cap S_r(x))={}&n\\
							\text{s.\,t. for all}~
							j\neq i\colon
							\hits_{Q(D)}
							(F_j\cap S_r(x))={}&0
							\text.
						\end{aligned}
					\end{align*}
					We obtain\footnote{Like before, the
					straight-forward \enquote{$n$ hits in
					$F_1\cap S_r(x)$ and $n$ hits in $F_2
					\cap S_r(x)$} would fail for $n=0$ in
					the case when there are hits in $S_r(x)$
					but all of them are in $(F_1\setminus
					F_2)\cap S_r(x)$ (or the other way
					around).} $\bigcup_i F_i\in\Event[F]_k$.
					Again, we used that $D$ and $S_r(x)$
					have at most one distinct fact in
					common.\qedhere 
			\end{itemize} 
	\end{enumerate}
\end{subproof}

\begin{claim}\label{clm:DFnk-measurable}
	For all $F\in\Measurable[F]'$, $n,k\in\N$, the set $\Event(F,n,k)$ is
	$\Measurable$-measurable.
\end{claim}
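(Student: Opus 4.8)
The plan is to assemble $\Event(F,n,k)$ out of the \enquote{local} events whose measurability is exactly what the preceding identity $\Event[F]_k=\Measurable[F]_{\Schema'}(R)$ delivers. That identity already tells us that $\Event(G\cap S_r(t),n,k)\in\Measurable$ for \emph{every} measurable set $G$, every centre $t$, every rational $r<\frac1{3k}$ and every $n$; so the remaining task is purely combinatorial: cover $F$ by countably many small rectangles, disjointify the cover so that each piece lies inside a single rectangle, and recombine using additivity of the counting measure. No new analytic input is needed beyond what is in \cref{clm:rectangles-multiply}, \cref{clm:coarse-measurable} and the good-sets argument; the heavy lifting is already done there.

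First I would fix the ingredients. Choose any rational $r$ with $0<r<\frac1{3k}$, and recall the countable dense sets $X_1\subseteq\dom_{\Schema}(R_1)$ and $X_2\subseteq\dom_{\Schema}(R_2)$. Enumerate the countable set $X_1\times X_2=\set{x^{(1)},x^{(2)},\dots}$. Because $X_1$ and $X_2$ are dense, the rectangles $S_r(x^{(j)})$ cover $\facts_{\Schema'}(R)$: given $t=(t_1,t_2)$ there are $x_i\in X_i$ with $d_i(t_i,x_i)<r$, so $t\in S_r\big((x_1,x_2)\big)$. I would then disjointify this cover relative to $F$ by setting
\[
	G_j \coloneqq \big(F\cap S_r(x^{(j)})\big)\setminus\textstyle\bigcup_{i<j}S_r(x^{(i)}).
\]
The $G_j$ are measurable, pairwise disjoint, and satisfy $\bigcup_{j}G_j=F$ (standard disjointification, using that the $S_r(x^{(j)})$ cover the whole space), and crucially each $G_j\subseteq S_r(x^{(j)})$, so $G_j=G_j\cap S_r(x^{(j)})$. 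Applying the identity $\Event[F]_k=\Measurable[F]_{\Schema'}(R)$ to the measurable set $G_j$ with centre $x^{(j)}\in\dom_{\Schema'}(R)$ and radius $r$, it follows that $\Event(G_j,n_j,k)=\Event(G_j\cap S_r(x^{(j)}),n_j,k)$ is $\Measurable$-measurable for every $n_j\in\N$.

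Finally I would recombine the pieces. Since the $G_j$ are disjoint with union $F$, additivity of the counting measure gives $\hits_{Q(D)}(F)=\sum_{j}\hits_{Q(D)}(G_j)$ for \emph{every} instance $D$, a sum with only finitely many nonzero terms because $Q(D)$ is a finite bag. Hence a $k$-coarse $D$ satisfies $\hits_{Q(D)}(F)=n$ precisely when there is a sequence $(n_j)_{j\ge 1}$ of nonnegative integers with $\sum_{j}n_j=n$ (automatically of finite support) and $\hits_{Q(D)}(G_j)=n_j$ for all $j$. The family of such sequences is countable, and the $k$-coarseness condition is common to all the events $\Event(G_j,n_j,k)$, so it is preserved under intersection. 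This yields
\[
	\Event(F,n,k)
	= \bigcup_{\substack{(n_j)_{j\ge 1}\\ \sum_{j}n_j=n}}\ \bigcap_{j\ge 1}\Event(G_j,n_j,k),
\]
a countable union of countable intersections of $\Measurable$-measurable sets, so $\Event(F,n,k)\in\Measurable$, as claimed.

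The step requiring the most care is the disjointification: the point is to arrange each $G_j$ to sit inside one rectangle $S_r(x^{(j)})$, since that is the only way to invoke $\Event[F]_k=\Measurable[F]_{\Schema'}(R)$ for the individual pieces. One must also be mindful that the decomposing sum ranges over infinitely many indices, which is harmless only because $n$ is finite and all $n_j\ge 0$ force finite support; and that, as in the good-sets proof, the $k$-coarseness restriction (baked into each $\Event(\placeholder,\placeholder,k)$) is shared across the pieces, so intersecting leaves it intact.
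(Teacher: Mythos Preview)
Your proof is correct and takes a genuinely different route from the paper's. The paper passes through the upper-level events $\Event^\geq(F,n,k)=\bigcup_{n'\ge n}\Event(F,n',k)$ and characterises membership by an approximation scheme: there exist $m,k_1,\dots,k_m$ with $\sum k_i\ge n$ such that \emph{for all} rational $r<\tfrac1{3k}$ one can find centres $x_1,\dots,x_m\in X_1\times X_2$ with $D\in\bigcap_i\Event(F\cap S_r(x_i),k_i,k)$; equality then follows by subtracting $\Event^\geq(F,n+1,k)$. In contrast, you fix a single radius $r$, disjointify $F$ along a fixed countable cover $\{S_r(x^{(j)})\}$ into pieces $G_j=G_j\cap S_r(x^{(j)})$, invoke $\Event[F]_k=\Measurable[F]_{\Schema'}(R)$ once for each piece, and combine via additivity of $\hits_{Q(D)}$ on a disjoint union. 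Your approach is cleaner: it avoids the universal quantifier over $r$, the detour through $\Event^\geq$, and any appeal to $k$-coarseness in the recombination step (you only need it because it is baked into the definition of $\Event(\placeholder,\placeholder,k)$). The paper's approach, on the other hand, never needs the pieces to be disjoint and works directly with $F\cap S_r(x_i)$ rather than an auxiliary partition; this is arguably more in the spirit of the Cauchy-sequence approximations used elsewhere in the paper (e.g.\ \cref{lem:qcrit}), but it is no more powerful here.
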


\begin{subproof}
	Let $\Event^\geq(F,n,k)\coloneqq\bigcup_{n'\geq n}\Event(F,n',k)$. Then
	\begin{equation*}
		\Event(F,n,k)=\Event^\geq(F,n,k)\cap\complement{(\Event^\geq(F,n+1,k))}
		\text.
	\end{equation*}
	Thus, it suffices to show the measurability of $\Event^\geq(F,n,k)$ for
	all $n\in\N$ to show \cref{clm:DFnk-measurable}. We claim 
	$D\in\Event^\geq(F,n,k)$ if and only if
	\begin{align*}
		D~\text{is $k$-coarse and}~
			&\text{there exist}~m,k_1,\dots,k_m~\text{with}~
				\textstyle\sum_{i=1}^mk_i\geq n~ \text{such 
				that}\\
			&\quad\text{for all}~r\in\Q_+~\text{with}~r<\textstyle
				\frac{1}{3k}\\
			&\quad\quad\text{there are}~x_1,\dots,x_m\in X_1\times 
				X_2~\text{such that}\\
			&\quad\quad\quad\quad D\in\textstyle\bigcap_{i=1}^m
				\Event(F\cap S_r(x_i),k_i,k) \text.
	\end{align*}\par\bigskip

	We show both directions.
	\begin{description}
		\item[$(\Rightarrow)$] Let $D\in\Event^\geq(F,n,k)$. Then $D$
			is $k$-coarse.  Suppose $D\in\Event(F,n',k)$ for some
			fixed $n'\geq n$ and let $f_1,\dots, f_m$ be the
			distinct facts from $F$ that appear in $D$ with $k_i$
			being their multiplicity in $D$ for $1\leq i\leq m$,
			also let $f_i=R_i (t_i^{(1)},t_i^{(2)})$ such
			that $t_i^{(1)}\in\dom_{\Schema} (R_1)$ and
			$t_i^{(2)}\in\dom_{\Schema}(R_2)$. Now since
			both $X_1$ and $X_2$ are dense, for every $1\leq i\leq
			m$ and all $r>0$ there are $x_i^{(1)}\in X_1$ and
			$x_i^{(2)}\in X_2$ such that $R_1
			(t_i^{(1)})\in B_r(x_i^{(1)})$ and
			$R_2(t_i^{(2)}) \in B_r(x_i^{(2)})$. By the
			choice of $r$ and since $D$ is $k$-coarse, for any two
			$i,j$ with $t_i^{(1)}\neq t_j^{(1)}$, it holds that
			$B_r(x_i^{(1)})$ and $B_r(x_j^{(1)})$ are disjoint (and
			similarly for the second part).  This means that
			$S_r(x_i)$ and $S_r(x_j)$ are disjoint for $i\neq j$,
			because in this case, $t_i^{(1)}\neq t_j^{(1)}$ or
			$t_i^{(2)}\neq t_j^{(2)}$.  Thus, $Q(D)$
			contains at most $1$ fact in each of the $S_r(x_i)$
			($1\leq i\leq m$). Thus $\hits_{Q(D)}(F\cap
			S_r(x_i))= \hits_{Q(D)} (F\cap\set{f_i}) =
			k_i$ for all $1\leq i\leq m$.
		\item[$(\Leftarrow)$] Now towards the other direction, one
			notices similarly to above that the $S_r(x_i)$ are
			pairwise disjoint (which follows from the
			$k$-coarseness of $D$ and the upper bound on $r$). This
			means that $D$ has at least $\sum_{i=1}^m k_i=n$ facts
			(including copies) in $\bigcup_i (F\cap
			S_r(x_i))\subseteq F$ and consequently
			$D\in\Event^\geq(F,n,k)$.\qedhere
	\end{description}
\end{subproof}

\par\bigskip
Now we are finally able to conclude the proof of
\cref{lem:crossprod-measurable-app}.\par\bigskip

\begin{proof}[\proofname~(\cref{lem:crossprod-measurable-app})]
	For every $F\in\Measurable[F]_{\Schema'}(R)$ and $n\in\N$,
	$Q^{-1}(\CEvent'(F,n))=\bigcup_{k\in \N}\Event(F,n,k)$ is
	measurable using \cref{clm:DFnk-measurable}. Since the events 
	$\CEvent'(F,n)$ generate $\Measurable'$, $Q$ is $(\Measurable,
	\Measurable')$-measurable.
\end{proof}

\section{Specifications of Considered Queries}\label{app:operations}

\begin{table}[H]
	\centering
	\caption[Specifications for relation base queries]
		{Specifications for relation base queries $Q=R$}
	\begin{tabular}{@{}lp{.7\textwidth}@{}}
		\toprule
		\textbf{\sffamily Prerequisites} & $R\in\Rel$\\
		\textbf{\sffamily Target Schema} & $R'=R$, keeping its
			type (in particular $\Att'$ is the set of attributes
			that appear in $\type_{\Schema}(R)$, keeping their
			domains and $\sigma$-algebras)\\
		\textbf{\sffamily Semantics} & $\hits_{Q(D)}(f) =
			\hits_D(f)$ for all facts $f\in\facts_{\Schema'}(\Rel')=
			\facts_{\Schema}(R)$\\
		\bottomrule
	\end{tabular}
\end{table}

\begin{table}[H]
	\centering
	\caption[Specifications for singleton base queries]
		{Specifications for singleton base queries $Q=\bag{R(a)}$}
	\begin{tabular}{@{}lp{.7\textwidth}@{}}
		\toprule
		\textbf{\sffamily Prerequisites} & $a$ belongs to some standard
			Borel space $(X,\Borel_X)$\\
		\textbf{\sffamily Target Schema} & $R'=R$ with
			$\type_{\Schema'}(R)=A'$ and $\Att'=\set{A'}$
			where $\dom_{\Schema'}(A')=X$\\
		\textbf{\sffamily Semantics} & $\hits_{Q(D)}(f) = 1$
			if $f=R(a)$ and $0$ otherwise\\
		\bottomrule
	\end{tabular}
\end{table}

\begin{table}[H]
	\centering
	\caption[Specifications for rename queries]
		{Specifications for rename queries $Q=\rename[{A\to{}B}](R)$}
	\begin{tabular}{@{}lp{.7\textwidth}@{}}
		\toprule
		\textbf{\sffamily Prerequisites} & $R\in\Rel$ such that
			$A$ appears in the type of $R$ but $B$
			does not\\
		\textbf{\sffamily Target Schema} & $R'=R$ and
			$\type_{\Schema'}(R)$ is obtained from
			$\type_{\Schema}(R)$ by replacing $A$ with 
			$B$; the set $\Att'$ consists of the attributes
			appearing in $\type_{\Schema'}(R)$ where $B$
			inherits its domain and $\sigma$-algebra from $A$
			\\
		\textbf{\sffamily Semantics} & $\hits_{Q(D)}(R(t))
			=\hits_D(R(t))$ for all $t\in\dom_{\Schema'}(R)$\\
	\bottomrule
	\end{tabular}
\end{table}

\begin{table}[H]
	\centering
	\caption[Specifications for additive union queries]
		{Specifications for additive union queries $Q=R_1\addunion{}R_2$}
	\begin{tabular}{@{}lp{.7\textwidth}@{}}
		\toprule
		\textbf{\sffamily Prerequisites} & $R_1,R_2\in\Rel$, both
			being of the same type\\
		\textbf{\sffamily Target Schema} & $\type_{\Schema'}(R')=
			\type_{\Schema}(R_1)=\type_{\Schema}(R_2)$; the
			set $\Att'$ consists of the attributes appearing in that
			type and they inherit their domains and
			$\sigma$-algebras\\
		\textbf{\sffamily Semantics} & $\hits_{Q(D)}
			(R'(t))=\hits_D(R_1(t))+\hits_D(R_2(t))$ for
			all $t\in\dom_{\Schema'}(R')$\\
		\bottomrule
	\end{tabular}
\end{table}

\begin{table}[H]
	\centering
	\caption[Specifications for min-intersection queries]
		{Specifications for min-intersection queries $Q=R_1\intersection{}R_2$}
	\begin{tabular}{@{}lp{.7\textwidth}@{}}
		\toprule
		\textbf{\sffamily Prerequisites} & $R_1,R_2\in\Rel$, both
			being of the same type\\
		\textbf{\sffamily Target Schema} & $\type_{\Schema'}(R')=
			\type_{\Schema}(R_1)=\type_{\Schema}(R_2)$; the
			set $\Att'$ consists of the attributes appearing in that
			type and they inherit their domains and
			$\sigma$-algebras\\
		\textbf{\sffamily Semantics} & $\hits_{Q(D)}
			(R'(t))=\min\set{\hits_D(R_1(t)),
			\hits_D(R_2(t))}$ for all $t\in\dom_{\Schema'}
			(R')$\\
		\bottomrule
	\end{tabular}
\end{table}

\begin{table}[H]
	\centering
	\caption[Specifications for difference queries]
		{Specifications for difference queries $Q=R_1\difference{}R_2$}
	\begin{tabular}{@{}lp{.7\textwidth}@{}}
		\toprule
		\textbf{\sffamily Prerequisites} & $R_1,R_2\in\Rel$, both
			being of the same type\\
		\textbf{\sffamily Target Schema} & $\type_{\Schema'}(R')=
			\type_{\Schema}(R_1)=\type_{\Schema}(R_2)$; the
			set $\Att'$ consists of the attributes appearing in that
			type and they inherit their domains and
			$\sigma$-algebras\\
		\textbf{\sffamily Semantics} & $\hits_{Q(D)}
			(R'(t))=\max\set{0,\hits_D(R_1(t))-\hits_D
			(R_2(t))}$ for all $t\in\dom_{\Schema'}(R')$\\
		\bottomrule
	\end{tabular}
\end{table}

\begin{table}[H]
	\centering
	\caption[Specifications for max-union queries]
		{Specifications for max-union queries $Q=R_1\maxunion{}R_2$}
	\begin{tabular}{@{}lp{.7\textwidth}@{}}
		\toprule
		\textbf{\sffamily Prerequisites} & $R_1,R_2\in\Rel$, both
			being of the same type\\
		\textbf{\sffamily Target Schema} & $\type_{\Schema'}(R')=
			\type_{\Schema}(R_1)=\type_{\Schema}(R_2)$; the
			set $\Att'$ consists of the attributes appearing in that
			type and they inherit their domains and
			$\sigma$-algebras\\
		\textbf{\sffamily Semantics} & $\hits_{Q(D)}
			(R'(t))=\max\set{\hits_D(R_1(t)),
			\hits_D(R_2(t))}$ for all $t\in\dom_{\Schema'}
			(R')$\\
		\bottomrule
	\end{tabular}
\end{table}

\begin{table}[H]
	\centering
	\caption[Specifications for deduplication queries]
		{Specifications for deduplication queries $Q=\dedupe(R)$}
	\begin{tabular}{@{}lp{.7\textwidth}@{}}
		\toprule
		\textbf{\sffamily Prerequisites} & $R\in\Rel$\\
		\textbf{\sffamily Target Schema} & $\type_{\Schema'}(R')=
			\type_{\Schema}(R)$; the set $\Att'$ consists of the
			attributes appearing in that type and they keep their
			domains and $\sigma$-algebras\\
		\textbf{\sffamily Semantics} & $\hits_{Q(D)}
			(f)=1$ if $\hits_D(f)>0$ and $0$ otherwise\\
		\bottomrule
	\end{tabular}
\end{table}

\begin{table}[H]
	\centering
	\caption[Specifications for selection queries (attribute value 
		equality)]
		{Specifications for selection queries $Q=\select[{A=
		B}](R)$}
	\begin{tabular}{@{}lp{.7\textwidth}@{}}
		\toprule
		\textbf{\sffamily Prerequisites} & $R\in\Rel$ and $A,
			B$ are distinct, comparable attributes from
			$\type_{\Schema}(R)$	\\
		\textbf{\sffamily Target Schema} & $R'=R$, keeping its
			type; and $\Att'$ is the restriction of $\Att$ to
			$\type_{\Schema}(R)$\\
		\textbf{\sffamily Semantics} & $\hits_{Q(D)}(f) =
			\hits_D(f)$ if $f_A=f_B$, and $0$ otherwise\\
	\bottomrule
	\end{tabular}
\end{table}

\begin{table}[H]
	\centering
	\caption[Specifications for selection queries (membership in Borel set)]
	{Specifications for selection queries $Q=\select[(A_1,\dots,A_k)\in
	\bm{B}](R)$}
	\begin{tabular}{@{}lp{.7\textwidth}@{}}
		\toprule
		\textbf{\sffamily Prerequisites} & $R\in\Rel$ and $A_1,\dots,
			A_k$ are mutually distinct attributes from 
			$\type_{\Schema}(R)$\\
		\textbf{\sffamily Target Schema} & $R'=R$, keeping its
			type; and $\Att'$ is the restriction of $\Att$ to
			$\type_{\Schema}(R)$\\
		\textbf{\sffamily Semantics} & $\hits_{Q(D)}(f) =
		\hits_D(f)$ if $f_{A_1,\dots,A_k}\in\bm{B}$, and $0$
			otherwise\\
	\bottomrule
	\end{tabular}
\end{table}

\begin{table}[H]
	\centering
	\caption[Specifications for projection queries]
	{Specifications for projection queries $Q=\project[A_1,\dots,A_k](R)$}
	\begin{tabular}{@{}lp{.7\textwidth}@{}}
		\toprule
		\textbf{\sffamily Prerequisites} & $R\in\Rel$ and $A_1,\dots,
			A_k$ are mutually distinct attributes from 
			$\type_{\Schema}(R)$\\
		\textbf{\sffamily Target Schema} & $R'=R$ with $\type_{\Schema'}
			(R) = (A_1,\dots, A_k)$ and $\Att'=\set{A_1,\dots,A_k}$
			with domains and $\sigma$-algebras inherited from $\Att$
			\\
		\textbf{\sffamily Semantics} & $\hits_{Q(D)}(f') =
		\hits_D(\set{f\in\facts_{\Schema}(R)\colon f'_{A_1,\dots,
		A_k}=f_{A_1,\dots,A_k}})$\\
		\bottomrule
	\end{tabular}
\end{table}

\begin{table}[H]
	\centering
	\caption[Specifications for cross product queries]
	{Specifications for cross product queries $Q=R_1\product R_2$}
	\begin{tabular}{@{}lp{.7\textwidth}@{}}
		\toprule
		\textbf{\sffamily Prerequisites} & $R_1,R_2\in\Rel$\\
		\textbf{\sffamily Target Schema} & $R'=R$ with $\type_{\Schema'}
			(R)=\type_{\Schema}(R_1)\times\type_{\Schema}(R_2)$, and
			$\Att'$ being the attributes from $\type_{\Schema}(R_1)
			\cup\type_{\Schema}(R_2)$, inheriting domains and
			$\sigma$-algebras\\
		\textbf{\sffamily Semantics} & $\hits_{Q(D)}(R(t_1,
			t_2)) = \hits_D(R_1(t_1))\cdot\hits_D(R_2(t_2))$ where
			$t_1\in\dom_{\Schema}(R_1)$ and $t_2\in\dom_{\Schema}
			(R_2)$\\		
		\bottomrule
	\end{tabular}
\end{table}

\begin{table}[H]
	\centering
	\caption[Specifications for natural join queries]
	{Specifications for natural join queries $Q=R_1\natjoin R_2$}
	\begin{tabular}{@{}lp{.7\textwidth}@{}}
		\toprule
		\textbf{\sffamily Prerequisites} & $R_1,R_2\in\Rel$\\
		\textbf{\sffamily Target Schema} & $R'=R$ with $\type_{\Schema'}
			(R)=\type_{\Schema}(R_1)\cup\type_{\Schema}(R_2)$, and
			$\Att'$ being the attributes from $\type_{\Schema}(R_1)
			\cup\type_{\Schema}(R_2)$, inheriting domains and
			$\sigma$-algebras\\
		\textbf{\sffamily Semantics} & $\hits_{Q(D)}(R(t)) =
		\hits_D(R_1(t_{\type_{\Schema}(R_1)}))\cdot
		\hits_D(R_2(t_{\type_{\Schema}(R_2)}))$\\		
		\bottomrule
	\end{tabular}
\end{table}

\begin{table}[H]
	\centering
	\caption[Specifications for aggregate queries]
	{Specifications for aggregate queries $Q=\varpi_{A_1,\dots,A_k,\Phi(A)}(R)$}
	\begin{tabular}{@{}lp{.7\textwidth}@{}}
		\toprule
		\textbf{\sffamily Prerequisites} & $R\in\Rel$, $A,A_1,\dots,A_k
			\in\type_{\Schema}(R)$ with $A_i\neq A_j$ for $i\neq j$,
			and $\Phi$ being an aggregator from $U$ to $V$ with
			$U=\dom_{\Schema}(A)$\\
		\textbf{\sffamily Target Schema} & $\type_{\Schema'}(R')=(A_1,
			\dots,A_k,A')$ where $A_1,\dots, A_k$ keep their domains
			and $\sigma$-algebra and $\dom_{\Schema'}(A')=V$ with
			its inherent $\sigma$-algebra\\
		\textbf{\sffamily Semantics} & $\hits_{Q(D)}
			(R'(a_1,\dots,a_k,c)) = 1$ if $\Phi(D')=c$ and $0$
			otherwise; hereby, $D' = \big(\project[A]\circ
	\select[A_1=a_1,\dots,A_k=a_k](R)\big)(D)$\\
		\bottomrule
	\end{tabular}
\end{table}

\end{document}